\documentclass{article}
\usepackage[utf8]{inputenc}
\usepackage{subcaption}
\usepackage{longtable}
\usepackage{amsmath}
\usepackage{booktabs}
\usepackage{siunitx}
\usepackage{caption}

\DeclareMathOperator*{\argmin}{arg\,min}
\usepackage{amssymb}
\usepackage{pgf, tikz}
\usetikzlibrary{arrows, automata}
\usepackage[natbibapa]{apacite}
\usepackage{hyperref}
\usetikzlibrary{tikzmark}
\usepackage{colortbl}
\usepackage{floatrow}
\newfloatcommand{capbtabbox}{table}[][\FBwidth]
\usepackage[normalem]{ulem}
\usepackage{floatrow}
\usepackage[colorinlistoftodos]{todonotes}
\usepackage{graphicx}
\usepackage{tabularx,colortbl}
\usepackage{makecell}
\usepackage{multicol}
\definecolor{LightBlue}{RGB}{140,186,252}
\newcolumntype{a}{>{\columncolor{LightRed}}c}
\usepackage{amsthm,amsmath,amssymb,mathtools}
\usepackage{algorithm}
\usepackage{enumerate}
\usepackage{enumitem}
\usepackage{bbm}
\usepackage{titlesec}
\usepackage{bigints}
\usepackage{setspace} 
\usepackage{verbatim} 
\usepackage{booktabs} 
\usepackage[width=0.9\textwidth]{caption}
\usepackage{fullpage}
\usepackage{hyperref}
\usepackage[toc,page]{appendix}
\usepackage[natbibapa]{apacite}

\def\1{\mathbbm{1}}

\usepackage{tikz}
\usepackage{algorithm}
\usepackage{algpseudocode} 
\usetikzlibrary{shapes,decorations,arrows,calc,arrows.meta,fit,positioning}
\tikzset{
    -Latex,auto,node distance =1 cm and 1 cm,semithick,
    state/.style ={ellipse, draw, minimum width = 0.7 cm},
    point/.style = {circle, draw, inner sep=0.04cm,fill,node contents={}},
    bidirected/.style={Latex-Latex,dashed},
    el/.style = {inner sep=2pt, align=left, sloped}
}
\usetikzlibrary{shapes,arrows}
\usetikzlibrary{calc}

\newtheorem{theorem}{Theorem}
\newtheorem{proposition}{Proposition}
\newtheorem{corollary}{Corollary}

\newtheorem{lemma}{Lemma}
\newtheorem{remark}{Remark}

\usepackage{xcolor}

\title{Summary-powered prediction under distribution shift}

\author{Ivy Zhang and Dominik Rothenh\"ausler}
\usepackage{fullpage}

\begin{document}

\maketitle

\begin{abstract}
   Prediction models can perform poorly when the deployment population differs from the training population. Data from the target population would help, but individual-level target data may be inaccessible because of access restrictions or reporting conventions. We consider a multi-resolution setting in which individual-level data are available from a source population, while the target population is observed only through subgroup summaries. We propose SAGE, a one-step estimator that updates a source-trained predictor using a gradient estimated from these summaries. Motivated by diagnostics consistent with the random distribution shift model, we choose SAGE's step-size to account for both sampling and distributional uncertainty. Under this model, SAGE reduces mean asymptotic target excess risk relative to the source-trained predictor. We also show that, under the model, entropy-balancing weighted empirical risk minimization (EB), which reweights source observations to match the target summaries, is asymptotically equivalent to a full-step SAGE update. SAGE with the optimal step-size has asymptotic mean squared error no larger than that of EB. Across real-world datasets, SAGE generally improves on the source-trained predictor and one-step updates that ignore distribution shift, including when the random shift model only partially captures the observed shifts. Compared to EB, SAGE improves prediction more consistently across the sample size and shift settings studied.

\end{abstract}

\section{Introduction}\label{sec: intro}


Prediction models may generalize poorly when deployed in populations that differ from those represented in the training data. Under such distribution shift, the source risk minimizer can differ from the target risk minimizer $\theta_t$, limiting predictive performance in the target population \citep{liu2023need, WILDS, tableshift}. Incorporating information from the target population can help correct this mismatch and improve predictive accuracy. Building prediction models typically requires individual-level observations, but in many applications the target population $P_t$ is observed only in aggregate form because of access restrictions, resource constraints, or reporting conventions. We study this multi-resolution setting, in which individual-level data are available from the source population $P_s$, but only subgroup summaries are observed from $P_t$, where predictions are needed. This setting raises three questions. First, how should summary information be used to improve a source-trained predictor without individual target observations? Second, how much more can be gained by accounting for distribution shift, compared with an update that assumes i.i.d.\ data? Finally, how should we quantify uncertainty about the target risk minimizer $\theta_t$?

To illustrate the problem, suppose an analyst at a retail chain wants to predict daily item demand across its locations. They have detailed records from a company-operated location, `Store 01', including item name, number of units sold, and calendar features such as date and seasonality. However, detailed sales records from independently operated franchises are unavailable for model training. The analyst has access only to aggregate reports, such as the proportion of sales made on weekends or average demand across years for each item. A predictor trained only on Store 01 is easy to obtain, but it may not transfer well to the franchise locations because locations differ in customer demand, seasonal patterns, and item-specific sales relationships. The analyst therefore wants to use these summaries to adapt demand forecasts to the franchise locations.

Before deciding how best to use the target summary information to adapt the source-trained predictor, we examine the structure of the observed source-target shifts. Consider the Store Item Demand data from Kaggle, which records daily sales for 50 items across 10 stores from January 2013 through December 2017 \citep{store_item_demand_forecasting}, with one store as the source and another as the target. For each source-target store pair, with $n_s$ source and $n_t$ target records on $L$ covariates and an outcome, we compute for each retained whitened coordinate $q=1,\ldots,L+1$ the standardized mean difference
\begin{equation}\label{eqn: qq_stat}
\left(\frac{1}{n_s} + \frac{1}{n_t}\right)^{-1/2}
\left(\frac{1}{n_s}\sum_{i=1}^{n_s} U_{i,q}^{(s)}
-
\frac{1}{n_t}\sum_{i=1}^{n_t} U_{i,q}^{(t)}\right),
\end{equation}
where, for $k\in\{s,t\}$, denotes the whitened version of
the data matrix with rows $((X_i^{(k)})^\intercal,y_i^{(k)})$, using a common whitening transformation estimated from source records so that the empirical covariance matrix of the transformed source variables is the identity. Each whitened coordinate is generally a linear combination of the original covariates and outcome. Coordinates redundant with an intercept are removed before whitening.

If the source and target observations were sampled from the same distribution, the vector obtained by stacking the statistics in Equation~\eqref{eqn: qq_stat} over $q$ would be asymptotically standard Gaussian, so its entries would fall near the reference line of a standard Gaussian QQ plot. Instead, Figure~\ref{fig: qqplot_store_intro} shows that the entries deviate from this i.i.d.\ reference line while remaining approximately linear with an inflated slope. This suggests that the source-target differences are still approximately Gaussian, but with larger variance than sampling noise alone would produce.

\begin{figure}[ht]
    \centering
    \includegraphics[width=\linewidth]{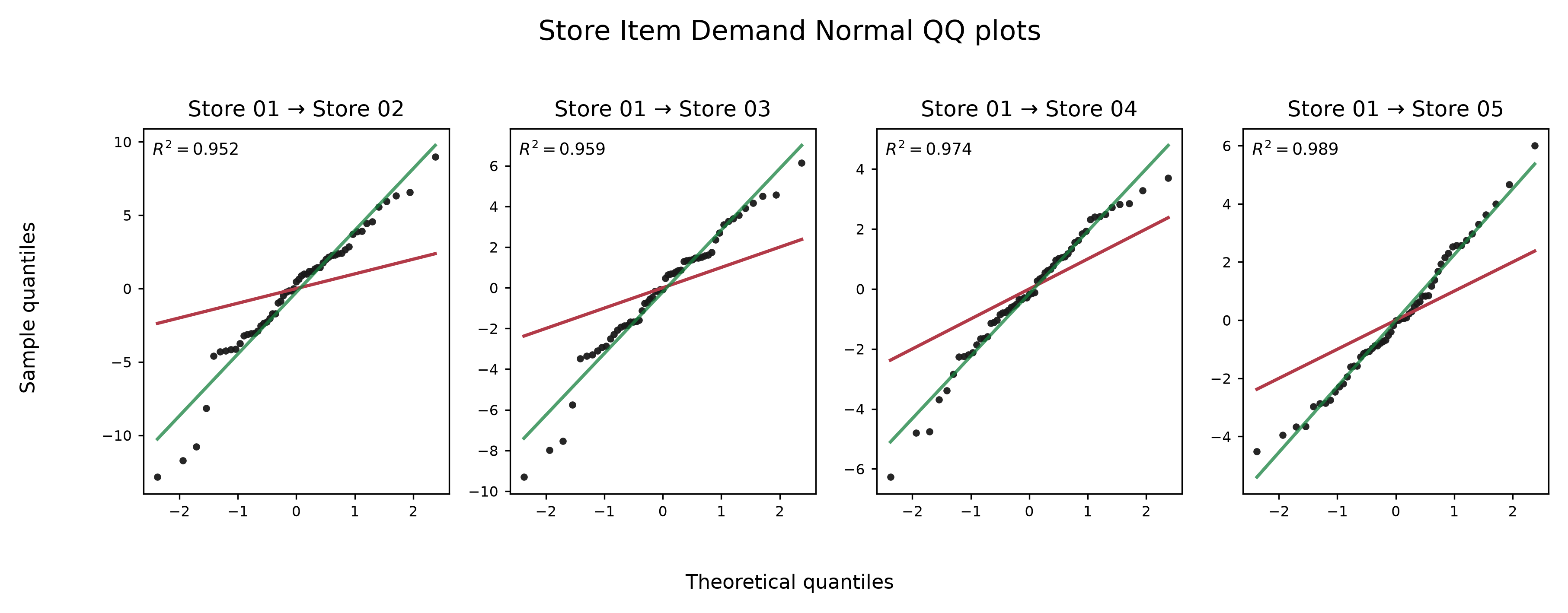}
    \caption{Normal QQ plots for the Store Item Demand dataset for Stores 02-05. Each point corresponds to the standardized mean difference statistic in Eqn.~\eqref{eqn: qq_stat} for one whitened coordinate $q$. The red line is the standard Gaussian reference line, corresponding to i.i.d.\ sampling from the same distribution. The green line represents a Gaussian pattern with inflated variance. The store pair diagnostics do not follow the i.i.d.\ reference line, but are approximately linear with inflated variance, consistent with the random shift structure discussed in detail in Section~\ref{sec: setting}. QQ plots for Stores 06-10 can be found in Appendix \ref{app: qq}.} 
    \label{fig: qqplot_store_intro}
\end{figure}

As we will see in Section~\ref{sec: setting}, this diagnostic pattern is consistent with the random shift model introduced by \citet{random_shift1}. Similar patterns have been observed in GTEx data \citep{random_shift2} and in social science replication studies \citep{random_shift3}. Many distribution shift frameworks assume a specific structure linking the source and target distributions: covariate shift keeps $P(Y\mid X)$ invariant \citep{Shimodaira}, and domain adaptation learns representations that are invariant across domains \citep{Ganin2015}. The random shift model takes a different approach by treating the density ratio between source and target as random, which allows both $P(X)$ and $P(Y\mid X)$ to change. Under the random shift model, there are two sources of variation: sampling variation and variation due to the random distribution shift. When updating a source-trained predictor with target subgroup summaries, the size of this additional variation indicates how strongly to update. We therefore use these summaries both to construct the update and to adapt its step-size to the estimated strength of the shift.

We propose SAGE (Summary Adapted Gradient Estimator), a one-step estimator which updates an initial source-trained predictor with a gradient estimated using target subgroup summaries. SAGE also incorporates a data-adaptive step-size estimate that accounts for sample size and distribution shift strength present in the data. The choice of step-size is non-trivial as it represents a trade-off between correcting for distribution shift and introducing excessive variance from potentially noisy summaries. Furthermore, without individual-level target data, the step-size cannot be tuned by target validation, and under distribution shift, source validation is not informative.


We preview SAGE's predictive performance in Figure~\ref{fig: store1}, highlighting two sources of improvement over source-only training. First, incorporating target summary data through a conservative one-step update (``One-step shift-ignoring''), with a step-size chosen under an i.i.d.\ assumption, already improves prediction. Second, choosing the step-size adaptively to account for distribution shift (SAGE) yields a greater MSE reduction. Both methods require no individual-level target data for training. For comparison, we also include an oracle that chooses the step-size using the target test set. SAGE, which does not use the test set, performs close to this oracle. We return to this example in more detail in Section~\ref{sec: store}, including a comparison with entropy-balancing weighted empirical risk minimization (EB).



\begin{figure}[ht]
    \centering
    \includegraphics[width=0.75\linewidth]{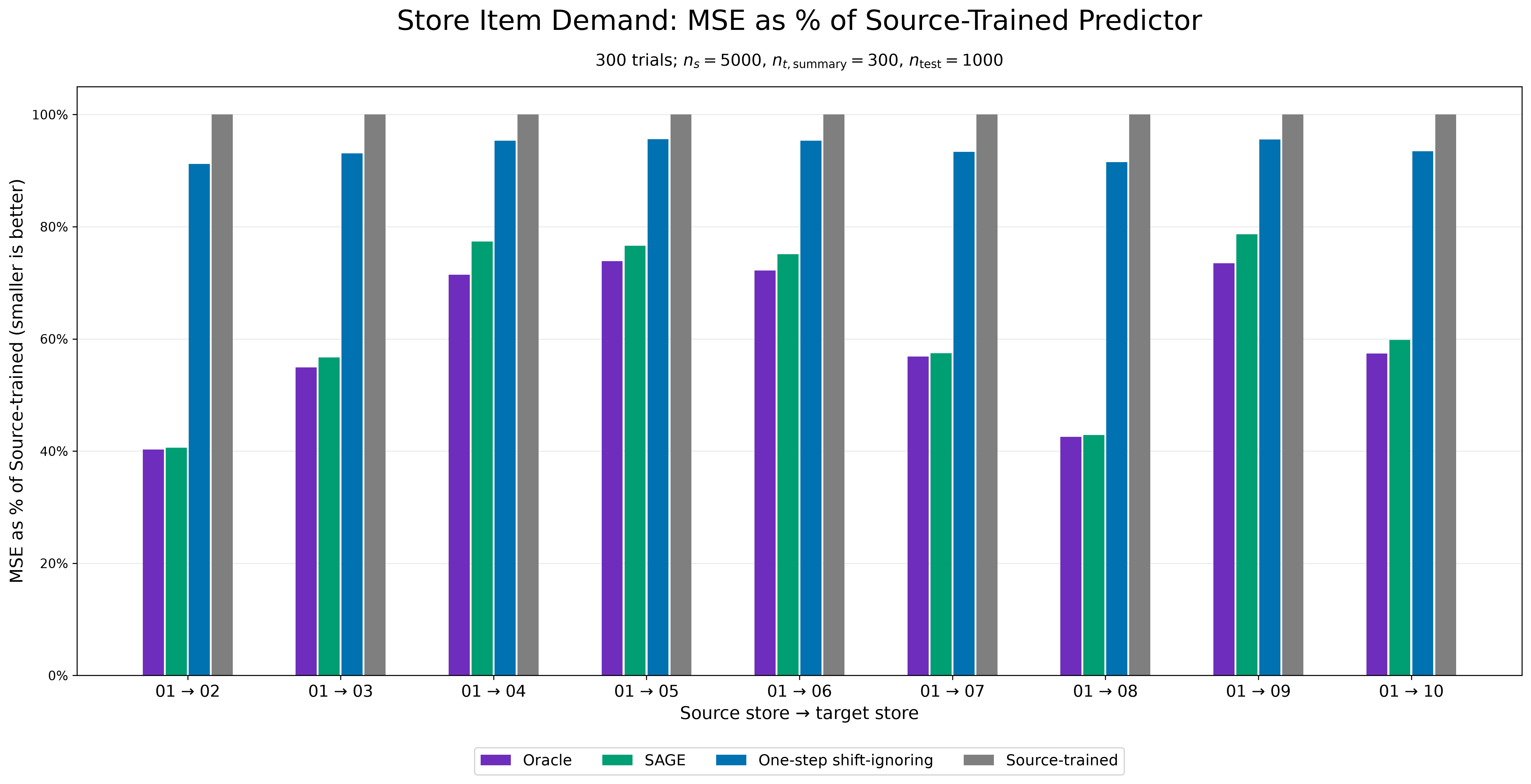}
    \caption{Store item demand prediction across stores. Average test MSE as a percentage of source-only predictor test MSE averaged over 300 trials. SAGE (green) outperforms a baseline that ignores distribution shift (blue) and is close to an oracle method (purple). Full experimental details are given in Section~\ref{sec: store}.}
    \label{fig: store1}
\end{figure}

\paragraph{Contribution.}
We summarize our contributions to prediction and uncertainty quantification using individual-level source data and target subgroup summaries, without requiring individual target observations.

\begin{enumerate}
\item \textbf{Improving prediction using target subgroup summaries.} We propose SAGE, a one-step estimator that incorporates target subgroup summaries to improve a source-trained predictor. Among estimators that adjust the source-trained predictor by a linear function of the source-target summary differences, SAGE with the optimal step-size has the smallest asymptotic mean squared-error.

\item \textbf{Accounting for distribution shift.} Under the random shift model, we derive the asymptotically optimal step-size $\gamma^*$ and show that SAGE reduces mean asymptotic target excess risk relative to the source-trained predictor for any fixed $\gamma \in (0,2\gamma^*)$. We also propose an estimator $\hat\gamma$ that adapts the update to the estimated shift strength and summary sample size. Its accuracy depends on the summary dimension $r$.
\item \textbf{Connection to entropy balancing.} We show that, under the random shift model, EB is asymptotically equivalent to a full-step SAGE update ($\gamma=1$). SAGE with the optimal step-size therefore has asymptotic mean squared-error no larger than that of EB.
\item \textbf{Uncertainty quantification.} We construct plug-in confidence intervals for $\theta_t$ that account for both sampling variability and residual distributional uncertainty under the random shift model. Their asymptotic validity requires a growing summary dimension $r$.

\item \textbf{Empirical evaluation.} Across simulations and real data examples, we demonstrate gains over a source-trained predictor by incorporating target subgroup summaries, with further improvements from accounting for distribution shift. These gains appear both when diagnostics support the random shift model and when the observed shifts are only partially explained by it. We also compare SAGE with EB empirically.

\item \textbf{Extensions.} 
We extend the SAGE framework to settings where target subgroup variances are available and adapt the method to neural networks through SAGE-NN.

\end{enumerate}

\paragraph{Paper outline.}
Section~\ref{sec: setting} formalizes the problem setting and provides background on the random shift model. Section~\ref{sec: related work} reviews related work. Section~\ref{sec: sage} introduces SAGE and establishes its theoretical properties. Section~\ref{sec: sim} examines the theoretical results through simulations. Sections~\ref{sec: store} and \ref{sec: Nhanes} present real-world applications, comparing SAGE with a source-trained predictor, other one-step estimator baselines, and EB. Section \ref{sec: medicare} discusses a neural network adaption of SAGE and presents an empirical application. Section~\ref{sec: disc} concludes with a discussion.

\section{Setting}\label{sec: setting}

We consider a source distribution $P_s$ and a target distribution $P_t$. From the source distribution, we observe individual-level data $D_i^{(s)}=(X_i^{(s)}, y_i^{(s)}, A_i^{(s)})$, $i=1,\ldots,n_s$, where $X_i^{(s)}=(X_{1,i}^{(s)},\ldots,X_{L,i}^{(s)})$ denotes covariates, $y_i^{(s)}$ denotes the outcome, and $A_i^{(s)}=(A_{i1}^{(s)},\ldots,A_{iK}^{(s)})\in\{0,1\}^K$ records subgroup membership. Here, $A$ may encode demographic groups, study arms, or other reporting categories. A unit $i$ may have $A_{ig}=1$ for more than one $g$, allowing overlapping subgroups, and the reported subgroups need not form a complete partition of either population.

From the target distribution, we observe only subgroup counts and means; individual-level target records are unavailable. We do not require target summaries for every covariate, subgroup, or outcome present in the source dataset. We assume that all available target subgroup summaries are computed from a common underlying i.i.d.\ sample $\{D_i^{(t)}\}_{i=1}^{n_t}$ from $P_t$. Consequently, summaries for different potentially overlapping subgroups may be statistically dependent.

Our goal is to improve prediction under $P_t$ using individual-level source data together with these target subgroup summaries. If individual-level target observations were available, we could directly estimate the target risk minimizer $\theta_t = \argmin_{\theta\in\Theta} E_t[\ell(\theta; X, y)]$ for a loss function $\ell$ and parameter space $\Theta\subset\mathbb R^p$. Without them, $\theta_t$ cannot be estimated by direct target empirical risk minimization. We therefore begin with an initial source-trained predictor, $\hat\theta_{\mathrm{init}}$, and ask how target subgroup summaries can be used to adapt it toward the target population and reduce target risk by accounting for distribution shift.

\subsection{Random distribution shift model}
The diagnostic in Figure~\ref{fig: qqplot_store_intro} suggests a model of distribution shift under which source-target differences are approximately Gaussian with inflated variance, and which allows both $P(X)$ and $P(Y\mid X)$ to change. The random shift model of \citet{random_shift1} has this property.

The random shift model treats the source distribution $P_s$ as a random reweighting of the fixed target distribution $P_t$ over the joint distribution of $X$ and $Y$, thereby capturing simultaneous shifts in $P(X)$ and $P(Y\mid X)$. Under this model, different parts of the data space are randomly assigned higher or lower probability mass compared to $P_t$. Formally, let the data space be partitioned into regions $(I_j)_{j=1,\ldots,m}$ of equal probability under $P_t$, each assigned a random weight $W_j^{(s)}$ that determines how much probability mass that region receives under $P_s$, that is, with Radon--Nikodym derivative
\begin{equation*}
    \frac{dP_s}{dP_t}(D) = \frac{W_j^{(s)}}{\frac{1}{m} \sum_{j'=1}^m W_{j'}^{(s)}} \qquad \text{ for } D \in I_j.
\end{equation*}
The random weights $W_j^{(s)}$, $j=1,\ldots,m$, are i.i.d.\ across $j$, positive and satisfy $E[W_j^{(s)}]=1$ and $\operatorname{Var}(W^{(s)})<\infty$. The target sample is independent of the source weights and, conditional on the weights, independent of the source sample. Technical construction details, including regularity assumptions, are provided in Appendix \ref{app: random_shift_construction}. Adapting Lemma 1 of \cite{duc} to a single-source, fixed-target regime for vector-valued test functions $f$, we present a distributional CLT that underlies our subsequent theoretical analysis. 

\begin{lemma}
    [Distributional CLT]\label{lemma: distr_clt}

Let $f_t: \mathcal{D} \rightarrow \mathbbm{R}^{p_t}, f_s: \mathcal{D} \rightarrow \mathbbm{R}^{p_s}$ be any Borel measurable functions with finite second moments under $P_{t}$. Consider the setting where distributional and sampling uncertainty are of the same order, i.e., $m/n_k \rightarrow c_k > 0$ for each $k \in \{s, t\}$. Let $D$ denote a draw from the underlying population distribution $P_t$ on the data space $\mathcal D$. Then 
$$
\sqrt{m}
\begin{pmatrix}

\frac{1}{n_t} \sum_{i=1}^{n_t} f_{t}(D^{(t)}_i) 
-
E_t[f_{t}(D)]\\
\frac{1}{n_s} \sum_{i=1}^{n_s} f_{s}(D^{(s)}_i) 
-
E_t[f_{s}(D)]
\end{pmatrix}
\overset{d}{\rightarrow} N\big(0, \Omega \big),
$$
where $\Omega$ is the asymptotic covariance matrix,
\begin{equation*}
    \Omega \coloneqq
        \begin{pmatrix}
c_t\operatorname{Var}_t(f_t(D))
&
0_{p_t\times p_s}
\\
0_{p_s\times p_t}
&
\left(c_s+\operatorname{Var}(W^{(s)})\right)\operatorname{Var}_t(f_s(D)).
\end{pmatrix}
\end{equation*}
\end{lemma}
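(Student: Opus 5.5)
We decompose each block into a sampling part and a distributional part and show these are asymptotically independent Gaussians. The target block involves only i.i.d.\ draws from $P_t$. The source block we split as
\[
\frac{1}{n_s}\sum_{i=1}^{n_s} f_s(D_i^{(s)}) - E_t[f_s(D)]
= \Big(\frac{1}{n_s}\sum_{i=1}^{n_s} f_s(D_i^{(s)}) - E_s[f_s(D)\mid W]\Big) + \big(E_s[f_s(D)\mid W] - E_t[f_s(D)]\big),
\]
where $E_s[\cdot\mid W]$ denotes expectation under the random measure $P_s$ given the weights $W=(W_1^{(s)},\ldots,W_m^{(s)})$. Since the target sample is independent of $W$ and, given $W$, independent of the source sample, the target average and the first source term are conditionally independent given $W$. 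This independence is what produces the zero off-diagonal blocks in $\Omega$. The proof then rests on three limits, established in the order below, followed by a combination step.

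\textbf{Step 1 (target block).} We apply the multivariate Lindeberg--Lévy CLT to the i.i.d.\ draws $f_t(D_i^{(t)})$, whose second moments are finite. Together with $m/n_t\to c_t$, this gives $\sqrt m(\bar f_t - E_t f_t)\overset{d}{\to} N(0, c_t\operatorname{Var}_t(f_t))$. Because the target sample is independent of $W$, this limit also holds conditionally on $W$.

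\textbf{Step 2 (distributional term).} Write $\mu_j = m\,E_t[f_s(D)\mathbbm 1\{D\in I_j\}]$, the conditional mean of $f_s$ on cell $j$ under $P_t$. Then
\[
E_s[f_s\mid W]-E_t f_s = \frac{\frac1m\sum_j (W_j-1)(\mu_j - E_t f_s)}{\frac1m\sum_j W_j}.
\]
The denominator tends to $1$ in probability by the LLN. The numerator is a sum of independent, mean-zero terms with triangular-array coefficients $\mu_j$. Its variance is
\[
\operatorname{Var}(W)\,\frac1{m^2}\sum_j (\mu_j-E_t f_s)(\mu_j-E_t f_s)^\intercal .
\]
Under the regularity conditions of Appendix~\ref{app: random_shift_construction}, where the partition refines so that $\frac1m\sum_j \mu_j\mu_j^\intercal \to E_t[f_sf_s^\intercal]$, this variance becomes $\operatorname{Var}(W)\operatorname{Var}_t(f_s)/m$. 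The Lindeberg--Feller CLT then gives $\sqrt m(\cdot)\overset{d}{\to} N(0,\operatorname{Var}(W)\operatorname{Var}_t(f_s))$, by Slutsky.

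\textbf{Step 3 (source sampling term).} Conditionally on $W$, the $D_i^{(s)}$ are i.i.d.\ from $P_s$. A conditional Lindeberg CLT gives $\sqrt{m}(\bar f_s - E_s[f_s\mid W])\mid W \overset{d}{\to} N(0, c_s\operatorname{Var}_t(f_s))$ in probability. For this we need $\operatorname{Var}_s(f_s\mid W)\to \operatorname{Var}_t(f_s)$ in probability, which follows from the same LLN argument applied to the weighted second moments $\frac1m\sum_j W_j m E_t[f_sf_s^\intercal\mathbbm 1_{I_j}]$. We also need a Lindeberg condition given $W$; we obtain it by truncating $f_s$ at level $M$, applying the result to the truncated function, and sending $M\to\infty$ using $E[W]=1$ and finite second moments.

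\textbf{Step 4 (combination).} We assemble the joint limit with characteristic functions. Conditioning on $W$, the joint characteristic function of (target block, source sampling term) factorizes and converges in probability to the Gaussian characteristic function with variances $c_t\operatorname{Var}_t f_t$ and $c_s\operatorname{Var}_t f_s$. Multiplying by $e^{i u^\intercal\sqrt m(\text{distributional term})}$, which is $W$-measurable, and applying dominated convergence yields the product of the limits. The source variances add to $(c_s+\operatorname{Var}(W))\operatorname{Var}_t(f_s)$, and all cross-covariances vanish, giving $\Omega$.

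The main obstacle is Step 3 together with the regularity in Step 2. Both require the conditional moments of $f_s$ under the random measure $P_s$ to converge to their $P_t$ counterparts. This depends on how the cells $I_j$ refine as $m\to\infty$, and on truncation to handle $f_s$ having only second moments. Here we would follow the construction in Lemma~1 of \cite{duc} and the appendix, specializing it to a single source with a fixed target.
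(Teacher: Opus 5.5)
Your proposal is correct, but it takes a genuinely different route from the paper. The paper gives no direct argument. It invokes Lemma~1 of \cite{duc} for scalar test functions, treating the fixed target as an unperturbed component ($P_f=P_1=P_t$, $P_2=P_s$), which gives the joint scalar limit with zero cross-covariance at once. It then passes to vector-valued $f_t,f_s$ by applying that result to $a_t^\intercal f_t$ and $a_s^\intercal f_s$ and using the Cram\'er--Wold device. You instead re-derive that lemma in the single-source, fixed-target case. You split the source block into a conditional sampling term and the $W$-measurable term $E_s[f_s\mid W]-E_t[f_s]$, prove a (conditional) triangular-array CLT for each, and assemble the joint limit through conditional characteristic functions.

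The paper's route is two lines, but it rests entirely on the external lemma and on the target fitting that lemma's framework as a degenerate perturbation. Yours is self-contained and shows exactly where $c_t$, $c_s$ and $\operatorname{Var}(W^{(s)})$ come from. It also shows why the off-diagonal block vanishes: the target sample is independent of $W$ and, given $W$, of the source sample. Finally, it makes clear which parts of the construction matter, namely the $L^2$ approximation property of the partitions, $E[W^{(s)}]=1$, $\operatorname{Var}(W^{(s)})<\infty$, and the independence structure; the lower bound $w_s$ plays no role.

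The price is a few routine verifications that you leave implicit; both go through.
\begin{itemize}
\item In Step~2, the Lindeberg condition follows from $\max_j\|\mu_j-E_t[f_s]\|_2^2/m\le\max_j\int_{I_j}\|f_s-E_t[f_s]\|_2^2\,dP_t\to0$. This holds by Jensen and absolute continuity of the integral, since $P_t(I_j)=1/m$.
\item In Step~3, the weighted second-moment LLN holds even though $f_sf_s^\intercal$ is only integrable. Each entry of $\frac1m\sum_j W_j\,mE_t[f_sf_s^\intercal\mathbbm{1}_{I_j}]$ has variance at most $\operatorname{Var}(W^{(s)})\max_j\int_{I_j}\|f_s\|_2^2\,dP_t\cdot E_t\|f_s\|_2^2\to0$. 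The same bound applied to $\|f_s\|_2^2\mathbbm{1}\{\|f_s\|_2>M\}$ gives your conditional Lindeberg condition in probability.
\end{itemize}
You could also shorten Steps~2--3 by first reducing to scalar $f_s$ via Cram\'er--Wold, as the paper does.
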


Lemma~\ref{lemma: distr_clt} separates two sources of variation. The constants $c_k=\lim m/n_k$ are inversely proportional to the sample sizes $n_k$ and capture sampling variation, while $\operatorname{Var}(W^{(s)})$ captures distributional variation, that is, the strength of the shift. In finite samples, a source sample mean therefore has variance approximately $(1/n_s+\delta^2_{\rm dist})\operatorname{Var}_t(f_s(D))$, where $\delta^2_{\rm dist}\coloneqq\operatorname{Var}(W^{(s)})/m$, while a target sample mean has variance approximately $\operatorname{Var}_t(f_t(D))/n_t$. The number of regions $m$ thus sets the scale of the distributional variation relative to the sampling variation.

Returning to the diagnostic plot of Figure \ref{fig: qqplot_store_intro}, we can view Store 01 as a perturbed distribution from the target store. Then under a random distribution shift (Lemma \ref{lemma: distr_clt}), we would expect the points from Figure~\ref{fig: qqplot_store_intro}, represented by the statistic in Equation~\eqref{eqn: qq_stat}, to follow an approximately Gaussian distribution with variance inflated by the factor $(c_s+c_t+\operatorname{Var}(W^{(s)}))/(c_s+c_t)$. 

\section{Related Work}\label{sec: related work}

\paragraph{Distribution shift and domain adaptation}
A large literature studies prediction under covariate shift. This motivates importance weighting methods for shift correction as in \cite{Shimodaira, sugiyama2007covariate}. Other methods such as distributionally robust optimization (DRO) take a worst-case shift approach \citep{benTal2013robust, wiesemann2014distributionally, namkoongDuchi2016stochastic, dro_matching}. These methods do not exploit the distributional shift patterns observed in Figure~\ref{fig: qqplot_store_intro}. In addition, importance weighting methods that estimate density ratios
from individual target covariates also require access to those covariates.

In domain adaptation, CORAL aligns source and target covariance matrices using target moments \citep{coral}; unlike SAGE, this targets unsupervised domain adaptation. Others that leverage target moments computed from unlabeled target data include \cite{li2017adabn}, \cite{NEURIPS2020_85690f81}, which replace source activation moments with target ones; \cite{NIPS2006_a2186aa7}, which reweights source observations
to match the weighted source kernel mean
embedding to its target counterpart; and  \cite{pmlr-v139-taskesen21a}, which takes a DRO approach using target mean and covariance (but requires individual target data). 

Our work uses the random shift model introduced by \citet{random_shift1}, and empirically investigated by \cite{random_shift3}, to develop a framework for prediction with individual source and summary target data. \citet{random_shift2} choose weights for combining risks from multiple source distributions. SAGE instead updates a single source-trained predictor using target subgroup summaries.

\paragraph{Combining summary data with individual data}
Calibration and double-sampling methods provide classical approaches to improving estimation efficiency with auxiliary information \citep{deville1992calibration,chen2000unified}. Related work combines individual data with external summaries for disease-risk modeling \citep{ecological,hieracheal}; regression estimation using summary statistics, including fitted-model summaries \citep{chatterjee2016constrained,gim,Deng2024}; and high-resolution estimation using unbiased aggregate summaries \citep{datafusion}. For time-to-event outcomes, \citet{Zheng02102022} combine source individual data with target survival and risk-factor summaries under invariant relative risks.

Beyond these applications, \citet{hu2026fusion} derive semiparametric efficiency bounds and efficient estimators for combining individual data with noisy external summaries under transportability, and develop adaptive fusion when transportability fails. \citet{estes2018empirical} use data-adaptive empirical Bayes shrinkage to balance efficiency gains from external model summaries against bias from violations of transportability.

Another approach incorporates target summaries through balancing weights. Entropy balancing \citep{entropy_bal_Hainmueller} minimizes KL divergence from baseline weights subject to moment constraints. \citet{Josey2021,entropy_bal} apply this approach to transport treatment effects using target covariate summaries, while \citet{transportable_inf} study entropy balancing for target outcome means under covariate shift. SAGE incorporates summaries through a one-step correction and develops prediction and uncertainty quantification under a random shift model that allows both covariate and conditional outcome distributions to change.

\paragraph{Prediction-powered inference}
Prediction-powered inference (PPI) combines a small labeled
sample with predictions on a larger unlabeled sample to
improve estimation and inference \citep{angelopoulos2023prediction}.
PPI++ adaptively weights a prediction-based correction to
reduce asymptotic variance \citep{angelopoulos2023ppi},
a principle related to SAGE's adaptive weighting of its
summary-based gradient correction.
Standard PPI and PPI++ use individual labeled and unlabeled
observations, whereas SAGE uses individual source data and
target subgroup summaries.
Versions of PPI accommodate covariate, domain, and label
shift \citep{angelopoulos2023prediction,
zou2026generalizedpredictionpoweredinferenceapplication,
li2026multisourcepredictionpoweredinference}.
SAGE instead models source-target discrepancies through
random distributional perturbations.

\paragraph{Fine-tuning and one-step corrections}
Fine-tuning adapts a previously trained model by optimizing on data from a new but related task or population; see the section \emph{Fine-Tuning} in \citet{zhang2023dive}, Ch.~19 of \cite{murphy2022probabilistic}, and surveys such as \cite{jiang2022transferability,zhuang2021comprehensive}. Our work shares the goal of adapting a source-trained predictor to a target population, but differs in the information available. We consider the setting where there is no target sample on which to directly fine-tune the model. SAGE instead uses an update estimated from target subgroup summaries, and provides risk reduction guarantees and asymptotic confidence intervals under the random shift model.

The correction step in SAGE is also related to classical one-step estimators, which improve an initial estimator by taking a single Newton step \citep{vanderVaart1998asymptotic, lehmannCasella1998point}. One-step and targeted learning estimators are widely used in semiparametric inference to obtain asymptotic linearity, efficiency, and confidence intervals under i.i.d.\ or covariate shift settings \citep{bickel1993efficient, vanderLaanRubin2006targeted, vanderLaanRose2011targeted, qiu2023prediction}. Unlike these classical settings, SAGE does not require individual-level target data to construct its correction step.

\section{Gradient updates with subgroup summaries}\label{sec: sage}
We now describe SAGE, which uses target subgroup summaries to improve a predictor trained on source data. We start from the source-trained empirical risk minimizer $\hat\theta_\mathrm{init} \coloneqq \argmin_\theta \frac{1}{n_s}\sum_{i=1}^{n_s} \ell(\theta; X^{(s)}_i, y_i^{(s)})$. The goal of SAGE is to update $\hat\theta_\mathrm{init}$ in the direction of the target risk minimizer, $\theta_t \coloneqq \argmin_\theta E_t[\ell(\theta;X,y)]$. A standard one-step Newton update has the form
\begin{equation*}
    \hat\theta(\gamma)
    =
    \hat{\theta}_{\mathrm{init}}
    -
    \gamma {H}_t^{-1}(\hat{\theta}_{\mathrm{init}}){\Psi}_t(\hat{\theta}_{\mathrm{init}}),
\end{equation*}
for some step size $\gamma$. Here, $\Psi_t(\theta)\coloneqq E_t[\nabla_\theta\ell(\theta;X,y)]$
and $H_t(\theta)\coloneqq E_t[\nabla_\theta^2\ell(\theta;X,y)]$
denote the gradient and Hessian of the target risk at
$\theta$, respectively. If individual-level target data were available, a natural estimate of the target gradient would be
\begin{equation*}
    \widehat\Psi_t^{\mathrm{ind}}(\hat\theta_{\mathrm{init}}) \coloneqq
    \frac{1}{n_t}\sum_{i=1}^{n_t} \nabla_\theta \ell(\hat\theta_{\mathrm{init}};X^{(t)}_i,y^{(t)}_i).
\end{equation*}
Without individual target observations, $\widehat\Psi_t^{\mathrm{ind}}$ cannot be computed. Instead, SAGE approximates it in two steps. First, since $\hat\theta_{\mathrm{init}}$ minimizes the source empirical risk, the average source gradient at $\hat\theta_{\mathrm{init}}$ is zero, so the target gradient equals the change in the average gradient from source to target. Second, a least-squares regression of the source gradients on subgroup features approximates the gradient as a linear function of $A_g$, $A_gX$, and $A_gy$. The change in the average gradient is then approximately the regression coefficients times the change in the average subgroup features, which can be computed from subgroup summaries. Concretely, we form the summary differences
\begin{align*}
    \Delta_0(g)
    &\coloneqq
    \hat\pi_{t,g}-\hat\pi_{s,g},\\
    \Delta_X(g)
    &\coloneqq
    \hat\pi_{t,g}\hat E_t[X\mid A_g=1]
    -
    \hat\pi_{s,g}\hat E_s[X\mid A_g=1],\\
    \Delta_y(g)
    &\coloneqq
    \hat\pi_{t,g}\hat E_t[y\mid A_g=1]
    -
    \hat\pi_{s,g}\hat E_s[y\mid A_g=1],
\end{align*}
where $A_g\in\{0,1\}$ denotes membership in reported subgroup $g$,
$\hat\pi_{s,g}
\coloneqq
\frac{1}{n_s}\sum_{i=1}^{n_s}A_{ig}^{(s)}$, and $\hat\pi_{t,g}
\coloneqq
\frac{n_t(g)}{n_t}$, where $n_t(g)$ is the reported number of target observations in subgroup $g$. Throughout, when a source or target subgroup is empty,
its proportion-weighted moments
are defined as zero.

We then form the contribution of subgroup $g$ and aggregate these contributions across subgroups:
\begin{align*}
    \widehat{\Psi}_t(g)
    \coloneqq
    \hat{\beta}_{g,0}^{(s)}\Delta_0(g)
    +
    \left(\hat\beta_{g,X}^{(s)}\right)^\intercal\Delta_X(g)
    +
    \hat{\beta}_{g,y}^{(s)}\Delta_y(g), \qquad \widehat{\Psi}_t \coloneqq \sum_{g=1}^K \widehat{\Psi}_t(g).
\end{align*}


The estimated coefficients are obtained from the joint least-squares projection of the source gradients onto all reported subgroup features. 
\begin{align*}
(\hat\alpha^{(s)},
\{\hat\beta_{g,0}^{(s)},\hat\beta_{g,X}^{(s)},\hat\beta_{g,y}^{(s)}\}_{g=1}^K)
\in
\argmin_{\alpha,\{\beta_g\}_{g=1}^K}
\sum_{i=1}^{n_s}
\left\|
\nabla_\theta\ell(\hat{\theta}_{\mathrm{init}};X_i^{(s)},y_i^{(s)})
-
\alpha
-
\sum_{g=1}^K
A_{ig}^{(s)}
\left(
\beta_{g,0}
+
\beta_{g,X}^\intercal X_i^{(s)}
+
\beta_{g,y}y_i^{(s)}
\right)
\right\|_2^2 .
\end{align*}

We estimate the Hessian from individual-level source data
$$\widehat{H}_s \coloneqq \frac{1}{n_s}\sum_{i=1}^{n_s}\nabla_\theta^2\ell(\hat{\theta}_{\mathrm{init}};X_i^{(s)},y_i^{(s)}).$$
Combining the summary-adapted gradient estimate with the source Hessian gives the SAGE one-step update
\begin{equation*}
    \hat\theta_\mathrm{SAGE}
    =
    \hat\theta_\mathrm{init}
    -\gamma \widehat{H}_s^{-1}\widehat{\Psi}_t ,
\end{equation*}
where $\gamma\geq0$ is a step-size controlling how aggressively the source-trained estimator is adapted toward the target distribution. In the following subsections, we derive an asymptotically optimal choice $\gamma^*$ under the random shift model and show how it can be estimated using only subgroup summaries. We summarize the SAGE method in Algorithm \ref{alg: grad_linear}.

\begin{remark}[Incorporating subgroup variances]
When subgroup variances are reported, SAGE can augment the subgroup
features with $X_1^2,\ldots,X_L^2,y^2$
(Algorithm~\ref{alg: sage var}) to capture quadratic terms and
variance shifts beyond subgroup means. Redundant coordinates are
then removed using the same fixed selection for source and target.
For example, $A_gX_j^2$ is omitted when $X_j\in\{0,1\}$, since
$A_gX_j^2=A_gX_j$; likewise for $A_gy^2$ when $y$ is binary.
The subsequent theory extends to the retained augmented
features, with the corresponding adjustment class and
influence functions, provided the respective assumptions
of each result hold for these features.
\end{remark}

\begin{algorithm}[ht]
\caption{SAGE: Summary Adapted Gradient Estimator}
\label{alg: grad_linear}
\begin{algorithmic}[1]
\Require Source data
$\{(X_i^{(s)},y_i^{(s)},A_{i1}^{(s)},\ldots,A_{iK}^{(s)})\}_{i=1}^{n_s}$,
where $A_{ig}^{(s)}\in\{0,1\}$ indicates membership in subgroup $g$
\Require Target summaries
$\{\hat E_t[X\mid A_g=1],\hat E_t[y\mid A_g=1]\}_{g:n_t(g)>0}$
\Require Target subgroup counts $\{n_t(g)\}_{g=1}^K$ and total target summary sample size $n_t$
\Require Loss $\ell(\theta;X,y)$, initial source estimator $\hat{\theta}_{\mathrm{init}}$, fixed step-size $\gamma$
\Ensure Updated estimator $\hat{\theta}_{\mathrm{SAGE}}(\gamma)$

\State \label{alg: step_hess} Compute source Hessian: 
$\widehat H_s
\gets
\frac{1}{n_s}\sum_{i=1}^{n_s}
\nabla_\theta^2\ell(\hat{\theta}_{\mathrm{init}};X_i^{(s)},y_i^{(s)})$

\State Compute source gradients:
$u_i
\gets
\nabla_\theta\ell(\hat{\theta}_{\mathrm{init}};X_i^{(s)},y_i^{(s)})$,
for $i=1,\ldots,n_s$

\State Compute subgroup proportions:
$\hat\pi_{s,g}\gets \frac{1}{n_s}\sum_{i=1}^{n_s}A_{ig}^{(s)}$,
\quad
$\hat\pi_{t,g}\gets \frac{n_t(g)}{n_t}$
\State Compute source subgroup moments for nonempty subgroups:
$\hat E_s[X\mid A_g=1]\gets \frac{\sum_{i=1}^{n_s}A_{ig}^{(s)}X_i^{(s)}}{\sum_{i=1}^{n_s}A_{ig}^{(s)}}$,
\quad
$\hat E_s[y\mid A_g=1]\gets \frac{\sum_{i=1}^{n_s}A_{ig}^{(s)}y_i^{(s)}}{\sum_{i=1}^{n_s}A_{ig}^{(s)}}$
\State Form source-target summary differences:
\Statex \hspace{0.6cm}
$\Delta_0(g)
\gets
\hat\pi_{t,g}-\hat\pi_{s,g}$
\Statex \hspace{0.6cm}
$\Delta_X(g)
\gets
\hat\pi_{t,g}\hat E_t[X\mid A_g=1]
-
\hat\pi_{s,g}\hat E_s[X\mid A_g=1]$
\Statex \hspace{0.6cm}
$\Delta_y(g)
\gets
\hat\pi_{t,g}\hat E_t[y\mid A_g=1]
-
\hat\pi_{s,g}\hat E_s[y\mid A_g=1]$

\Statex \emph{In the regression below, apply the coordinate
reduction of Assumption~\eqref{assump: var(z) pd}, fixing
coefficients of omitted features at zero.}

\State \label{alg: ols_step}Fit on source using all reported subgroups:
\Statex \hspace{0.6cm}
$(\hat\alpha^{(s)},
\{\hat\beta_{g,0}^{(s)},\hat\beta_{g,X}^{(s)},\hat\beta_{g,y}^{(s)}\}_{g=1}^K)
\in
\arg\min_{\alpha,\{\beta_g\}_{g=1}^K}
\sum_{i=1}^{n_s}
\left\|
u_i
-
\alpha
-
\sum_{g=1}^K
A_{ig}^{(s)}
\left(
\beta_{g,0}
+
\beta_{g,X}^\intercal X_i^{(s)}
+
\beta_{g,y}y_i^{(s)}
\right)
\right\|_2^2$

\For{$g=1,\dots,K$}
    \State Estimate subgroup contribution using target summaries:
    \Statex \hspace{0.6cm}
    $\widehat{\Psi}_t(g)
    \gets
    \hat\beta_{g,0}^{(s)}\Delta_0(g)
    +
    \bigg(\hat\beta_{g,X}^{(s)}\bigg)^\intercal \Delta_X(g)
    +
    \hat\beta_{g,y}^{(s)}\Delta_y(g)$
\EndFor

\State Aggregate gradient estimates:
$\widehat{\Psi}_t
\gets
\sum_{g=1}^K\widehat{\Psi}_t(g)$

\State One-step update:
$\hat{\theta}_{\mathrm{SAGE}}(\gamma)
\gets
\hat{\theta}_{\mathrm{init}}
-
\gamma\widehat H_s^{-1}\widehat{\Psi}_t$

\State \Return $\hat{\theta}_{\mathrm{SAGE}}(\gamma)$
\end{algorithmic}
\end{algorithm}

\subsection{SAGE reduces target risk for a range of step-sizes}\label{sec: risk}

\begin{theorem}[SAGE reduces the average asymptotic excess risk compared to $\hat{\theta}_\mathrm{init}$]\label{thm: sage_risk} Consider the setting of the random shift model and the regularity conditions as listed in Appendix \ref{app: assumptions}.  Then for a fixed $\gamma \geq 0$, the scaled excess risk satisfies
\begin{equation*}
     m\left(
    E_t[\ell(\hat{\theta}_\text{SAGE}(\gamma);X,y)]
    -
    E_t[\ell(\theta_t;X,y)]
    \right)
    \overset{d}{\rightarrow}
    J(\gamma),
\end{equation*}
for a random variable $J(\gamma)$ with finite mean. The mean of
the limiting scaled excess risk variable, $E[J(\gamma)]$, is minimized at $$\gamma^* = \frac{c_s+\operatorname{Var}(W^{(s)})}{c_s+\operatorname{Var}(W^{(s)}) +c_t} \in (0,1)$$ and SAGE reduces risk compared to not incorporating target summary data, that is,
   \begin{equation*}
    E[J(\gamma)]< E[J(0)], 
   \end{equation*}
for all $\gamma \in (0, 2\gamma^*)$, where $E[J(0)]$ is the mean limiting excess risk of $\hat{\theta}_\mathrm{init}$.
\end{theorem}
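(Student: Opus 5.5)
The plan is to linearize every estimator around $\theta_t$, write the scaled error of $\hat\theta_{\mathrm{SAGE}}(\gamma)$ as a linear functional of source and target sample means, and then apply Lemma~\ref{lemma: distr_clt}. The resulting limit is Gaussian, so $J(\gamma)$ is a quadratic form whose mean is an explicit quadratic in $\gamma$.

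\emph{Step 1 (notation).} Let $Z(D)$ be the vector of retained subgroup features $(A_g, A_gX, A_g y)_{g=1}^K$. Note that $\hat\pi_{k,g}\hat E_k[X\mid A_g=1]$ is exactly the sample mean of $A_gX$ in population $k$, and similarly for the other features. Hence the summary differences satisfy $(\Delta_0,\Delta_X,\Delta_y)=\bar Z_t-\bar Z_s$ exactly, with no ratio estimation involved.

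Let $u(D)=\nabla_\theta\ell(\theta_t;D)$ and $H=H_t(\theta_t)$, so that $E_t[u]=0$. Define the $P_t$-population projection $u=\alpha+B(Z-E_tZ)+\varepsilon$, where $\operatorname{Cov}_t(Z,\varepsilon)=0$ and $E_t\varepsilon=0$. Under the random shift model, $P_s$ converges to $P_t$ at rate $m^{-1/2}$. Using the regularity conditions of Appendix~\ref{app: assumptions} (smoothness, invertible Hessian, positive definite $\operatorname{Var}_t(Z)$), I would show:
\begin{itemize}
\item $\hat\theta_{\mathrm{init}}\to\theta_t$;
\item $\widehat H_s\to H$;
\item the OLS coefficients $\hat\beta^{(s)}$ converge to $B$, even though they are evaluated at $\hat\theta_{\mathrm{init}}$ rather than $\theta_t$.
\end{itemize}
The standard M-estimator expansion then gives $\hat\theta_{\mathrm{init}}-\theta_t=-H^{-1}\bar u_s+o_p(m^{-1/2})$ and $\widehat\Psi_t=B(\bar Z_t-\bar Z_s)+o_p(m^{-1/2})$. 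I expect this step to be the main obstacle. The difficulty is controlling the sampling and distributional fluctuations of $\hat\beta^{(s)}$ and of the gradients at the random point $\hat\theta_{\mathrm{init}}$ under the random reweighting. The products of $O_p(1)$ errors with $O_p(m^{-1/2})$ differences have to be shown negligible. I would handle this by a Lipschitz/Donsker-type argument applied conditionally on the weights, together with Lemma~\ref{lemma: distr_clt} applied to the relevant test functions.

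\emph{Step 2 (error decomposition).} Write $\bar u_s=B(\bar Z_s-E_tZ)+\bar\varepsilon_s$. Substituting into the expansions of Step 1 gives
\begin{equation*}
\hat\theta_{\mathrm{SAGE}}(\gamma)-\theta_t=-H^{-1}\Big[(1-\gamma)B(\bar Z_s-E_tZ)+\gamma B(\bar Z_t-E_tZ)+\bar\varepsilon_s\Big]+o_p(m^{-1/2}).
\end{equation*}
Now apply Lemma~\ref{lemma: distr_clt} with $f_s=(Z,\varepsilon)$ and $f_t=Z$. Set $a\coloneqq c_s+\operatorname{Var}(W^{(s)})$. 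The source and target blocks are asymptotically independent, and within the source block $Z$ and $\varepsilon$ are uncorrelated. Therefore $\sqrt m(\hat\theta_{\mathrm{SAGE}}(\gamma)-\theta_t)\overset{d}{\to}G_\gamma\sim N(0,\Sigma(\gamma))$, where
\begin{equation*}
\Sigma(\gamma)=H^{-1}\Big[\big((1-\gamma)^2a+\gamma^2c_t\big)B\Sigma_ZB^\intercal+a\,\Sigma_\varepsilon\Big]H^{-1},
\end{equation*}
with $\Sigma_Z=\operatorname{Var}_t(Z)$ and $\Sigma_\varepsilon=\operatorname{Var}_t(\varepsilon)$.

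\emph{Step 3 (excess risk).} A second-order Taylor expansion of the target risk at $\theta_t$, whose gradient vanishes there, gives $m\big(E_t\ell(\hat\theta_{\mathrm{SAGE}}(\gamma))-E_t\ell(\theta_t)\big)=\tfrac12 m\,\delta^\intercal H\delta+o_p(1)$, where $\delta=\hat\theta_{\mathrm{SAGE}}(\gamma)-\theta_t$. By the continuous mapping theorem this converges in distribution to $J(\gamma)=\tfrac12G_\gamma^\intercal HG_\gamma$, a Gaussian quadratic form with finite mean. Its mean is
\begin{equation*}
E[J(\gamma)]=\tfrac12\operatorname{tr}(H\Sigma(\gamma))=\tfrac12\big((1-\gamma)^2a+\gamma^2c_t\big)\kappa+\tfrac12a\operatorname{tr}(H^{-1}\Sigma_\varepsilon),
\end{equation*}
where $\kappa=\operatorname{tr}(H^{-1}B\Sigma_ZB^\intercal)$.

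\emph{Step 4 (optimizing over $\gamma$).} Assume $\kappa>0$, i.e.\ the summaries carry nontrivial gradient information; this is part of the regularity conditions, or else the inequality is only weak. Then $E[J(\gamma)]$ is a strictly convex quadratic in $\gamma$, minimized at $\gamma^*=a/(a+c_t)\in(0,1)$. Moreover, $E[J(\gamma)]<E[J(0)]$ holds if and only if $(1-\gamma)^2a+\gamma^2c_t<a$, which is equivalent to $\gamma\big(\gamma(a+c_t)-2a\big)<0$, i.e.\ $\gamma\in(0,2\gamma^*)$.
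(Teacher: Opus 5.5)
Your proposal is correct and follows essentially the paper's route: linearize $\hat\theta_{\mathrm{SAGE}}(\gamma)$ (the paper's Corollary~\ref{corr: grad_linear}), Taylor-expand the target risk, apply Lemma~\ref{lemma: distr_clt}, and minimize the resulting quadratic in $\gamma$; your condition $\kappa>0$ is exactly Assumption~\eqref{assump: trace}. Splitting the score into projection plus residual before invoking the CLT is the same identity the paper uses afterward, $\operatorname{Cov}_t\{\nabla_\theta\ell(\theta_t;D),\Xi_t(D)\}=\operatorname{Var}_t\{\Xi_t(D)\}$, and the paper handles your Step~1 obstacle without Donsker machinery: it uses Taylor expansion at $\theta_t$ with the Lipschitz envelopes of Assumption~\eqref{assump: lip}, together with Lemma~\ref{lemma: distr_clt} applied to those envelopes.
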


The proof can be found in Appendix \ref{app: thm1}.
The optimal step-size $\gamma^*$ also reflects the tradeoff between relying on the individual-level source data and on the target summaries. Consider the following cases:
\begin{enumerate}
    \item \textbf{High source uncertainty.} Recall that $c_s$ is inversely proportional to the source sample size $n_s$, while $\operatorname{Var}(W^{(s)})$ captures distributional variation in $P_s$. When $c_s+\operatorname{Var}(W^{(s)})$ is large compared to $c_t$, the source-trained predictor is less reliable for the target population, so SAGE moves further toward the summary-based correction.
    \item \textbf{Accurate target summaries.} $c_t$ captures sampling uncertainty in the target subgroup summaries and is inversely proportional to the target summary sample size $n_t$. When $c_t$ is small compared to $c_s+\operatorname{Var}(W^{(s)})$, that is, when the target summaries are precise relative to the source sampling and distributional variation, $\gamma^*\rightarrow 1$. In this case, the summary-based correction is reliable, and SAGE approaches the full one-step update.
    \item \textbf{No distribution shift.} If $\operatorname{Var}(W^{(s)})=0$, then
$\gamma^*=c_s/(c_s+c_t)$, with finite-sample analogue
$n_t/(n_s+n_t)$. As we would expect, in the standard i.i.d.\ case, the optimal shrinkage agrees with the inverse variance weighting underlying pooled estimation: when the target summaries are more precise, SAGE moves more toward the target correction, while when the source sample is much larger, SAGE remains closer to the source-trained estimator. The i.i.d.\ step-size converges to
$\gamma_0=c_s/(c_s+c_t)\in(0,2\gamma^*)$,
which is smaller than $\gamma^*$ under distribution shift. By linearity of the update and Slutsky's theorem,
its limiting risk equals that at $\gamma_0$. An update that ignores distribution shift therefore takes too small a step, but it still reduces risk by Theorem~\ref{thm: sage_risk}. In this sense, it is a conservative one-step estimator.
\end{enumerate}  

\subsection{SAGE at \texorpdfstring{$\gamma^*$}{the optimal step-size} is optimal among linear adjustments}\label{sec: lin_opt}
The previous result shows that SAGE improves over the source-trained predictor for a range of step-sizes. We next ask whether a different use of the same subgroup summaries could yield a better linear adjustment. To formalize this question, we consider a broad class of estimators that start from $\hat\theta_\mathrm{init}$ and add an arbitrary linear function of the observed source-target differences in subgroup summaries. The following theorem shows that SAGE, using the optimal step-size $\gamma^*$, is asymptotically optimal within this class. The proof can be found in Appendix \ref{app: thm2}.

\begin{theorem}[Optimal linear adjustment estimator]\label{thm: var_opt}  
For each $k\in\{s,t\}$ and $g=1,\ldots,K$, define
$$
\hat d_{k,g}\coloneqq
\hat E_k\left[
A_g
\begin{pmatrix}
1, X, y
\end{pmatrix}^\intercal
\right],
$$
and let $\hat d_k\in\mathbb R^r$ denote the vector obtained by
stacking $\hat d_{k,g}$ over $g=1,\ldots,K$ and removing redundant
coordinates as in Assumption~\eqref{assump: var(z) pd}.
Here, the same fixed coordinate selection is used for source
and target, and $r$ denotes the number of retained coordinates. Consider the class of linearly adjusted estimators using subgroup summary information, 
\begin{equation*}
    \mathcal{A} \coloneqq \{\hat\theta_{B_m} = \hat\theta_\mathrm{init} +  B_m(\hat{d}_{t}-\hat{d}_{s}): \ B_m \in \mathbb{R}^{p\times r},  \quad B_m \overset{p}{\rightarrow} B \text{ for some deterministic B}\}
\end{equation*}
Let $\Sigma(\hat\theta)$ denote
the covariance matrix of the limiting distribution of
$\sqrt m(\hat\theta-\theta_t)$. Under the conditions of Theorem~\ref{thm: sage_risk}, the SAGE estimator, $\hat\theta_\text{SAGE}(\gamma^*)$, is asymptotically optimal in $\mathcal{A}$, in the sense that for every $\hat{\theta}_{B_m}\in \mathcal{A}$, $\Sigma(\hat\theta_{B_m})-
\Sigma(\hat\theta_\mathrm{SAGE}(\gamma^*))
\succeq0.$
 
\end{theorem}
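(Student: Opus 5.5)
The plan is to reduce everything to a joint Gaussian limit given by Lemma~\ref{lemma: distr_clt}, and then solve a generalized least-squares problem over $B$. Let $z=A\otimes(1,X,y)$ denote the retained feature vector, so that $\hat d_k=\hat E_k[z]$ and $d\coloneqq E_t[z]$. Write $\psi(D)\coloneqq-H_t^{-1}\nabla_\theta\ell(\theta_t;D)$ for the target-centered influence function, with $H_t=H_t(\theta_t)$.

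\textbf{Step 1: asymptotic linearization.} Under the random shift model, $\theta_s\to\theta_t$ at rate $m^{-1/2}$. Since $E_t[\nabla\ell(\theta_t)]=0$, a standard M-estimation expansion gives
\[
\sqrt m(\hat\theta_{\rm init}-\theta_t)=\sqrt m\,\hat E_s[\psi]+o_p(1).
\]
The summary differences satisfy
\[
\sqrt m(\hat d_t-\hat d_s)=\sqrt m\bigl(\hat E_t[z]-d\bigr)-\sqrt m\bigl(\hat E_s[z]-d\bigr).
\]
Applying Lemma~\ref{lemma: distr_clt} with $f_s=(\psi,z)$ and $f_t=z$, and setting $\kappa\coloneqq c_s+\operatorname{Var}(W^{(s)})$, the joint limit is as follows. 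The source block $(G_\psi,G_z)$ has covariance $\kappa\operatorname{Var}_t((\psi,z))$. The target block $G_z^t$ has covariance $c_t\operatorname{Var}_t(z)$ and is independent of the source block. By Slutsky, for $\hat\theta_{B_m}\in\mathcal A$,
\[
\sqrt m(\hat\theta_{B_m}-\theta_t)\overset{d}{\to} G_\psi+B(G_z^t-G_z).
\]

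\textbf{Step 2: minimize the covariance over $B$.} The limiting covariance is
\[
\Sigma(B)=\kappa V_{\psi\psi}-\kappa(BV_{z\psi}+V_{\psi z}B^\intercal)+(\kappa+c_t)BV_{zz}B^\intercal,
\]
where $V=\operatorname{Var}_t$ and $V_{zz}\succ0$ by Assumption~\eqref{assump: var(z) pd}. Completing the square yields
\[
\Sigma(B)=\Sigma(B^*)+(\kappa+c_t)(B-B^*)V_{zz}(B-B^*)^\intercal,
\qquad
B^*=\frac{\kappa}{\kappa+c_t}V_{\psi z}V_{zz}^{-1}=\gamma^*V_{\psi z}V_{zz}^{-1}.
\]
The second term is positive semidefinite, so $\Sigma(B)\succeq\Sigma(B^*)$ for every admissible $B$.

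\textbf{Step 3: identify SAGE with $B^*$.} The aim is to show that $\hat\theta_{\rm SAGE}(\gamma^*)\in\mathcal A$ with $B_m=-\gamma^*\widehat H_s^{-1}\hat\beta^{(s)}$. This requires checking that $\widehat\Psi_t=\hat\beta^{(s)}(\hat d_t-\hat d_s)$ exactly. That identity is immediate from the definitions of $\Delta_0,\Delta_X,\Delta_y$ once the coefficients are stacked in the same retained order. The intercept $\alpha$ drops out because it enters only through the constant, which is not differenced.

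It then remains to show the limit $B_m\overset{p}{\to}-\gamma^*H_t^{-1}\operatorname{Cov}_t(\nabla\ell(\theta_t),z)V_{zz}^{-1}=\gamma^*V_{\psi z}V_{zz}^{-1}=B^*$. This uses three facts:
\begin{itemize}
\item the OLS slope converges to the population projection slope under $P_t$, since $P_s\to P_t$ and $\hat\theta_{\rm init}\to\theta_t$;
\item $\widehat H_s\to H_t$;
\item $\operatorname{Cov}_t(\psi,z)=-H_t^{-1}\operatorname{Cov}_t(\nabla\ell,z)$.
\end{itemize}
Since $\gamma^*$ is fixed, this gives the claim.

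\textbf{Main obstacle.} The delicate part is Step~1 combined with the OLS consistency under random $P_s$. The expansion of $\hat\theta_{\rm init}$ and the regression coefficients must be uniform over a random, $m$-dependent source law, and the remainder must be $o_p(m^{-1/2})$ even though the gradients are evaluated at the estimated $\hat\theta_{\rm init}$ rather than at $\theta_t$. I would rely on the regularity conditions of Appendix~\ref{app: assumptions} to handle this, as in the proof of Theorem~\ref{thm: sage_risk}: a Lipschitz Hessian, a uniform law of large numbers under $P_s$ via bounded density ratio moments, and convergence of empirical Gram matrices. The proof of Theorem~\ref{thm: sage_risk} already requires these same facts, and once they are in hand the rest is linear algebra.
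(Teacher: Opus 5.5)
Your proposal is correct and follows essentially the same route as the paper. Both linearize $\hat\theta_{\mathrm{init}}$, replace $B_m$ by its limit $B$ via Slutsky, apply the joint distributional CLT, complete the square around $B^*=\gamma^* V_{\psi z}V_{zz}^{-1}$, and identify $\hat\theta_{\mathrm{SAGE}}(\gamma^*)$ as the member of $\mathcal A$ with $B_m^*=-\gamma^*\widehat H_s^{-1}\{\hat\beta^{(s)}\}^\intercal\overset{p}{\rightarrow}B^*$, using the exact identity $\widehat\Psi_t=\{\hat\beta^{(s)}\}^\intercal(\hat d_t-\hat d_s)$; the facts you flag as the main obstacle are exactly the paper's supporting results (linearization of $\hat\theta_{\mathrm{init}}$, consistency of $\widehat H_s$ and of the source OLS slopes, all driven by the distributional CLT).
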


Theorem~\ref{thm: var_opt} provides a random-shift analogue
of classical variance reduction results based on calibration
and auxiliary information
\citep{deville1992calibration, chen2000unified,
chatterjee2016constrained}.
It characterizes the optimal adjustment within $\mathcal A$
when uncertainty arises from both sampling variability
and random distributional perturbations.

\subsection{Entropy balancing is asymptotically a full-step update}\label{sec: eb}
We next compare SAGE with EB, which reweights source observations to match target subgroup summaries and minimizes the resulting weighted loss. Like SAGE, it can be applied when individual target observations are unavailable but we have access to target summaries. The weights minimize KL divergence from uniform source weights subject to the matching constraints \citep{entropy_bal_Hainmueller, transportable_inf}.
Let 
$Z_{ig}^{(s)}=
\begin{pmatrix}
{A_{ig}^{(s)}}^\intercal &
{A_{ig}^{(s)}X_i^{(s)}}^\intercal &
{A_{ig}^{(s)}y_i^{(s)}}^\intercal
\end{pmatrix}^\intercal \in \mathbb{R}^{L+2}$ denote the subgroup-$g$ feature vector of source observation $i$, and let  $\hat d_{t,g}$ be defined as in Theorem \ref{thm: var_opt}. Denote the vectors obtained by stacking $Z_{ig}^{(s)}$ and $\hat d_{t,g}$ over $g=1,\ldots,K$ by $Z_i$ and $\hat d_t$, respectively. Apply the same fixed coordinate reduction of
Assumption~\ref{assump: var(z) pd} to $Z_i$ and $\hat d_t$,
retaining the notation. The omitted constraints follow from
the retained constraints and $\sum_i w_i=1$, so the feasible
set and entropy-balancing weights are unchanged.

The entropy-balancing weights are then obtained by solving
\begin{equation}
\label{eqn: eb_weights}
\begin{aligned}
\hat{w}
=\argmin_{w_1,\ldots,w_{n_s}\geq 0}
&\quad \sum_{i=1}^{n_s} w_i\log(n_s w_i)\\
\text{subject to}
&\quad \sum_{i=1}^{n_s}w_i=1,
\quad
\sum_{i=1}^{n_s}w_i Z_i=\hat d_t.
\end{aligned}
\end{equation}
We then fit the following predictor using the resulting weights and source observations:
\begin{equation}
\label{eqn: eb_predictor}
\hat\theta_{\mathrm{EB}}
= \argmin_{\theta}
\sum_{i=1}^{n_s}\hat w_i\,
\ell(\theta;X_i^{(s)},y_i^{(s)}).
\end{equation}
The following lemma shows that EB is asymptotically equivalent to full-step SAGE ($\gamma=1$), establishing a random-shift analogue of the classical equivalence between calibration weighting and regression adjustment \citep{deville1992calibration}. Together with Theorem~\ref{thm: var_opt}, this implies that SAGE at the optimal step-size $\gamma^*<1$ has asymptotic covariance no larger than that of EB. The proof is given in Appendix~\ref{app: entropy_bal}.

\begin{lemma}[EB is asymptotically a full-step update]\label{lemma: entropy_bal}
Let $\Sigma(\cdot)$ be as in Theorem~\ref{thm: var_opt}. Assume that there exists a constant $\kappa\in(0,1)$ such that, with probability tending to one as $m\to\infty$, the
solution to Eqn.~\eqref{eqn: eb_weights} admits finite Lagrange multipliers and its weights satisfy
$\min_{1\le i\le n_s} n_s\hat w_i\geq\kappa$. Furthermore, assume that there exists $\eta>0$ such that
$E_t[\exp\{\eta\|Z^{(t)}(D)\|_2\}]<\infty$, where $Z^{(t)}(D)$ denotes the stacked feature vector, defined as $Z_i$ above, of a draw $D$ from $P_t$. Then under the regularity conditions in Appendix~\ref{app: assumptions},
$$
\hat\theta_{\mathrm{EB}}
-
\hat\theta_{\mathrm{SAGE}}(1) = o_p(m^{-1/2}),
$$ and, by Theorem \ref{thm: var_opt},
$$
\Sigma(\hat\theta_{\mathrm{EB}})
-
\Sigma(\hat\theta_{\mathrm{SAGE}}(\gamma^*))
\succeq0.
$$
\end{lemma}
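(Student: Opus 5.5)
The plan is to linearize both estimators around $\hat\theta_{\mathrm{init}}$ and show that each equals $\hat\theta_{\mathrm{init}}-\widehat H_s^{-1}\widehat B(\hat d_t-\hat d_s)+o_p(m^{-1/2})$. Here $\hat d_s=\frac1{n_s}\sum_i Z_i$, and $\widehat B$ is the least-squares coefficient matrix of the source gradients $u_i$ on the centered features $Z_i-\hat d_s$. For SAGE this representation is exact by construction: $\widehat\Psi_t=\sum_g\widehat\Psi_t(g)=\widehat B(\hat d_t-\hat d_s)$, because $\Delta_0,\Delta_X,\Delta_y$ are precisely the coordinates of $\hat d_t-\hat d_s$ and the intercept $\hat\alpha^{(s)}$ absorbs centering. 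So the work is all on the EB side.

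\textbf{Step 1 (EB weights).} By convex duality, under the assumed finite Lagrange multipliers the solution of Eqn.~\eqref{eqn: eb_weights} has the form $\hat w_i\propto\exp(\hat\lambda^\intercal Z_i)$, with $\hat\lambda$ solving $\sum_i\hat w_i Z_i=\hat d_t$.

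By Lemma~\ref{lemma: distr_clt}, $\hat d_t-\hat d_s=O_p(m^{-1/2})$. Assumption~\eqref{assump: var(z) pd} makes the source covariance $\widehat S$ of $Z_i$ positive definite with probability tending to one. The first-order expansion of the constraint in $\lambda$ then gives $\hat\lambda=\widehat S^{-1}(\hat d_t-\hat d_s)+O_p(m^{-1})$. Justifying this expansion requires controlling the remainder terms, which involve $\frac1{n_s}\sum_i\|Z_i\|^k e^{|\lambda^\intercal Z_i|}$ uniformly over $\|\lambda\|\le Cm^{-1/2}$. These averages stay $O_p(1)$ by the exponential moment assumption $E_t[\exp\{\eta\|Z^{(t)}(D)\|_2\}]<\infty$. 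That assumption transfers to $P_s$ because the density ratio $W_j/\bar W$ has bounded moments under the construction in Appendix~\ref{app: random_shift_construction}. The bound $n_s\hat w_i\ge\kappa$ is used to rule out degenerate weights and to guarantee existence and uniqueness of the local solution. This yields $n_s\hat w_i=1+\hat\lambda^\intercal(Z_i-\hat d_s)+r_i$ with $\frac1{n_s}\sum_i|r_i|\,\|g_i\|=O_p(m^{-1})$ for the square-integrable quantities $g_i$ appearing below.

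\textbf{Step 2 (weighted M-estimator).} Under the regularity conditions in Appendix~\ref{app: assumptions}, $\hat\theta_{\mathrm{EB}}$ is consistent for $\hat\theta_{\mathrm{init}}$ at rate $m^{-1/2}$, since the weights are $1+O_p(m^{-1/2})$ in average. A Taylor expansion of $\sum_i\hat w_i\nabla\ell(\hat\theta_{\mathrm{EB}};D_i)=0$ around $\hat\theta_{\mathrm{init}}$ gives
\begin{equation*}
\hat\theta_{\mathrm{EB}}-\hat\theta_{\mathrm{init}}=-\Big(\sum_i\hat w_i\nabla^2\ell(\tilde\theta;D_i)\Big)^{-1}\sum_i\hat w_i u_i .
\end{equation*}
The Hessian factor equals $\widehat H_s+o_p(1)$ by a weighted uniform LLN. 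For the gradient factor, using $\sum_i u_i=0$ and Step 1,
\begin{equation*}
\sum_i\hat w_i u_i=\frac1{n_s}\sum_i u_i(Z_i-\hat d_s)^\intercal\hat\lambda+O_p(m^{-1})=\widehat B\,\widehat S\,\widehat S^{-1}(\hat d_t-\hat d_s)+O_p(m^{-1}).
\end{equation*}
Since the displayed quantity is $O_p(m^{-1/2})$, replacing the Hessian factor by $\widehat H_s$ costs only $o_p(m^{-1/2})$. Comparing with the SAGE representation gives $\hat\theta_{\mathrm{EB}}-\hat\theta_{\mathrm{SAGE}}(1)=o_p(m^{-1/2})$.

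\textbf{Step 3 (covariance comparison).} $\hat\theta_{\mathrm{SAGE}}(1)\in\mathcal A$ with $B_m=-\widehat H_s^{-1}\widehat B$, which converges in probability to a deterministic limit. Because $\hat\theta_{\mathrm{EB}}$ differs from it by $o_p(m^{-1/2})$, Slutsky's theorem shows the two share the same limiting law, so $\Sigma(\hat\theta_{\mathrm{EB}})=\Sigma(\hat\theta_{\mathrm{SAGE}}(1))$. Theorem~\ref{thm: var_opt} then gives the stated inequality.

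\textbf{Main obstacle.} I expect the hardest part to be Step 1: obtaining a uniform remainder bound for the exponential-tilt expansion under the random shift model, where source observations are not i.i.d.\ from a fixed law. I would handle this conditionally on the weights $W^{(s)}$, applying the exponential moment bound region by region.
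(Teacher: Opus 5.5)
Your overall architecture works and differs from the paper's, but Step~1 has a real hole: you never establish that the multiplier $\hat\lambda$ is $O_p(m^{-1/2})$. You expand the balancing constraint around $\lambda=0$ and control remainders uniformly over $\|\lambda\|_2\le Cm^{-1/2}$. Nothing in the proposal shows that the actual EB multiplier lies in that ball, and your appeal to $n_s\hat w_i\ge\kappa$ for ``existence and uniqueness of the local solution'' gives no mechanism. This a priori rate is the crux of the lemma; it is what Proposition~\ref{prop: lamdba_op1} supplies.

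The paper gets the rate globally, without localizing. Since $\log(n_s\hat w_i)=\hat\lambda^\intercal Z_i-\mathrm{const}$ and the constraints hold, $\frac1{n_s}\sum_i(n_s\hat w_i-1)\log(n_s\hat w_i)=\hat\lambda^\intercal(\hat d_t-\hat d_s)$. The inequality $(a-1)\log a\ge\kappa(\log a)^2$ for $a\ge\kappa$ then gives $\kappa\,\lambda_{\min}(\widehat S)\|\hat\lambda\|_2\le\|\hat d_t-\hat d_s\|_2$. The exponential moment is used only afterwards, to get $\max_i\|Z_i\|_2=O_p(\log m)$ and hence $\max_i|n_s\hat w_i-1|=o_p(1)$.

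Alternatively, you can close the gap inside your own framework by convexity. Define $Q(\lambda)=\log\{\frac1{n_s}\sum_i e^{\lambda^\intercal(Z_i-\hat d_s)}\}-\lambda^\intercal(\hat d_t-\hat d_s)$.
\begin{itemize}
\item The constraint is exactly $\nabla Q(\hat\lambda)=0$, and $Q$ is strictly convex once $\widehat S\succ0$, so $\hat\lambda$ is its unique minimizer.
\item $Q(0)=0$.
\item Your uniform second-order bounds give $Q(\lambda)\ge\frac14\lambda_{\min}(\widehat S)C^2m^{-1}-Cm^{-1/2}\|\hat d_t-\hat d_s\|_2>0$ on the sphere $\|\lambda\|_2=Cm^{-1/2}$, with high probability for large $C$.
\item Convexity then forces $\|\hat\lambda\|_2<Cm^{-1/2}$.
\end{itemize}
That route does not use $\kappa$ at all.

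With that step repaired, your argument is a legitimate and somewhat cleaner alternative. The paper expands the weighted score around $\theta_t$ and linearizes $\hat\lambda_2$ and $\hat\theta_{\mathrm{SAGE}}(1)$ separately in population terms. It then matches the two expressions via $\{\beta^{(t)}(\theta_t)\}^\intercal=\operatorname{Cov}_t\{\nabla_\theta\ell(\theta_t;D),Z(D)\}[\operatorname{Var}_t\{Z(D)\}]^{-1}$. You instead expand around $\hat\theta_{\mathrm{init}}$ and stay with empirical quantities. This uses that $\hat\theta_{\mathrm{SAGE}}(1)=\hat\theta_{\mathrm{init}}-\widehat H_s^{-1}\widehat B(\hat d_t-\hat d_s)$ exactly and that $\frac1{n_s}\sum_i u_i(Z_i-\hat d_s)^\intercal=\widehat B\widehat S$ exactly. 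The first claim then reduces to an exact cancellation rather than a matching of two population-level linearizations.

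Two small points to make explicit:
\begin{itemize}
\item Consistency of $\hat\theta_{\mathrm{EB}}$ from ``average'' weight control needs Cauchy--Schwarz with the envelope $L_0\in L^2(P_t)$ and $\frac1{n_s}\sum_i(n_s\hat w_i-1)^2=O_p(m^{-1})$. The paper instead uses $\max_i|n_s\hat w_i-1|=o_p(1)$.
\item The vector mean-value step with a single $\tilde\theta$ should be written in integral form.
\end{itemize}
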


\subsection{Estimating the optimal step-size from summaries}\label{sec: gamma_est}
Theorem~\ref{thm: var_opt} and Lemma~\ref{lemma: entropy_bal} concern SAGE at the optimal step-size $\gamma^*$, which depends on unknown quantities. We now show how to estimate $\gamma^*$ using only subgroup summaries in Algorithm \ref{alg: sage_gamma}.
\begin{algorithm}[H]
\caption{SAGE: Optimal step-size estimation}
\label{alg: sage_gamma}
\begin{algorithmic}[1]
\State Form the source feature vectors $Z_i$, $i=1,\ldots,n_s$, by stacking
$Z_{ig}^{(s)}=
\begin{pmatrix}
A_{ig}^{(s)} &
(A_{ig}^{(s)}X_i^{(s)})^\intercal &
A_{ig}^{(s)}y_i^{(s)}
\end{pmatrix}^\intercal$
over $g=1,\ldots,K$.

\State Form the source and target summary vectors $\hat d_k$ by stacking, over $g=1,\ldots,K$,
\Statex \hspace{0.6cm}
$\hat d_{k,g}
\gets
\begin{pmatrix}
\hat\pi_{k,g}\\
\hat\pi_{k,g}\hat E_k[X\mid A_g=1]\\
\hat\pi_{k,g}\hat E_k[y\mid A_g=1]
\end{pmatrix}$ for $k\in\{s,t\}$

\State Remove summary coordinates that are redundant with the remaining
coordinates and an intercept, using the same fixed coordinate selection
for every $Z_i$, $\hat d_s$, and $\hat d_t$.
Denote the reduced vectors by the same symbols, and let $r=\dim Z$ be the
number of retained coordinates.

\State Form the target-source summary difference:
$\Delta\gets\hat d_t-\hat d_s$

\State Estimate the source covariance:
$\widehat S_s
\gets
\frac{1}{n_s}\sum_{i=1}^{n_s}
(Z_i-\hat d_s)
(Z_i-\hat d_s)^\intercal$

\State Compute the whitened, standardized summary difference:\label{alg: step_z}
$z
\gets
((1/n_s+1/n_t)\widehat S_s)^{-1/2}\Delta$

\State Estimate the distributional variance:
$\hat\delta_{\mathrm{dist}}^2
\leftarrow
\left(\frac1{n_s}+\frac1{n_t}\right)
\max\left\{\frac{\|z\|_2^2}{r}-1,0\right\}.$

\State Estimate the optimal step-size:
$\hat\gamma
\gets
\frac{n_s^{-1}+\hat{\delta}^2_{\rm dist}}
{n_s^{-1}+\hat{\delta}^2_{\rm dist}+n_t^{-1}}$

\State \Return $\hat\gamma$
\end{algorithmic}
\end{algorithm}

The intuition behind Algorithm~\ref{alg: sage_gamma} is that $z$ standardizes the source-target summary differences by sampling variability alone. Without distribution shift, each coordinate of $z$ is approximately standard Gaussian, so $\|z\|_2^2/r\approx 1$. Under the random shift model, Lemma~\ref{lemma: distr_clt} implies that the variance of each coordinate is instead inflated to approximately $1+\delta^2_{\rm dist}/(1/n_s+1/n_t)$, so the excess of $\|z\|_2^2/r$ over $1$ estimates the strength of the shift. Since this estimate averages over the $r$ summary
coordinates, consistency requires the summary dimension $r$ to
grow. We therefore take limits
first in $m$ (with $m/n_k\to c_k$) and then in $L$.

The following lemma establishes consistency of the estimator from Algorithm~\ref{alg: sage_gamma} in this iterated limit. The proof can be found in Appendix \ref{app: gamma_consist}.

\begin{lemma}[Consistency of optimal step-size estimator]
\label{lem: sage_gamma}
Consider the random shift model and the regularity conditions
of Appendix~\ref{app: assumptions}, with the growing-$L$ regime
specified in Assumption~\eqref{assump: grow_L}.
Let $r=\dim Z(D)$, and let $z$ denote the whitened,
standardized source-target summary difference in
Algorithm~\ref{alg: sage_gamma}. Recall that $\delta^2_{\rm dist} = \operatorname{Var}(W^{(s)})/m$. Let
\begin{equation}\label{eqn: delta_hat}
\hat{\delta}^2_{\rm dist}
=
\left(\frac{1}{n_s}+\frac{1}{n_t}\right)
\left(\frac{\|z\|_2^2}{r}-1\right)_+,
\qquad (x)_+\coloneqq\max\{x,0\},    
\end{equation}
and
$$
\hat\gamma
=
\frac{
1/n_s+\hat{\delta}^2_{\rm dist}
}{
1/n_s+\hat{\delta}^2_{\rm dist}+1/n_t
}.
$$
Then, for every $\varepsilon>0$,
$$
\lim_{L\to\infty}\limsup_{m\to\infty}
P\left(
\left|m\hat{\delta}^2_{\rm dist}-\operatorname{Var}(W^{(s)})\right|
>\varepsilon
\right)=0,
$$
and
$$
\lim_{L\to\infty}\limsup_{m\to\infty}
P\left(
|\hat\gamma-\gamma^*|>\varepsilon
\right)=0,
\qquad
\gamma^*=\frac{c_s+\operatorname{Var}(W^{(s)})}{c_s+\operatorname{Var}(W^{(s)})+c_t}.
$$
Here, the dependence on $m$ and $L$ is suppressed, and the limits are taken first with respect to $m$,
followed by $L$.
\end{lemma}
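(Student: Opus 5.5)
We first fix $L$ (hence $r$) and take $m\to\infty$, then let $L\to\infty$.

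\emph{Step 1 (fixed $L$, limit in $m$).} Apply Lemma~\ref{lemma: distr_clt} with $f_t=f_s=Z$ (the retained stacked feature vector). Since $\hat d_k=\frac1{n_k}\sum_i Z(D_i^{(k)})$, the lemma gives
\[
\sqrt m(\hat d_t-\hat d_s)\overset{d}{\rightarrow} N\big(0,(c_t+c_s+\operatorname{Var}(W^{(s)}))\Sigma_Z\big),\qquad \Sigma_Z\coloneqq\operatorname{Var}_t(Z(D)).
\]
Next, $\widehat S_s\overset{p}{\rightarrow}\Sigma_Z$. The random reweighting perturbs the source covariance only at order $m^{-1/2}$, because the second moments of $Z$ are again sample averages covered by Lemma~\ref{lemma: distr_clt} under the finite-moment conditions. $\Sigma_Z$ is positive definite after the coordinate reduction of Assumption~\eqref{assump: var(z) pd}, and $m(1/n_s+1/n_t)\to c_s+c_t$. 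The continuous mapping theorem then gives $z\overset{d}{\rightarrow}\sqrt{\rho}\,\xi$, where $\xi\sim N(0,I_r)$ and $\rho=(c_s+c_t+\operatorname{Var}(W^{(s)}))/(c_s+c_t)$. Hence
\[
m\hat\delta^2_{\rm dist}\overset{d}{\rightarrow} V_r\coloneqq (c_s+c_t)\Big(\rho\frac{\|\xi\|^2}{r}-1\Big)_+,
\]
since $x\mapsto x_+$ is continuous.

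\emph{Step 2 (limit in $L$).} By Portmanteau applied to the closed set $\{|x-\operatorname{Var}(W^{(s)})|\ge\varepsilon\}$,
\[
\limsup_m P(|m\hat\delta^2_{\rm dist}-\operatorname{Var}(W^{(s)})|>\varepsilon)\le P(|V_r-\operatorname{Var}(W^{(s)})|\ge\varepsilon).
\]
Since $\|\xi\|^2/r\sim\chi^2_r/r$ has mean $1$ and variance $2/r$, Chebyshev gives $\|\xi\|^2/r\to1$ in probability as $r\to\infty$. Then $V_r\to(c_s+c_t)(\rho-1)_+=\operatorname{Var}(W^{(s)})$ in probability. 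By Assumption~\eqref{assump: grow_L}, $r\to\infty$ as $L\to\infty$, which proves the first claim. It matters that the limit law of $z$ is exactly Gaussian for each fixed $r$, so the iterated-limit order is used essentially and no uniformity in $r$ is needed.

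\emph{Step 3 (step-size).} Multiply numerator and denominator of $\hat\gamma$ by $m$:
\[
\hat\gamma=\frac{m/n_s+m\hat\delta^2_{\rm dist}}{m/n_s+m\hat\delta^2_{\rm dist}+m/n_t}.
\]
This is $h(m/n_s,m/n_t,m\hat\delta^2_{\rm dist})$ with $h(a,b,v)=(a+v)/(a+v+b)$. The function $h$ is continuous and Lipschitz on a neighborhood of $(c_s,c_t,\operatorname{Var}(W^{(s)}))$, because the denominator is bounded below by $c_t/2>0$ there. So $|\hat\gamma-\gamma^*|>\varepsilon$ forces either $|m\hat\delta^2_{\rm dist}-\operatorname{Var}(W^{(s)})|>\varepsilon'$ for some $\varepsilon'>0$, or a deviation of the deterministic ratios $m/n_k$, which vanishes. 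The second claim follows from the first.

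\emph{Main obstacle.} The delicate step is Step 1: justifying that $\widehat S_s$ is consistent for $\Sigma_Z$ under the random shift, and that the whitening inverse square root behaves continuously. This requires the regularity/moment conditions in Appendix~\ref{app: assumptions}, namely finite fourth moments of $Z$ so that $ZZ^\intercal$ is a valid test function, together with positive definiteness of $\Sigma_Z$. I would handle it by applying Lemma~\ref{lemma: distr_clt} to $f_s=\operatorname{vec}(ZZ^\intercal)$, which gives $\widehat S_s-\Sigma_Z=O_p(m^{-1/2})$.
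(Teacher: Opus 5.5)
Your proposal is correct and follows the paper's proof essentially step for step: a fixed-$L$ distributional CLT for $\hat d_t-\hat d_s$ together with consistency of $\widehat S_s$ from fourth moments of $Z$, giving $m\hat\delta^2_{\rm dist}\overset{d}{\rightarrow}[(c_s+c_t+\operatorname{Var}(W^{(s)}))\chi^2_r/r-(c_s+c_t)]_+$; then Portmanteau and $\operatorname{Var}(\chi^2_r/r)=2/r$ for the outer limit in $L$; and continuity for $\hat\gamma$. The differences are cosmetic. The paper writes $\|z\|_2^2$ as a quadratic form in $\widehat S_s^{-1}$, so it needs only continuity of matrix inversion rather than of the inverse square root. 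Your Step 3 spells out the deterministic-continuity argument that the paper compresses into an appeal to the continuous mapping theorem.
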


\subsection{Confidence intervals for \texorpdfstring{$\theta_t$}{the target risk minimizer}}\label{sec: ci}
Because $\theta_t$ cannot be estimated through direct target empirical risk minimization, inference must be based on the same summary information used for adaptation and requires accounting for both sampling and distributional variability. 

We first examine the limiting distribution of the SAGE estimator at a fixed $\gamma$. For a draw $D$ from $P_k$, $k\in\{s,t\}$, let $Z^{(k)}(D)$ denote its subgroup feature vector, obtained by stacking $(A_g, A_gX^\intercal, A_gy)^\intercal$ over $g=1,\ldots,K$ as in $Z_i$ above, and let $d_t\coloneqq E_t[Z^{(t)}(D)]$ denote its target mean, the population analog of $\hat d_t$. Let $\beta^{(t)}(\theta_t)$ denote the population slope
coefficient matrix from projecting
$\nabla_\theta\ell(\theta_t;D)$ onto
$\widetilde Z(D)=(1,Z(D)^\intercal)^\intercal$
under $P_t$, the population analog of Step \ref{alg: ols_step} of Alg. \ref{alg: grad_linear}. Define $\Xi_k(D)\coloneqq \left\{\beta^{(t)}(\theta_t)\right\}^\intercal \{Z^{(k)}(D)-d_t\}$. Define the $p$-dimensional source and target influence functions
\begin{align*}
\Phi_s(D;\gamma) \coloneqq
-H_t^{-1}
\left\{
\nabla_\theta\ell(\theta_t; D)
-
\gamma\Xi_s(D)
\right\}, \qquad
\Phi_t(D;\gamma) \coloneqq
\gamma H_t^{-1}\Xi_t(D),
\end{align*} 
where $H_t$ is the population Hessian, which we estimate with $\widehat{H}_s$ in Step \ref{alg: step_hess} of Alg. \ref{alg: grad_linear}. Intuitively, $\Phi_s$ is the source gradient with the part explained by the subgroup features removed, and $\Phi_t$ is the correction contributed by the target summaries. Since $\Phi_t$ is linear in $Z^{(t)}$, its target sample mean depends on the target data only through the subgroup summaries. The following corollary shows that for a fixed $\gamma$, $\hat{\theta}_{\rm SAGE}(\gamma)$ admits a linear representation composed of sample means of the influence functions $\Phi_s(D;\gamma), \Phi_t(D;\gamma)$. Applying the distributional CLT (Lemma~\ref{lemma: distr_clt}) then gives asymptotic normality at $\gamma^*$.


\begin{corollary}[Asymptotic normality of SAGE at $\gamma^*$]\label{cor: sage_normality}
Under the regularity conditions in Appendix~\ref{app: assumptions}
and $m/n_k\to c_k>0$ for $k\in\{s,t\}$ and a fixed $L$, SAGE evaluated at a fixed  $\gamma$ has the asymptotic linear representation 
\begin{equation*}
    \sqrt{m}(\hat{\theta}_\text{SAGE}(\gamma)-\theta_t) =  \sqrt{m}\left(\hat{E}_s[\Phi_s(D;\gamma)]-\hat{E}_t[\Phi_t(D;\gamma)]\right)+o_p(1),
\end{equation*}
 where $\hat{E}_s[\Phi_s(D;\gamma)]$ and $\hat{E}_t[\Phi_t(D;\gamma)]$ are the source and target sample means, respectively. Both influence functions have mean zero under $P_t$, since $E_t[\nabla_\theta\ell(\theta_t;D)]=0$ and $E_t[Z^{(t)}(D)]=d_t$. In particular,
\[
\sqrt m\{
\hat\theta_{\mathrm{SAGE}}(\gamma^*)-\theta_t
\}
\overset{d}{\rightarrow}
N\left(0,\Sigma_{\mathrm{SAGE}}(\gamma^*)\right),
\]
where $\gamma^*$ is defined in Theorem~\ref{thm: sage_risk} and $\Sigma_{\mathrm{SAGE}}(\gamma^*)
=c_t\operatorname{Var}_t\{\Phi_{t}(D;\gamma^*)\}+(c_s+\operatorname{Var}(W^{(s)}))
\operatorname{Var}_t\{\Phi_{s}(D;\gamma^*)\}$. Dependence on $L$ is suppressed in the notation.
\end{corollary}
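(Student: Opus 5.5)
The plan is to linearize each piece of the SAGE update around $\theta_t$ and collect terms into source and target sample means. The Gaussian limit then follows from Lemma~\ref{lemma: distr_clt}.

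\textbf{Step 1: the initial estimator.} By the standard M-estimation expansion under the regularity conditions of Appendix~\ref{app: assumptions},
\[
\hat\theta_{\mathrm{init}}-\theta_t=-H_t^{-1}\hat E_s[\nabla_\theta\ell(\theta_t;D)]+o_p(m^{-1/2}).
\]
Lemma~\ref{lemma: distr_clt} gives $\hat E_s[\nabla_\theta\ell(\theta_t;D)]=O_p(m^{-1/2})$, because its $P_t$-mean is zero. Hence $\hat\theta_{\mathrm{init}}=\theta_t+O_p(m^{-1/2})$. The same lemma, applied to the Hessian, gives $\widehat H_s\to H_t$ in probability.

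\textbf{Step 2: the gradient estimate.} By construction, $\widehat\Psi_t=\hat\beta^\intercal(\hat d_t-\hat d_s)$, where $\hat\beta$ stacks the OLS slopes from Step~\ref{alg: ols_step} of Alg.~\ref{alg: grad_linear}. These slopes are smooth functions of source sample moments of $(\widetilde Z,\nabla_\theta\ell(\hat\theta_{\mathrm{init}};D))$.
\begin{itemize}
\item The moments converge to their $P_t$ counterparts, since source means are $P_t$-means plus $O_p(m^{-1/2})$.
\item Plugging in $\hat\theta_{\mathrm{init}}$ instead of $\theta_t$ costs $o_p(1)$, by a uniform law of large numbers over a neighborhood of $\theta_t$.
\item Invertibility of the population Gram matrix is guaranteed by Assumption~\eqref{assump: var(z) pd}.
\end{itemize}
Together these give $\hat\beta\to\beta^{(t)}(\theta_t)$ in probability. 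Since $\hat d_t-\hat d_s=O_p(m^{-1/2})$, Slutsky's theorem yields
\[
\widehat H_s^{-1}\widehat\Psi_t=H_t^{-1}\{\beta^{(t)}(\theta_t)\}^\intercal(\hat d_t-\hat d_s)+o_p(m^{-1/2}).
\]
Writing $\hat d_k-d_t=\hat E_k[Z^{(k)}(D)-d_t]$, this equals $H_t^{-1}(\hat E_t[\Xi_t]-\hat E_s[\Xi_s])+o_p(m^{-1/2})$.

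\textbf{Step 3: collecting terms.} Substituting both expansions into $\hat\theta_{\mathrm{SAGE}}(\gamma)=\hat\theta_{\mathrm{init}}-\gamma\widehat H_s^{-1}\widehat\Psi_t$ gives
\[
\hat\theta_{\mathrm{SAGE}}(\gamma)-\theta_t=-H_t^{-1}\hat E_s\bigl[\nabla_\theta\ell(\theta_t;D)-\gamma\Xi_s\bigr]-\gamma H_t^{-1}\hat E_t[\Xi_t]+o_p(m^{-1/2}).
\]
This is exactly $\hat E_s[\Phi_s]-\hat E_t[\Phi_t]$. Multiplying by $\sqrt m$ gives the stated linear representation.

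\textbf{Step 4: the Gaussian limit.} Apply Lemma~\ref{lemma: distr_clt} with $f_t=\Phi_t(\cdot;\gamma)$ and $f_s=\Phi_s(\cdot;\gamma)$. Both have finite second moments by the moment assumptions, and both have $P_t$-mean zero, as noted in the statement. The two blocks are asymptotically independent, so the continuous mapping $(a,b)\mapsto b-a$ gives a normal limit with covariance
\[
c_t\operatorname{Var}_t\Phi_t+(c_s+\operatorname{Var}(W^{(s)}))\operatorname{Var}_t\Phi_s.
\]
Specializing to $\gamma=\gamma^*$ gives $\Sigma_{\mathrm{SAGE}}(\gamma^*)$.

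\textbf{Main obstacle.} The delicate step is Step 2, specifically showing that $\hat\beta$ is consistent for the $P_t$-projection coefficient even though it is fit on source data from a randomly reweighted distribution. Because the variance is driven by $(\hat d_t-\hat d_s)$, only consistency of $\hat\beta$ is needed, not a rate. I would first handle this by writing each OLS moment as a source mean of a fixed function plus a perturbation from $\hat\theta_{\mathrm{init}}$. Lemma~\ref{lemma: distr_clt} then applies to the fixed functions, and stochastic equicontinuity, from the Lipschitz/Donsker-type conditions in Appendix~\ref{app: assumptions}, controls the perturbation.
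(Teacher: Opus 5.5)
Your proposal is correct and follows essentially the same route as the paper: its proof invokes Corollary~\ref{corr: grad_linear}, which is assembled from Proposition~\ref{prop: theta_old_consist}, Lemma~\ref{lemma: grad_consist}, and Proposition~\ref{prop: hess_consist} exactly as in your Steps 1--3, and then applies Lemma~\ref{lemma: distr_clt} jointly to $\Phi_t$ and $\Phi_s$ with the continuous mapping theorem, as in your Step 4. The only difference is that you use mere consistency of $\hat\beta$ and $\widehat H_s$, which suffices because $\hat d_t-\hat d_s=O_p(m^{-1/2})$ (though $\widehat H_s$ is evaluated at $\hat\theta_{\mathrm{init}}$, so it also needs your Lipschitz plug-in step), whereas the paper proves $O_p(m^{-1/2})$ rates for both via Taylor expansions under the Lipschitz conditions.
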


Using the results from Corollary \ref{cor: sage_normality}, we now turn to constructing confidence intervals for SAGE with the estimated step-size $\hat\gamma$. Although individual target samples are unavailable, source observations can be used to estimate the target variability of the influence functions: by Lemma~\ref{lemma: distr_clt}, source sample moments converge to the corresponding target population moments, so for each fixed $L$ and fixed $\gamma$, empirical
covariances of the fitted influence functions over source
observations consistently estimate
$\operatorname{Var}_t\{\Phi_s(D;\gamma)\}$ and
$\operatorname{Var}_t\{\Phi_t(D;\gamma)\}$. The plug-in confidence interval additionally relies on consistency of $\hat\gamma$ and $\hat\delta^2_{\rm dist}$ in the iterated limit of Lemma~\ref{lem: sage_gamma}. The fixed-$L$ normality result above applies to a fixed step-size, including $\gamma^*$, rather than the estimated step-size $\hat\gamma$.

Let $\hat\gamma$ and $\hat\delta^2_{\rm dist}$ be defined in Lemma~\ref{lem: sage_gamma}. For a source observation $D$, define the fitted centered projection
\[
\widehat\Xi(D)
\coloneqq
\left\{\hat\beta^{(s)}(\hat\theta_{\mathrm{init}})\right\}^\intercal
\{Z^{(s)}(D)-\hat d_t\},
\]
where $\hat\beta^{(s)}(\hat\theta_{\mathrm{init}})$ stacks the fitted subgroup coefficients corresponding to the
retained coordinates from Algorithm~\ref{alg: grad_linear}. The plug-in influence functions, evaluated on source observations for both covariance terms, are
\begin{align*}
\widehat\Phi_s(D;\gamma)
&\coloneqq
-\widehat H_s^{-1}
\left\{\nabla_\theta\ell(\hat\theta_{\mathrm{init}};D)
-\gamma\widehat\Xi(D)\right\},\\
\widehat\Phi_t(D;\gamma)
&\coloneqq
\gamma\widehat H_s^{-1}\widehat\Xi(D).
\end{align*}
These use only source observations and the available target subgroup summaries. Define
\[
\widehat V_{\mathrm{SAGE}}
=
\frac{1}{n_t}
\widehat{\operatorname{Var}}_s
\left\{\widehat\Phi_t(D;\hat\gamma)\right\}
+
\left(\frac{1}{n_s}+\hat\delta^2_{\rm dist}\right)
\widehat{\operatorname{Var}}_s
\left\{\widehat\Phi_s(D;\hat\gamma)\right\},
\]
where both empirical covariances are computed over the source observations. Under the conditions of the following lemma, $m\widehat V_{\mathrm{SAGE}}$ estimates $\Sigma_{\mathrm{SAGE}}(\gamma^*)$ from Corollary~\ref{cor: sage_normality}. The lemma provides a confidence interval that accounts for both sampling and distributional uncertainty.

\begin{lemma}[Confidence intervals]\label{lem: CI}
Under the conditions of Corollary~\ref{cor: sage_normality} and Lemma~\ref{lem: sage_gamma}, suppose additionally that, for each fixed $L$ and $k\in\{s,t\}$, $E_t[\|\Phi_k(D;\gamma^*)\|_2^4]<\infty.$ Then, for any fixed coordinate $j$ satisfying $e_j^\intercal\Sigma_{\mathrm{SAGE}}(\gamma^*)e_j>0$
for all sufficiently large $L$, 
\[
\hat\theta_{\mathrm{SAGE},j}(\hat\gamma)
\pm
z_{1-\alpha/2}
\sqrt{
e_j^\intercal\widehat V_{\mathrm{SAGE}}e_j
}
\]
is an asymptotic $1-\alpha$ confidence interval for $\theta_{t,j}$, with limits taken first
in $m$, followed by $L$. Here $e_j$ is the $j$th standard basis vector in $\mathbb R^p$.
\end{lemma}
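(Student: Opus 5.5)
The plan is to reduce the interval to a studentized statistic that is asymptotically standard normal in the iterated limit. The statistic is
\[
T_{m,L}\coloneqq \frac{\sqrt m\,(\hat\theta_{\mathrm{SAGE},j}(\hat\gamma)-\theta_{t,j})}{\sqrt{m\,e_j^\intercal\widehat V_{\mathrm{SAGE}}e_j}}.
\]
I will show that, for every $x$, $\lim_{L\to\infty}\limsup_{m\to\infty}|P(T_{m,L}\le x)-\Phi(x)|=0$. The two-sided coverage claim then follows by taking $x=\pm z_{1-\alpha/2}$. The argument splits into a numerator step, replacing $\hat\gamma$ by $\gamma^*$, and a denominator step, showing consistency of the variance estimator. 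The two are combined with a Slutsky-type argument adapted to iterated limits.

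\emph{Numerator.} The SAGE update is linear in $\gamma$, so
\[
\hat\theta_{\mathrm{SAGE}}(\hat\gamma)-\hat\theta_{\mathrm{SAGE}}(\gamma^*)=-(\hat\gamma-\gamma^*)\widehat H_s^{-1}\widehat\Psi_t .
\]
By Corollary~\ref{cor: sage_normality}, at $\gamma=1$ and $\gamma=0$, for each fixed $L$ we have $\sqrt m\,\widehat H_s^{-1}\widehat\Psi_t=O_p(1)$, with a limiting Gaussian whose covariance I denote $C_L$. Lemma~\ref{lem: sage_gamma} gives $\lim_L\limsup_m P(|\hat\gamma-\gamma^*|>\varepsilon)=0$. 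Combining these, the difference $\sqrt m(\hat\theta_{\mathrm{SAGE}}(\hat\gamma)-\hat\theta_{\mathrm{SAGE}}(\gamma^*))$ is negligible in the iterated limit. This needs a uniformity condition: $C_L$ must stay bounded in $L$, at least for the $j$th coordinate relative to $e_j^\intercal\Sigma_{\mathrm{SAGE}}(\gamma^*)e_j$. I would take this bound from the growing-$L$ regime of Assumption~\eqref{assump: grow_L}. Corollary~\ref{cor: sage_normality} at the fixed value $\gamma^*$ then gives, for each $L$, convergence of the numerator to $N(0,e_j^\intercal\Sigma_{\mathrm{SAGE}}(\gamma^*)e_j)$.

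\emph{Denominator.} Write
\[
m\widehat V_{\mathrm{SAGE}}=\tfrac{m}{n_t}\widehat{\operatorname{Var}}_s\{\widehat\Phi_t(D;\hat\gamma)\}+\bigl(\tfrac{m}{n_s}+m\hat\delta^2_{\rm dist}\bigr)\widehat{\operatorname{Var}}_s\{\widehat\Phi_s(D;\hat\gamma)\}.
\]
The scalar factors behave as follows: $m/n_k\to c_k$, and $m\hat\delta^2_{\rm dist}\to\operatorname{Var}(W^{(s)})$ in the iterated sense by Lemma~\ref{lem: sage_gamma}. For the empirical covariances, the plug-in influence functions are quadratic polynomials in $\gamma$. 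So $\widehat{\operatorname{Var}}_s\{\widehat\Phi_k(D;\gamma)\}$ is a polynomial in $\gamma$ whose matrix coefficients are empirical second moments over source data of $\widehat H_s^{-1}\nabla\ell(\hat\theta_{\mathrm{init}};D)$ and $\widehat H_s^{-1}\widehat\Xi(D)$. For fixed $L$, I show each coefficient converges to its $P_t$ counterpart in four steps:
\begin{itemize}
\item $\hat\theta_{\mathrm{init}}\to\theta_t$, $\widehat H_s\to H_t$ and $\hat\beta^{(s)}\to\beta^{(t)}(\theta_t)$, by the regularity conditions and Lemma~\ref{lemma: distr_clt};
\item $\hat d_t\to d_t$;
\item source empirical second moments of fixed functions converge to $P_t$ second moments, by Lemma~\ref{lemma: distr_clt} applied to the product functions (which have finite second moments by the assumed fourth moments);
\item the plug-in error is controlled by local Lipschitz/dominance conditions on $\nabla\ell$ near $\theta_t$, via a standard uniform law argument.
\end{itemize}
Continuity in $\gamma$ together with consistency of $\hat\gamma$ then gives $e_j^\intercal m\widehat V_{\mathrm{SAGE}}e_j-e_j^\intercal\Sigma_{\mathrm{SAGE}}(\gamma^*)e_j\to0$ in the iterated sense. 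This requires the coefficient limits to be bounded in $L$, again from Assumption~\eqref{assump: grow_L}.

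\emph{Combining and main obstacle.} The assumption that $e_j^\intercal\Sigma_{\mathrm{SAGE}}(\gamma^*)e_j>0$ for large $L$ lets me divide. I would prefer to assume it is bounded away from zero; if only positivity is available, I would work with the ratio directly and use bounded ratios. An $\varepsilon$-$\delta$ version of Slutsky for iterated limits, bounding $P(T\le x)$ between $P(N\le x(1\pm\varepsilon))\pm$ error probabilities, then finishes the proof.

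I expect the main obstacle to be the uniformity in $L$. Corollary~\ref{cor: sage_normality} is stated only for fixed $L$, while $\hat\gamma$ is consistent only as $L\to\infty$. The numerator perturbation $(\hat\gamma-\gamma^*)\sqrt m\widehat H_s^{-1}\widehat\Psi_t$ must therefore be shown to vanish using a bound on its fixed-$L$ limiting variance that does not blow up with $L$. I would first try to extract such a bound from Assumption~\eqref{assump: grow_L}. The alternative is to normalize by $e_j^\intercal\Sigma_{\mathrm{SAGE}}(\gamma^*)e_j$ and argue that the full-step correction variance is comparable to it.
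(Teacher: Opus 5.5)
Your skeleton is the paper's. For the numerator you use the exact identity $\hat\theta_{\mathrm{SAGE}}(\hat\gamma)-\hat\theta_{\mathrm{SAGE}}(\gamma^*)=-(\hat\gamma-\gamma^*)\widehat H_s^{-1}\widehat\Psi_t$ with a union bound; for the denominator, consistency of the variance estimator; then an $\varepsilon$-Slutsky argument in the iterated limit.

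The gap is the step you yourself flag: you leave it open, and the source you propose does not close it. The growing-$L$ assumption only fixes $P_t$, $\ell$, $\Theta$, the law of $W^{(s)}$ and $c_s,c_t$. It gives no bound on $C_L$, and none on the limits of the coefficients in your quadratic expansion of $\widehat{\operatorname{Var}}_s\{\widehat\Phi_k(D;\gamma)\}$. (Incidentally, $\widehat\Phi_k(D;\gamma)$ itself is affine in $\gamma$; only its covariance is quadratic.) Nor does it bound $e_j^\intercal\Sigma_{\mathrm{SAGE},L}(\gamma^*)e_j$ from below. Since $\hat\gamma$ and $m\hat\delta^2_{\rm dist}$ are consistent only after $L\to\infty$, you have not shown that $(\hat\gamma-\gamma^*)\sqrt m\,e_j^\intercal\widehat H_s^{-1}\widehat\Psi_t$, or the error in $m e_j^\intercal\widehat V_{\mathrm{SAGE}}e_j$, vanishes relative to $e_j^\intercal\Sigma_{\mathrm{SAGE},L}(\gamma^*)e_j$. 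The argument therefore stops exactly where the iterated limit matters.

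The missing idea is that the needed bounds are structural, not assumed. The paper takes your fallback and works only with ratios. Because $e_j^\intercal\Sigma_{\mathrm{SAGE},L}(\gamma^*)e_j$ dominates each of its two summands, two $L$-free bounds follow.
\begin{itemize}
\item The bound $\operatorname{Var}_t\{e_j^\intercal\Phi_s(D;\gamma^*)\}\big/e_j^\intercal\Sigma_{\mathrm{SAGE},L}(\gamma^*)e_j\le 1/\{c_s+\operatorname{Var}(W^{(s)})\}$ controls the effect of $m\hat\delta^2_{\rm dist}-\operatorname{Var}(W^{(s)})$ at $\gamma^*$.
\item The bound $e_j^\intercal\Sigma_{\mathrm{SAGE},L}(\gamma^*)e_j\ge c_t(\gamma^*)^2\operatorname{Var}_t\{e_j^\intercal H_t^{-1}\Xi_t(D)\}$ bounds the normalized limiting variance of $\sqrt m\,e_j^\intercal\widehat H_s^{-1}\widehat\Psi_t$ by $\{c_s+c_t+\operatorname{Var}(W^{(s)})\}/\{c_t(\gamma^*)^2\}$.
\end{itemize}
For the estimator at $\hat\gamma$, the paper avoids any expansion in $\gamma$. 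It uses the identity $\widehat\Phi_k(D;\hat\gamma)-\widehat\Phi_k(D;\gamma^*)=\frac{\hat\gamma-\gamma^*}{\gamma^*}\widehat\Phi_t(D;\gamma^*)$ for both $k$, together with the reverse triangle inequality, to get
\[
\left|\sqrt{\frac{e_j^\intercal\widehat V_{\mathrm{SAGE}}(\hat\gamma)e_j}{e_j^\intercal\widehat V_{\mathrm{SAGE}}(\gamma^*)e_j}}-1\right|\le\frac{|\hat\gamma-\gamma^*|}{\gamma^*}\sqrt{1+\frac{n_t}{n_s}+n_t\hat\delta^2_{\rm dist}},
\]
which does not depend on $L$.

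Your primary route can also be completed, but from the projection structure rather than from the assumption. The decomposition $\nabla_\theta\ell(\theta_t;D)=\Xi_t(D)+\varepsilon^{(t)}(\theta_t;D)$ has uncorrelated summands. Hence $\operatorname{Var}_t\{\Xi_t(D)\}\preceq\operatorname{Var}_t\{\nabla_\theta\ell(\theta_t;D)\}$ for every $L$, a fixed matrix, and this bounds $C_L$ and all your coefficient limits. Moreover, with $a=c_s+\operatorname{Var}(W^{(s)})$,
\[
\Sigma_{\mathrm{SAGE},L}(\gamma^*)=H_t^{-1}\Bigl[a\operatorname{Var}_t\{\varepsilon^{(t)}(\theta_t;D)\}+\tfrac{ac_t}{a+c_t}\operatorname{Var}_t\{\Xi_t(D)\}\Bigr]H_t^{-1}\succeq\tfrac{ac_t}{a+c_t}H_t^{-1}\operatorname{Var}_t\{\nabla_\theta\ell(\theta_t;D)\}H_t^{-1}.
\]
So positivity for large $L$ already gives a uniform lower bound, and your worry about boundedness away from zero is moot. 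The mixed limit in your denominator step is then handled by two bounds: a supremum over $\gamma\in[0,1]$ for fixed $L$, and a Lipschitz bound in $\gamma$ that is uniform in $L$.

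The paper's ratio route never needs to know how anything depends on $L$. Your completed route gives explicit $L$-free bounds, and it shows that the positivity hypothesis is equivalent to $e_j^\intercal H_t^{-1}\operatorname{Var}_t\{\nabla_\theta\ell(\theta_t;D)\}H_t^{-1}e_j>0$.
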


\section{Empirical Examples}\label{sec: empirical}

\subsection{Simulations}\label{sec: sim}

We validate our main theoretical result of Theorem \ref{thm: sage_risk} with simulations generated under the random shift model using the \texttt{calinf} R package\footnote{Code to reproduce figures can be found in \url{https://github.com/yzhangi96/summary_powered_prediction}}. The target distribution $P_t$ contains $16$ covariates: 8 binary covariates drawn from $\text{Bern}(0.5)$ and 8 covariates drawn from  $N(0,1)$. We also generate four subgroups using two Bernoulli group indicators that are not included as predictors. The response is defined as $y^{(k)}=\sum_{j=1}^{8} X_j^{(k)} +(X_9^{(k)})^2+(X_{10}^{(k)})^2+ \sum_{j=11}^{16} X_j^{(k)}+1.1X_{1}^{(k)}X_{2}^{(k)}+ \epsilon_{\text{noise}}$, for $k\in\{s,t\}$. Under the target distribution $P_t$, the covariates and noise are drawn from the fixed distributions as described, with  $\epsilon_{\text{noise}}\sim \text{Unif}(-1,1)$. Under the source distribution $P_s$, covariates and noise are drawn from randomly shifted versions of these distributions, with shift strength $\operatorname{Var}(W^{(s)})$. In each trial, we draw source individual data and target subgroup summaries, construct the least-squares estimator $\hat\theta_\mathrm{init}$, and  compute $\hat\theta_\mathrm{SAGE}(\gamma)$ over a grid of step-sizes. No individual-level target data are used to construct the SAGE estimator.

Figure \ref{fig: sim} reports the empirical target excess risk of SAGE on a held-out test set (10,000 samples), averaged over 1,000 trials in three shift regimes. The panels correspond to settings where the source sampling and distributional uncertainty term, $1/n_s+\delta^2_{\rm dist}$, is smaller than, comparable to, or larger than the target summary noise term $1/n_t$. As predicted by the theory, the average target excess risk is minimized near the theoretical optimum $\gamma^*$ of Theorem \ref{thm: sage_risk} in each regime. The pattern also matches the intuition behind $\gamma^*$:  as source noise, $1/n_s+\delta^2_{\rm dist}$, grows relative to the target summary noise $1/n_t$, the target excess risk is minimized at increasing step-sizes, so the update moves further toward the summary-based correction.

\begin{figure}[H]
    \centering
    \includegraphics[width=0.93\linewidth]{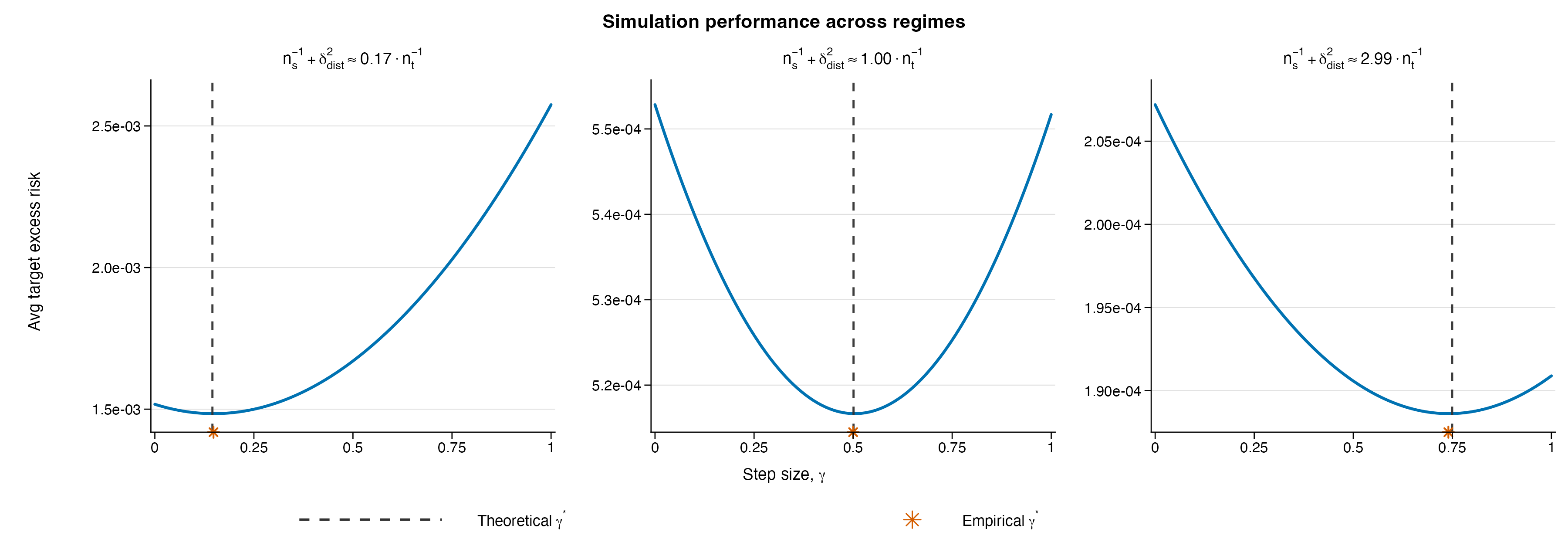}
    \captionsetup{width=\linewidth}
    \caption{Simulation across three random shift regimes of increasing source sampling and distributional variation relative to target subgroup summaries. Each panel shows the average target excess risk of SAGE over a grid of step-sizes, $\gamma$, with sample sizes (from left to right) of $(n_s=3000,n_t=500),(n_s=10000,n_t=8300),(n_s=64000,n_t=64000)$. The dashed vertical line marks the theoretical optimum $\gamma^*$ derived in Theorem \ref{thm: sage_risk}, and the orange star marks the empirical minimizer.}
    \label{fig: sim}
\end{figure}

\subsection{Store item demand prediction}\label{sec: store}

We now return to the Store Item Demand prediction example previewed in Figure~\ref{fig: store1}, using the Kaggle data introduced in Section~\ref{sec: intro}. The QQ plots in Figures~\ref{fig: qqplot_store_intro} and~\ref{fig: qqplot_store} show the variance-inflated pattern motivating the random shift model.
Our goal is to predict (log) sales for a given item and day. Each experiment treats Store 01 as the source distribution and one of Stores 02-10 as the target distribution, with \texttt{Item} used as the subgroup. We train the initial OLS predictor on $n_s=5000$ randomly sampled individual-level source records using 57 features. We then combine the source data with item level target summaries computed from $n_t=300$ randomly sampled target records to estimate the SAGE update.

Figure \ref{fig: store2} benchmarks SAGE, two other one-step estimation approaches, EB as described in Section~\ref{sec: eb}, and an oracle, against the source-trained predictor. All one-step methods use the target summary data to estimate the one-step correction of Algorithm \ref{alg: grad_linear} but use different step-size estimates. `One-step $\gamma=1$' takes an undamped Newton step. The optimal step-size approaches one when target summary noise is negligible relative to source sampling and distributional uncertainty. `One-step shift-ignoring' omits the estimated shift magnitude when setting the step-size, that is, it assumes an i.i.d.\ setting with $\operatorname{Var}(W^{(s)})=0$. Finally, `Oracle' selects the value of $\gamma$ that minimizes MSE on the target test set and represents the optimal step-size if we had access to a large number of target observations. 

The plot in Figure \ref{fig: store2} reports the percentage of MSE relative to the source-trained predictor, averaged over $300$ trials, with lower values indicating better performance. Each method improves over the source-trained predictor in all nine store shifts, indicating the value of incorporating summary data for prediction. SAGE, with the estimated step-size $\hat\gamma$, performs close to the Oracle. It also outperforms the one-step $\gamma=1$ and shift-ignoring update, indicating that the data-adaptive step-size that accounts for both sample size $n_t$ and distribution shift further improves predictive performance. EB matches or outperforms SAGE in three store shifts and performs worse in the remaining six. Under the assumptions of Lemma~\ref{lemma: entropy_bal}, exact EB is asymptotically equivalent to the full-step update ($\gamma=1$), while SAGE at $\gamma^*$ has asymptotic covariance no larger than that of EB. These results do not guarantee the finite-sample MSE ranking of EB and SAGE with an estimated step-size.

\begin{figure}[ht]
    \centering
    \includegraphics[width=0.75\linewidth]{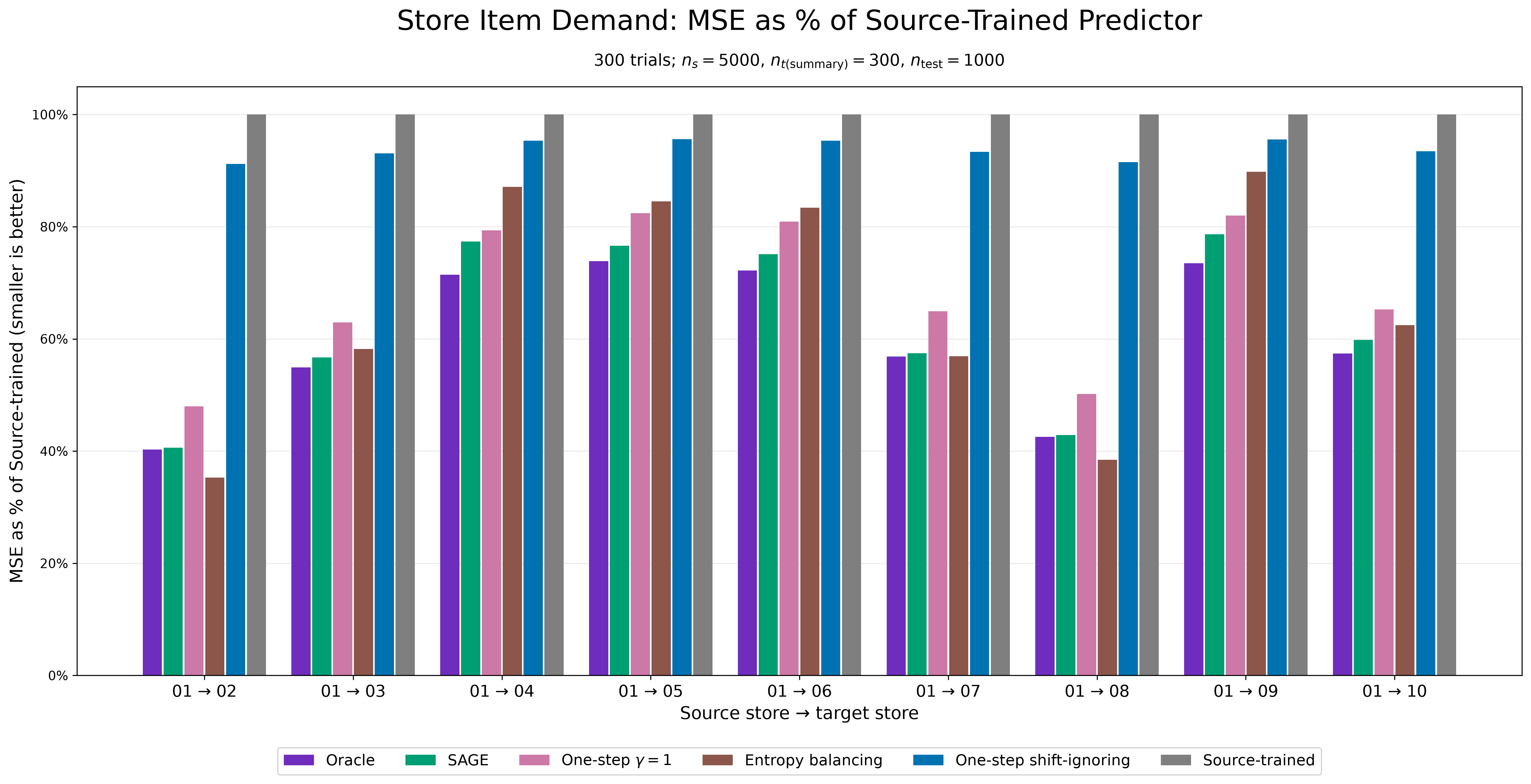}
    \caption{
Store Item Demand prediction results for transfers from Store 01 to Stores 02-10. The plot shows the percentage of target test MSE relative to the source-only OLS predictor, averaged over 300 trials. SAGE nearly matches the Oracle and consistently improves over the full-step and shift-ignoring one-step estimators.
}
    \label{fig: store2}
\end{figure}


\subsection{NHANES lead level prediction}\label{sec: Nhanes}
We next evaluate SAGE on the preprocessed NHANES lead dataset from the \texttt{tableshift} package \citep{tableshift} across temporal shifts. The goal is to predict individuals' blood lead levels using 16 covariates. For each source-target year pair, we train an initial predictor using $n_s=1000$ individual source observations. Target subgroup summaries are obtained from $n_t=800$ observations, with race as the reported subgroup. 

Before comparing prediction performance, we examine the diagnostic plot (see Figure~\ref{fig: qqplot_nhanes_lead} in Appendix \ref{app: qq}), which visualizes the standardized mean difference statistic in Eqn.~\eqref{eqn: qq_stat}. Unlike the Store Item Demand example, where the QQ plots were largely consistent with the random shift assumption, the NHANES lead shifts exhibit a mix of shift patterns. Some year pairs, such as $2001\rightarrow2005$, lie close to the i.i.d.\ reference line, suggesting little detectable shift in the observed covariates. Other pairs, especially larger gaps such as $1999\rightarrow2017$ and $1999\rightarrow2015$, show a pronounced S-curve. This pattern shows departures from the variance-inflated Gaussian pattern predicted by the random shift model. A hybrid of random perturbations and another shift mechanism is one possible interpretation. 

We show a subset of year pairs (gap $\geq 10$) in Figure~\ref{fig: nhanes_lead} and present the full dataset's 45 shift pairs in Appendix \ref{app: nhanes}. Figure~\ref{fig: nhanes_lead} compares two target summary sample sizes: $n_t=800$ and $n_t=10$, corresponding to $80\%$ and $1\%$ of the source sample size $n_s=1000$. SAGE improves on the source-trained predictor across all displayed year pairs with the larger target sample and most pairs with the smaller sample. When $n_t=800$, SAGE performs similarly to the full-step update ($\gamma=1$) and approaches the oracle benchmark. When $n_t=10$, SAGE generally outperforms the full-step update, which can adjust too aggressively to noisy target summaries. In some cases, the one-step estimator at $\gamma=1$ performs worse than the source-trained predictor. This comparison highlights the importance of choosing an appropriate step-size that accounts for both target summary sample size and estimated distribution shift, a key feature of SAGE. EB approximately matches the full-step estimator ($\gamma=1$) in several cases, but performs worse than the source-trained predictor in others. The asymptotic equivalence in Lemma~\ref{lemma: entropy_bal} concerns exact EB and the full-step update under the random shift model and the lemma's regularity conditions; it does not guarantee matching finite-sample performance for these temporal shifts.

\begin{figure}[H]
    \centering
    \includegraphics[width=\linewidth]{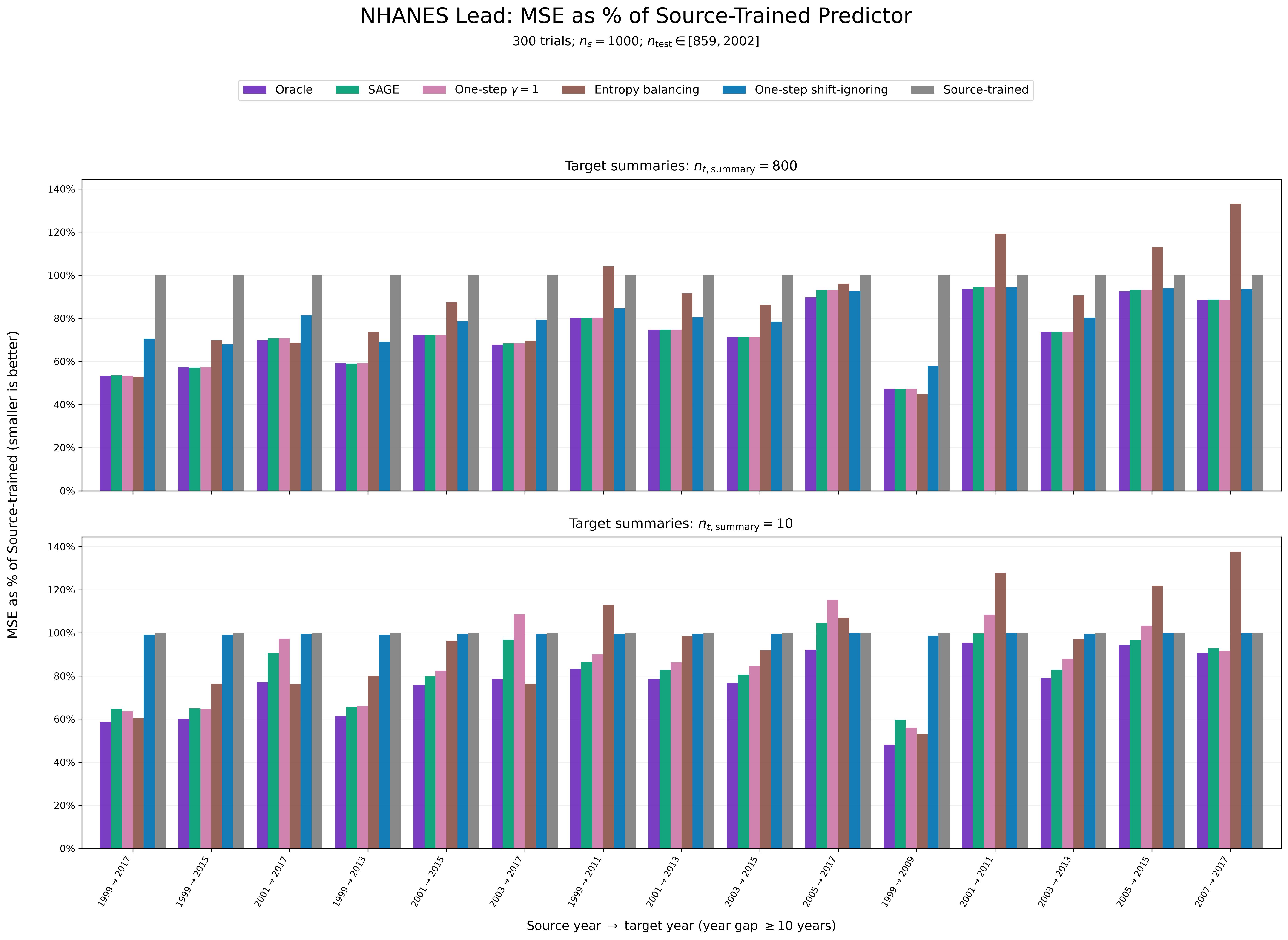}
        \captionsetup{width=\linewidth}
    \caption{
NHANES lead prediction results with target summary sample sizes $n_t=800$ (top) and $n_t=10$ (bottom). Bars show target test MSE as a percentage of the source-trained predictor's MSE, averaged over 300 trials; lower values indicate better performance. SAGE improves prediction for most year pairs. With $n_t=800$, it closely matches the full-step update ($\gamma=1$) and approaches the oracle benchmark; with $n_t=10$, it generally outperforms the full-step update. Meanwhile, for smaller $n_t$, the one-step $\gamma=1$ estimator performs worse than the source-trained predictor in several cases. 
}
    \label{fig: nhanes_lead}
\end{figure}

\begin{table}[ht]
\centering
\begin{tabular}{l r r}
\hline
& \multicolumn{2}{c}{Average test MSE (300 trials)} \\
Method
& $n_{t}=800$
& $n_{t}=10$ \\
\hline
source-trained predictor
& 0.107 & 0.107 \\
Oracle
& \textbf{0.074} & \textbf{0.078} \\
\textbf{SAGE}
& \textbf{0.074} & \textbf{0.085} \\
One-step $\gamma=1$
& \textbf{0.074} & 0.088 \\
One-step shift-ignoring
& 0.083 & 0.106 \\
EB
& 0.087 & 0.095 \\
\hline
\end{tabular}
\caption{Average test MSE across the 15 NHANES lead prediction year pairs (gap $\geq 10$ years), over 300 trials. The columns compare target subgroup summaries computed from $n_{t}=800$ and $n_{t}=10$ observations. Lower values indicate better performance. On average, SAGE performs better than other one-step estimators that do not adapt their step-size to account for sample size and distribution shift.}
\label{table: nhanes}
\end{table}

\subsection{Medicare inpatient payment prediction with SAGE-NN}\label{sec: medicare}

One practical advantage of SAGE is that the target summaries enter through a gradient correction, rather than through a restriction on the model class. This makes the summary adaptation idea compatible with gradient-based learners beyond smooth parametric models. We illustrate this by extending SAGE to neural networks.  

For neural network predictors, SAGE-NN applies the summary-adapted
correction to the final linear layer of a source-trained multilayer
perceptron (MLP). We first train the complete network on source
observations and then freeze the encoder. Then, we standardize the final hidden layer representation to improve numerical stability and refit the final linear layer by least squares on source data, ensuring that the empirical source gradient vanishes at $\hat\theta_{\mathrm{init}}$. We then apply the SAGE procedure to the final linear layer, that is, we construct the gradients via Algorithm~\ref{alg: grad_linear} using target subgroup summaries. For a predetermined encoder, the guarantees of Section~\ref{sec: sage} apply to the final layer parameters when the corresponding regularity conditions hold. SAGE-NN learns its encoder from the same source data used for adaptation, so these guarantees do not directly cover the full learned procedure: freezing the encoder does not account for its estimation uncertainty or dependence on the source sample. Details of SAGE-NN can be found in Algorithm \ref{alg: sage_nn}.

We evaluate SAGE-NN on the preprocessed 2011 CMS Medicare Inpatient Utilization and Payment dataset \citep{cmsinpatient2011}. Our goal is to predict the (log) payment received for a diagnosis group in each hospital (e.g., UCLA Medical Center knee replacement). Suppose our source population consists of hospitals in California. How can we leverage summary information from other states to make predictions for hospitals in those target states? We first perform diagnostics of the data to better understand the distribution shift. Similar to the patterns we saw in the QQ plots of the Store Item Demand data, the diagnostic plots for the Medicare data (see Figure~\ref{fig: qqplot_medicare} in Appendix~\ref{app: qq}) suggest that random perturbations explain part of these geographic shifts, although the S-curves indicate that some tail departures remain. For the NN adaptation of EB, the balanced weights are used to fit the final linear layer of the frozen source embeddings, mirroring the SAGE-NN adjustment on the final linear layer.

For this experiment, we consider Midwest target states: Michigan, Minnesota, Wisconsin, Ohio, Illinois, Indiana, and Missouri. For each source-target pair, we train an initial neural network on $n_s=5000$ California hospital diagnosis group records using 33 covariates after encoding categorical features. Target subgroup summaries are computed from $n_{t}=80$ records using Medicare's Diagnosis-Related Group (DRG) as the subgroup. The model is finally evaluated on an independent target test set of $n_{\mathrm{test}}=1000$ records, and results are averaged over 300 trials.

\begin{figure}[ht]
    \centering
    \includegraphics[width=0.75\linewidth]{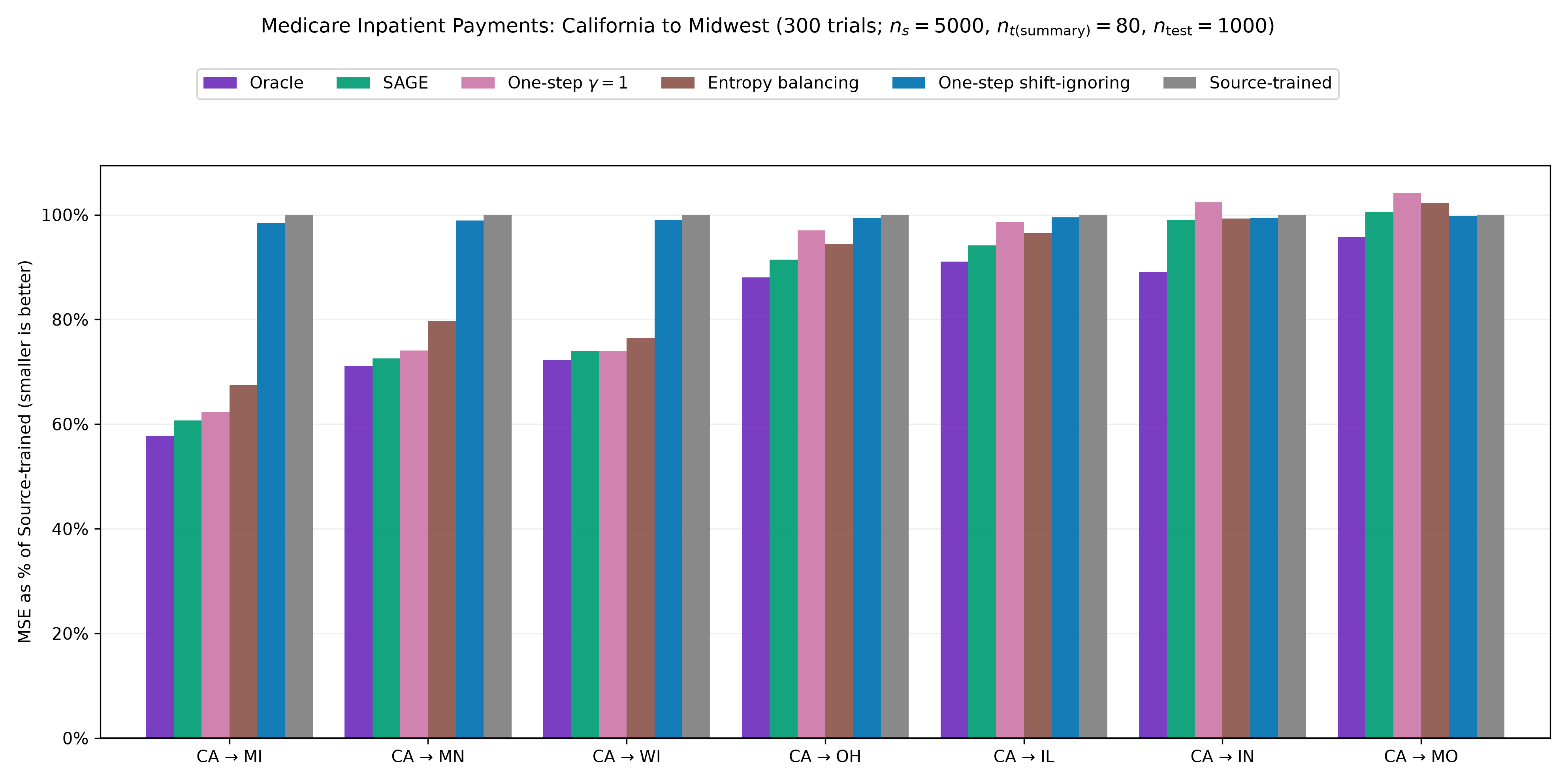}
    \caption{Medicare inpatient payment prediction results for transfers from California to seven Midwest states. Each panel shows target test MSE as a percentage of the source-only NN's MSE, averaged over 300 trials; lower values indicate better performance. SAGE-NN improves prediction in six states and generally outperforms the full-step and shift-ignoring one-step estimators, with its largest gains in Michigan, Minnesota, and Wisconsin, where performance is close to the oracle benchmark.
}
\label{fig: medicare1}
\end{figure}

Figure~\ref{fig: medicare1} reports target MSE as a percentage of that of the source-trained neural network. SAGE-NN improves prediction in six of the seven target states, with reductions of approximately $26$--$39\%$ in Michigan, Minnesota, and Wisconsin, where performance is close to the oracle benchmark. The shift-ignoring one-step estimator yields only small gains, highlighting the benefit of accounting for distribution shift. SAGE-NN also improves on the full-step update (One-step $\gamma=1$) and EB. Missouri and Indiana remain the most challenging: SAGE-NN performs about the same as the source-trained NN. However, in those cases, the one-step estimator with $\gamma=1$ performs worse than the source-only NN. These results show that refitting the final layer of a source-trained neural network with SAGE, using only target subgroup summaries, can improve prediction in the absence of individual-level target data.

\section{Discussion} \label{sec: disc}
Motivated by diagnostics consistent with the random shift model, we proposed SAGE for the setting where individual-level data are available from a shifted source population and the target population is observed only through subgroup summaries.
Our results highlight two sources of improvement over source-only training: incorporating target summaries to update a source-trained predictor and obtaining further gains by adapting the strength of the update to account for distribution shift. Under the random shift model, SAGE provides asymptotic risk reduction guarantees and plug-in confidence intervals that account for both sampling and distributional variability, all without requiring individual target observations. The step-size consistency and plug-in inference guarantees use an iterated limit with a growing summary dimension $r$. Empirically, SAGE and its extensions produce improved predictive performance across several real-world datasets. Furthermore, SAGE produces more consistent improvements compared to EB. These improvements are also present in settings where diagnostics suggest that random perturbations do not fully explain all observed shift.

One limitation is that SAGE is a one-step estimator and therefore relies on a local Taylor approximation to the target risk; when the source estimator is far from $\theta_t$ or the shift induces strongly nonlinear changes, this approximation may be less accurate. While the random shift model captures the variance inflated Gaussian patterns seen in several diagnostics, examples such as the S-shaped NHANES diagnostic in Figure~\ref{fig: qqplot_nhanes_lead} suggest that hybrid models combining random perturbations with additional structured shift components may better describe some datasets. Finite-sample risk and coverage guarantees, and guarantees that account for a source-trained encoder in SAGE-NN, remain directions for future work.

\section*{Appendix}
\appendix
    
\section{Details of main results}
\subsection{Notation}\label{app:notation}

Throughout the appendix, let $k\in\{s,t\}$ index the source and target
distributions, and let $g\in\{1,\ldots,K\}$ index the reported
subgroups. Let
$
D^{(k)}
=
\left(X^{(k)},y^{(k)},A^{(k)}\right)
\in\mathcal D
$
denote a draw from $P_k$, where
$X^{(k)}\in\mathbb R^L$, $y^{(k)}\in\mathbb R$, and
$
A^{(k)}
=
\left(A_1^{(k)},\ldots,A_K^{(k)}\right)
\in\{0,1\}^K.$
When the distribution is already specified by an operator such as
$E_k$, $\hat E_k$, $\operatorname{Var}_k$, or
$\operatorname{Cov}_k$, we suppress the distribution superscript on
the argument (e.g. we write $D=(X,y,A)$ for simplification).
\begin{longtable}{p{0.24\textwidth}p{0.71\textwidth}}
\hline
Notation & Meaning \\
\hline
\endfirsthead
\hline
Notation & Meaning \\
\hline
\endhead

$P_s, P_t$ &
Source and target distributions.
\\

$E_k$, $\operatorname{Var}_k$, $\operatorname{Cov}_k$ &
Expectation, variance, and covariance under $P_k$,
for $k\in\{s,t\}$.
\\

$\hat E_k[f]$ &
Empirical average of $f(D_i^{(k)})$ over the sample
from $P_k$.
\\

$n_s,n_t$ &
Source individual-level sample size and target summary
sample size.
\\

$m$ &
Number of regions in the random perturbation construction.
\\

$c_k$ &
Asymptotic sampling ratio,
$m/n_k\to c_k>0$, for $k\in\{s,t\}$.
\\

$p$ &
Dimension of the parameter vector $\theta$.
\\

$H_t$ &
Target Hessian,
$
H_t
\coloneqq
E_t[\nabla_\theta^2\ell(\theta_t;D)].
$
\\

$\widehat H_s$ &
Source empirical Hessian,
$
\widehat H_s
\coloneqq
\hat E_s[
\nabla_\theta^2\ell(\hat\theta_{\mathrm{init}};D)
].
$
\\

$\Psi_t(\theta)$ &
Population target gradient,
$
\Psi_t(\theta)
\coloneqq
E_t[\nabla_\theta\ell(\theta;D)].
$
\\

$A_g^{(k)}$ &
Indicator that a draw from $P_k$ belongs to reported
subgroup $g$.
\\

$\pi_{k,g}$, $\hat\pi_{k,g}$ &
Population and empirical subgroup proportions,
$
\pi_{k,g}=E_k[A_g],
\qquad
\hat\pi_{k,g}=\hat E_k[A_g].
$
\\

$Z_g^{(k)}(D)$ &
Subgroup summary feature vector,
$
Z_g^{(k)}(D)
\coloneqq
A_g^{(k)}
\left(
1,{X^{(k)}}^\intercal,y^{(k)}
\right)^\intercal
\in\mathbb R^{L+2}.
$
\\

$Z^{(k)}(D)$ &
Summary feature vector formed from the stacked subgroup features with $\operatorname{Var}_t\{Z(D)\}\succ0$
\\
$r$ &
$\dim Z(D)$
\\

$d_{k,g}$, $\hat d_{k,g}$ &
Population and empirical subgroup summary vectors,
$
d_{k,g}\coloneqq E_k[Z_g(D)],
\qquad
\hat d_{k,g}\coloneqq\hat E_k[Z_g(D)].
$
\\

$d_k$, $\hat d_k$ &
Population and empirical means of the summary features,
$
d_k\coloneqq E_k[Z(D)],
\qquad
\hat d_k\coloneqq\hat E_k[Z(D)].
$
\\

$\widetilde Z^{(k)}(D)$ &
Regression feature vector including an intercept,
$
\widetilde Z^{(k)}(D)
\coloneqq
\left(
1,Z^{(k)}(D)^\intercal
\right)^\intercal.
$
\\

$C^{(k)}(\theta)$, $\hat C^{(k)}(\theta)$ &
Population and empirical least-squares coefficient
matrices for projecting
$\nabla_\theta\ell(\theta;D)$
onto $\widetilde Z^{(k)}(D)$.
Their intercept vectors are
$\alpha^{(k)}(\theta)$ and $\hat\alpha^{(k)}(\theta)$,
and their slope matrices are
$\beta^{(k)}(\theta)$ and $\hat\beta^{(k)}(\theta)$,
respectively.
\\

$\varepsilon^{(k)}(\theta;D)$ &
Population projection residual,
$
\varepsilon^{(k)}(\theta;D)
\coloneqq
\nabla_\theta\ell(\theta;D)
-
\alpha^{(k)}(\theta)
-
\{\beta^{(k)}(\theta)\}^\intercal Z^{(k)}(D).
$
\\

$\Xi_k(D)$ &
Centered projected score,
$
\Xi_k(D)
\coloneqq
\{\beta^{(t)}(\theta_t)\}^\intercal
\{Z^{(k)}(D)-d_t\},
$
for $k\in\{s,t\}$.
\\

\hline
\end{longtable}
We write $Z_{ig}^{(s)}=Z_g(D_i^{(s)})$ and
$Z_i=Z(D_i^{(s)})$. Superscripts $k\in\{s,t\}$ on these
features indicate the sampling population; the feature maps
and retained coordinates are the same for source and target.
\subsection{Random distribution shift}\label{app: random_shift_construction}
We adapt the construction details from \cite{duc}, with $P_t$ being the fixed distribution, and include them here for completeness.

Let $D=(X,y,A)$ denote a draw from the underlying population distribution $P_t$ on the data space $\mathcal D$. Assume $P_t$ is nonatomic and admits a sequence of finite equal-probability partitions with the approximation property below. For each $m$, let
$I_m=\{I_{1,m},\ldots,I_{m,m}\}$ be a measurable partition of
$\mathcal D$ into equal-probability regions, so that
$P_t(I_{j,m})=1/m$ for $j=1,\ldots,m$. Assume that for every
$h\in L^2(P_t)$,
\begin{equation}
    \left\|
    h(D)
    -
    \sum_{j=1}^m
    \mathbbm{1}_{\{D\in I_{j,m}\}}
    E_t[h(D)\mid D\in I_{j,m}]
    \right\|_2^2
    \to 0.
\end{equation}
This condition ensures that averages over partition cells approximate any square-integrable function as the partition refines.

Let $W_j^{(s)}$, $j=1,\ldots,m$, be i.i.d.\ positive random variables with mean one and variance $\operatorname{Var}(W^{(s)})$. We assume the weights are bounded away from zero: there exists $w_s>0$ such that $W_j^{(s)}\geq w_s$ for all $j$. For each region $I_{j,m}$, the weight $W_j^{(s)}$ determines how much probability mass source $P_s$ assigns to that region relative to $P_t$. Thus, we define the randomly perturbed distribution $P_s$ by setting its Radon-Nikodym derivative with respect to $P_t$,
\begin{equation*}
    \frac{dP_s}{dP_t}(D)
    =
    \frac{W_j^{(s)}}{\frac{1}{m}\sum_{j'=1}^m W_{j'}^{(s)}}
    \qquad
    \text{for }D\in I_{j,m}.
\end{equation*}
As a result, some regions in the data space of $P_s$ have randomly higher or lower probability mass compared to $P_t$.

Conditionally on $W=(W_j^{(s)})_{j=1,\ldots,m}$, $\{D_i^{(s)}\}_{i=1}^{n_s}$ are i.i.d.\ draws from $P_s$. The target sample is i.i.d.\ from $P_t$ and is independent of both $W$ and the source sample conditional on $W$. Let $m/n_k$ converge to some constant $c_k>0$ for $k\in\{s,t\}$. Formally, we consider sequences $n_k(m)$ as $m\rightarrow\infty$ such that $m/n_k(m)\rightarrow c_k>0$, a constant that measures the ratio between sampling and distributional uncertainty. Intuitively, this assumption means that sampling uncertainty (order $1/n_k$) and distributional uncertainty (order $1/m$) are of the same order. In the following, for simplicity, we notationally drop the dependence of $n_k$ on $m$. We chose this asymptotic regime because, in both other regimes (where either sampling uncertainty or distributional uncertainty dominates), the trade-off between individual source data quality and quantity versus that of target summary's is not apparent. See \cite{random_shift1} and \cite{random_shift3} for additional diagnostics to test the random shift assumption in real data.

\subsection{Technical conditions}
\label{app: assumptions}

We use the notation from Appendix~\ref{app:notation}. 

\begin{enumerate}
    \item \textbf{Sampling regime and random shift model.}
    \begin{enumerate}
        \item \label{assump: grow_L} For Lemmas~\ref{lem: sage_gamma} and~\ref{lem: CI}, fix a sequence
of covariate features $X_1,X_2,\ldots$ on $\mathcal D$.
Let $Z_L(D)\in\mathbb R^{r_L}$ be the reduced summary vector
formed from $X_1,\ldots,X_L$, $y$, and the $K$ fixed subgroup
indicators, as in Assumption~\eqref{assump: var(z) pd} below.
As $L\to\infty$, assume $r_L\to\infty$ while keeping
$P_t$, $\ell$, $\Theta\subset\mathbb R^p$, the distribution
of $W^{(s)}$, and $c_s,c_t$ fixed.
The conditions below hold for each fixed $L$.
Limits are taken first in $m$, with $m/n_k\to c_k$,
and then in $L$; dependence on $L$ is otherwise suppressed. Therefore, only the collection of summary features grows;
the prediction model and $\theta_t$ remain fixed.
        \item The target distribution $P_t$ is fixed. The source distribution $P_s$ is generated as a dense random perturbation of $P_t$, as described in Appendix~\ref{app: random_shift_construction}. 
        \item We consider the asymptotic regime where
        $\frac{m}{n_k}\to c_k\in(0,\infty)$ for $k\in \{s,t\}$.
    \end{enumerate}
\item \textbf{Regularity of the target and source risk.}
\begin{enumerate}
    \item The parameter space $\Theta\subset\mathbbm{R}^p$ is compact, and $\theta_t\in\operatorname{int}(\Theta)$ is the unique minimizer of $R_t(\theta)\coloneqq E_t[\ell(\theta;D)]$.
    \item There exist $L_0,L_1\in L^2(P_t)$ such that, for all
    $\theta,\theta'\in\Theta$, $|\ell(\theta;D)|\leq L_0(D)$ and  $|\ell(\theta;D)-\ell(\theta';D)| \leq L_1(D)\|\theta-\theta'\|_2$ \label{assump: lip_loss}
    \item The loss $\ell(\theta;D)$ is three times continuously differentiable
    in $\theta$ on an open neighborhood $\mathcal N\subset\Theta$ of
    $\theta_t$, for $P_t$-almost every $D$.

    \item The target Hessian
    $H_t\coloneqq E_t[\nabla_\theta^2\ell(\theta_t;D)]$
    exists and is positive definite.

    \item \label{assump: sup_op_bd} There exist functions $F_j\in L^2(P_t)$, $j=1,2,3$, such that $\sup_{\theta\in\mathcal N}
        \left\|
        \nabla_\theta^j\ell(\theta;D)
        \right\|_{\mathrm{op}}
        \le F_j(D)$
    \item \label{assump: lip} For $j=1,2,3$, there exist $M_j\in L^2(P_t)$
    such that for all $\theta,\theta'\in\mathcal N$, $\left\|
        \nabla_\theta^j\ell(\theta;D)
        -
        \nabla_\theta^j\ell(\theta';D)
        \right\|_{\mathrm{op}}
        \le
        M_j(D)\|\theta-\theta'\|_2.$
\end{enumerate}
    \item \textbf{Moment and identifiability conditions.}
\begin{enumerate}
    \item
    $
    E_t\left[
    \left\|
    \widetilde Z(D)\widetilde Z(D)^\intercal
    \right\|_F^2
    \right]
    <\infty
   $ and $
    E_t\left[
    \sup_{\theta\in\mathcal N}
    \left\|
    \widetilde Z(D)
    \nabla_\theta\ell(\theta;D)^\intercal
    \right\|_F^2
    \right]
    <\infty.$
    In addition, \(E_t\|Z(D)\|_2^2<\infty\), so the subgroup
    summary means \(d_k\) and \(\hat d_k\) are well defined. 
    \item For each fixed $L$, remove summary coordinates that are
redundant with the remaining coordinates and an intercept,
using the same fixed choice for source and target.
Throughout, $Z(D)\in\mathbb R^{r}$ denotes the resulting
summary feature vector, and all summary means and
projection coefficients use these coordinates.
Assume that
$\operatorname{Var}_t\{Z(D)\}\succ0$.
    \label{assump: var(z) pd}

    \item
    $
    \operatorname{tr}\left\{H_t^{-1}\operatorname{Var}_t\left(\Xi_t(D)\right)
    \right\}>0
    $
    \label{assump: trace}
\end{enumerate}
\end{enumerate}

\subsection{Proof of Theorem~\ref{thm: sage_risk}}\label{app: thm1}

\begin{proof}
\textbf{(Optimal step-size).}
We first derive the step-size $\gamma^*$ that minimizes the mean of the
limiting scaled excess risk variable. A Taylor expansion of the target risk around $\theta_t$ gives
\begin{align}
\label{eqn: risk_taylor}
\begin{split}
E_t[
    \ell(\hat\theta_\mathrm{SAGE}(\gamma);D)
]
&=
E_t[
    \ell(\theta_t;D)
]
+
E_t[
    \nabla_\theta\ell(\theta_t;D)
]^\intercal
\left\{
    \hat\theta_\mathrm{SAGE}(\gamma)-\theta_t
\right\}
\\
&\quad+
\frac{1}{2}
\left\{
    \hat\theta_\mathrm{SAGE}(\gamma)-\theta_t
\right\}^\intercal
H_t
\left\{
    \hat\theta_\mathrm{SAGE}(\gamma)-\theta_t
\right\}
+
r_m.
\end{split}
\end{align}
Since $\theta_t$ is the target risk minimizer, $E_t[ \nabla_\theta\ell(\theta_t;D)]=0.$
Moreover, Corollary~\ref{corr: grad_linear} gives $\hat\theta_\mathrm{SAGE}(\gamma)-\theta_t
=O_p(m^{-1/2}).$ The remainder in Eqn \eqref{eqn: risk_taylor} can be written as
\begin{align*}
r_m
&=
\frac{1}{2}
\left\{
    \hat\theta_\mathrm{SAGE}(\gamma)-\theta_t
\right\}^\intercal
E_t\left[
    \nabla_\theta^2\ell(\widetilde\theta;D)
    -
    \nabla_\theta^2\ell(\theta_t;D)
\right]
\left\{
    \hat\theta_\mathrm{SAGE}(\gamma)-\theta_t
\right\},
\end{align*}
where $\widetilde\theta$ lies between
$\hat\theta_\mathrm{SAGE}(\gamma)$ and $\theta_t$. By
Assumption~\eqref{assump: lip},
\begin{align*}
|r_m|
&\leq
\frac{1}{2}
E_t[M_2(D)]
\|
    \widetilde\theta-\theta_t
\|_2
\|
    \hat\theta_\mathrm{SAGE}(\gamma)-\theta_t
\|_2^2
\\
&\leq
\frac{1}{2}
E_t[M_2(D)]
\|
    \hat\theta_\mathrm{SAGE}(\gamma)-\theta_t
\|_2^3
\\
&=
O_p(m^{-3/2})
\\
&=
o_p(m^{-1}).
\end{align*}
Therefore,
\begin{align}
\label{eqn: scaled_risk_quadratic}
m\left\{
E_t[
    \ell(\hat\theta_\mathrm{SAGE}(\gamma);D)
]
-
E_t[
    \ell(\theta_t;D)
]
\right\}
=
\frac{1}{2}
\left[
    \sqrt m
    \left\{
        \hat\theta_\mathrm{SAGE}(\gamma)-\theta_t
    \right\}
\right]^\intercal
H_t
\left[
    \sqrt m
    \left\{
        \hat\theta_\mathrm{SAGE}(\gamma)-\theta_t
    \right\}
\right]
+
o_p(1).
\end{align}
By Corollary~\ref{corr: grad_linear}, we have the linearization
\begin{align}
\label{eqn: sage_risk_linearization}
\sqrt m
\left\{
    \hat\theta_\mathrm{SAGE}(\gamma)-\theta_t
\right\}=
-H_t^{-1}
\left[
\sqrt m\,
\hat E_s\left[
    \nabla_\theta\ell(\theta_t;D)
    -
    \gamma\Xi_s(D)
\right]
+
\gamma\sqrt m\,
\hat E_t[\Xi_t(D)]
\right]
+
o_p(1).
\end{align}
Substituting Eqn \eqref{eqn: sage_risk_linearization} into
Eqn \eqref{eqn: scaled_risk_quadratic}, we obtain
\begin{align*}
m\left\{
E_t[
    \ell(\hat\theta_\mathrm{SAGE}(\gamma);D)
]
-
E_t[
    \ell(\theta_t;D)
]
\right\}
=
J_m(\gamma)+o_p(1),
\end{align*}
where
\begin{align*}
J_m(\gamma)
&\coloneqq
\frac{m}{2}
\left(
\hat E_s\left[
    \nabla_\theta\ell(\theta_t;D)
    -
    \gamma\Xi_s(D)
\right]
+
\gamma\hat E_t[\Xi_t(D)]
\right)^\intercal
H_t^{-1}
\left(
\hat E_s\left[
    \nabla_\theta\ell(\theta_t;D)
    -
    \gamma\Xi_s(D)
\right]
+
\gamma\hat E_t[\Xi_t(D)]
\right).
\end{align*}

Since $E_t[\nabla_\theta\ell(\theta_t;D)]=0$ and $E_t[\Xi_t(D)]=0,$ Lemma~\ref{lemma: distr_clt} gives
\begin{equation}
\label{eqn: risk_joint_clt}
\sqrt m
\begin{pmatrix}
\gamma\hat E_t[\Xi_t(D)]
\\[1mm]
\hat E_s\left[
    \nabla_\theta\ell(\theta_t;D)
    -
    \gamma\Xi_s(D)
\right]
\end{pmatrix}
\overset{d}{\rightarrow}
N
\left(
0,
\begin{pmatrix}
c_t\gamma^2
\operatorname{Var}_t\{\Xi_t(D)\}
&
0
\\
0
&
(c_s+\operatorname{Var}(W^{(s)}))
\operatorname{Var}_t
\left\{
    \nabla_\theta\ell(\theta_t;D)
    -
    \gamma\Xi_t(D)
\right\}
\end{pmatrix}
\right).
\end{equation}
It follows from the continuous mapping theorem that $J_m(\gamma)
\overset{d}{\rightarrow}
J(\gamma),$ with mean
\begin{align}
\label{eqn: EJ_before_projection}
E[J(\gamma)]
&=
\frac{1}{2}
\operatorname{tr}
\bigg[
H_t^{-1}
\bigg\{
(c_s+\operatorname{Var}(W^{(s)}))
\operatorname{Var}_t
\left[
    \nabla_\theta\ell(\theta_t;D)
    -
    \gamma\Xi_t(D)
\right]
+
c_t\gamma^2
\operatorname{Var}_t\{\Xi_t(D)\}
\bigg\}
\bigg].
\end{align}
Next, denote the population target projection residual,
\[
\varepsilon^{(t)}(\theta_t; D)
\coloneqq
\nabla_\theta\ell(\theta_t;D)
-
\alpha^{(t)}(\theta_t)
-
\{\beta^{(t)}(\theta_t)\}^\intercal Z^{(t)}(D).
\]
Since the projection
contains an intercept, $E_t[\varepsilon^{(t)}(\theta_t; D)]=0$ and $E_t\left[\varepsilon^{(t)}(\theta_t; D)\{Z^{(t)}(D)-d_t\}^\intercal\right]=0.$
Also, since $E_t[\nabla_\theta\ell(\theta_t;D)]=0$, $$\alpha^{(t)}(\theta_t)
+
\{\beta^{(t)}(\theta_t)\}^\intercal d_t
=
0.$$ Therefore,
$$
\nabla_\theta\ell(\theta_t;D)
=
\Xi_t(D)
+
\varepsilon^{(t)}(\theta_t; D).
$$
It follows that
\begin{align}
\label{eqn: projection_covariance_identity}
\begin{split}
\operatorname{Cov}_t
\left\{
    \nabla_\theta\ell(\theta_t;D),
    \Xi_t(D)
\right\}
&=
E_t\left[
    \{\Xi_t(D)+\varepsilon^{(t)}(\theta_t; D)\}
    \Xi_t(D)^\intercal
\right]
\\
&=
\operatorname{Var}_t\{\Xi_t(D)\}
+
E_t\left[
    \varepsilon^{(t)}(\theta_t; D)
    \{Z^{(t)}(D)-d_t\}^\intercal
\right]
\beta^{(t)}(\theta_t)
\\
&=
\operatorname{Var}_t\{\Xi_t(D)\}.
\end{split}
\end{align}
Consequently,
\begin{align*}
\operatorname{Var}_t
\left\{
    \nabla_\theta\ell(\theta_t;D)
    -
    \gamma\Xi_t(D)
\right\}=
\operatorname{Var}_t
\left\{
    \nabla_\theta\ell(\theta_t;D)
\right\}
+
(\gamma^2-2\gamma)
\operatorname{Var}_t\{\Xi_t(D)\}.
\end{align*}
Substituting this identity into Eqn
\eqref{eqn: EJ_before_projection} gives
\begin{align}
\label{eqn: EJ_gamma}
E[J(\gamma)]
&=
\frac{1}{2}
(c_s+\operatorname{Var}(W^{(s)}))
\operatorname{tr}
\left[
H_t^{-1}
\operatorname{Var}_t
\left\{
    \nabla_\theta\ell(\theta_t;D)
\right\}
\right]\\
&\quad+
\left[
\frac{\gamma^2}{2}
(c_s+\operatorname{Var}(W^{(s)})+c_t)
-
\gamma(c_s+\operatorname{Var}(W^{(s)}))
\right]
\operatorname{tr}
\left[
H_t^{-1}
\operatorname{Var}_t\{\Xi_t(D)\}
\right].
\end{align}
Differentiating Eqn \eqref{eqn: EJ_gamma} with respect to $\gamma$ gives
\begin{align*}
\frac{\partial}{\partial\gamma}E[J(\gamma)]
&=
\left[
\gamma(c_s+\operatorname{Var}(W^{(s)})+c_t)
-
(c_s+\operatorname{Var}(W^{(s)}))
\right]
\operatorname{tr}
\left[
H_t^{-1}
\operatorname{Var}_t\{\Xi_t(D)\}
\right].
\end{align*}
By Assumption \eqref{assump: trace}, $\operatorname{tr}
\left[
H_t^{-1}
\operatorname{Var}_t\{\Xi_t(D)\}
\right]
>
0.$ Moreover,
\begin{align*}
\frac{\partial^2}{\partial\gamma^2}E[J(\gamma)]
&=
(c_s+\operatorname{Var}(W^{(s)})+c_t)
\operatorname{tr}
\left[
H_t^{-1}
\operatorname{Var}_t\{\Xi_t(D)\}
\right]
>
0.
\end{align*}
Thus, the unique minimizer is
\begin{equation*}
\gamma^*
=
\frac{c_s+\operatorname{Var}(W^{(s)})}
{c_s+\operatorname{Var}(W^{(s)})+c_t}
\in(0,1).
\end{equation*}

\textbf{(SAGE reduces asymptotic risk mean).}
We next compare the mean of the limiting scaled excess-risk variable
under SAGE with that under the unadjusted source estimator. Setting
$\gamma=0$ in Eqn \eqref{eqn: EJ_gamma} gives
\begin{equation*}
E[J(0)]
=
\frac{1}{2}
(c_s+\operatorname{Var}(W^{(s)}))
\operatorname{tr}
\left[
H_t^{-1}
\operatorname{Var}_t
\left\{
    \nabla_\theta\ell(\theta_t;D)
\right\}
\right].
\end{equation*}
Therefore,
\begin{align}
\label{eqn: risk_mean_difference}
E[J(\gamma)]-E[J(0)]
=
\frac{\gamma}{2}
\left[
\gamma(c_s+\operatorname{Var}(W^{(s)})+c_t)
-
2(c_s+\operatorname{Var}(W^{(s)}))
\right]
\operatorname{tr}
\left[
H_t^{-1}
\operatorname{Var}_t\{\Xi_t(D)\}
\right].
\end{align}
For
\[
0
<
\gamma
<
2\frac{c_s+\operatorname{Var}(W^{(s)})}
{c_s+\operatorname{Var}(W^{(s)})+c_t}
=
2\gamma^*,
\]
we have
\[
\gamma(c_s+\operatorname{Var}(W^{(s)})+c_t)
-
2(c_s+\operatorname{Var}(W^{(s)}))
<
0.
\]
Since the trace term in Eqn \eqref{eqn: risk_mean_difference} is strictly positive, for every $0<\gamma<2\gamma^*,$
\begin{equation*}
E[J(\gamma)]
<
E[J(0)]
\end{equation*}
\end{proof}

\subsection{Proof of Theorem~\ref{thm: var_opt}}\label{app: thm2}

\begin{proof}
We first verify that
$\hat\theta_\mathrm{SAGE}(\gamma^*)$ belongs to
$\mathcal A$. Let \begin{equation}
\label{eqn: Bm_star}
B_m^*
\coloneqq
-\gamma^*
\widehat H_s^{-1}
\left\{
\hat\beta^{(s)}(\hat\theta_\mathrm{init})
\right\}^\intercal.
\end{equation}
By the definition of the summary-adapted gradient, $\widehat\Psi_t
=
\left\{
\hat\beta^{(s)}(\hat\theta_\mathrm{init})
\right\}^\intercal
(\hat d_t-\hat d_s)$, we have
\begin{align*}
\hat\theta_\mathrm{SAGE}(\gamma^*)=
\hat\theta_\mathrm{init}
-
\gamma^*
\widehat H_s^{-1}
\left\{
\hat\beta^{(s)}(\hat\theta_\mathrm{init})
\right\}^\intercal
(\hat d_t-\hat d_s)=
\hat\theta_\mathrm{init}
+
B_m^*(\hat d_t-\hat d_s).
\end{align*}

By Proposition~\ref{prop: hess_consist} and
Lemma~\ref{lemma: grad_consist}, $\widehat H_s^{-1}
=
H_t^{-1}+O_p(m^{-1/2})$ and 
$\hat\beta^{(s)}(\hat\theta_\mathrm{init})
=
\beta^{(t)}(\theta_t)
+
O_p(m^{-1/2})$, respectively. Therefore,  it follows that
\begin{equation}
\label{eqn: Bm_star_limit}
B_m^*
\overset{p}{\rightarrow}
B^*
\coloneqq
-\gamma^*
H_t^{-1}
\left\{
\beta^{(t)}(\theta_t)
\right\}^\intercal.
\end{equation}
Thus,
$\hat\theta_\mathrm{SAGE}(\gamma^*)\in\mathcal A$.

Now consider any
$\hat\theta_{B_m}\in\mathcal A$, where
$B_m\overset{p}{\rightarrow}B$ for a deterministic matrix $B$.
We have
\begin{align}
\sqrt m
(\hat\theta_{B_m}-\theta_t)
&=
\sqrt m
(\hat\theta_\mathrm{init}-\theta_t)
+
B_m\sqrt m(\hat d_t-\hat d_s)
\nonumber\\
&=
\sqrt m
(\hat\theta_\mathrm{init}-\theta_t)
+
B\sqrt m(\hat d_t-\hat d_s)
+
o_p(1),
\label{eqn: linear_adjustment_B}
\end{align}
because $\sqrt m(\hat d_t-\hat d_s)=O_p(1)$ by Lemma~\ref{lemma: distr_clt}. As a result,
\[
(B_m-B)\sqrt m(\hat d_t-\hat d_s)
=
o_p(1).
\]

Applying the linearization in
Proposition~\ref{prop: theta_old_consist} to Eqn \eqref{eqn: linear_adjustment_B} gives
\begin{align}
\sqrt m
(\hat\theta_{B_m}-\theta_t)
&=
-H_t^{-1}
\sqrt m\,
\hat E_s[
\nabla_\theta\ell(\theta_t;D)
]
+
B\sqrt m(\hat d_t-\hat d_s)
+
o_p(1).
\label{eqn: linear_adjustment_expansion}
\end{align}

Since $E_t[
\nabla_\theta\ell(\theta_t;D)
]
=
0,$ Lemma~\ref{lemma: distr_clt}, applied jointly to
$\nabla_\theta\ell(\theta_t;D)$ and $Z(D)$, gives
\begin{align}
\label{eqn: joint_score_summary_clt}
&\sqrt m
\begin{pmatrix}
\hat E_s[
\nabla_\theta\ell(\theta_t;D)
]
\\[1mm]
\hat d_t-\hat d_s
\end{pmatrix}\\
&\overset{d}{\rightarrow}
N
\left(
0,
\begin{pmatrix}
(c_s+\operatorname{Var}(W^{(s)}))
\operatorname{Var}_t
\{
\nabla_\theta\ell(\theta_t;D)
\}
&
-(c_s+\operatorname{Var}(W^{(s)}))
\operatorname{Cov}_t
\{
\nabla_\theta\ell(\theta_t;D),Z(D)
\}
\\[1mm]
-(c_s+\operatorname{Var}(W^{(s)}))
\operatorname{Cov}_t
\{
Z(D),\nabla_\theta\ell(\theta_t;D)
\}
&
(c_s+\operatorname{Var}(W^{(s)})+c_t)
\operatorname{Var}_t\{Z(D)\}
\end{pmatrix}
\right).
\end{align}
 It follows from
\eqref{eqn: linear_adjustment_expansion} and
\eqref{eqn: joint_score_summary_clt} that the limiting covariance of $\hat\theta_{B_m}$, $\Sigma(\hat\theta_{B_m})$ is given by
\begin{align}
\Sigma(\hat\theta_{B_m})
&=
\Sigma(\hat\theta_\mathrm{init})
+
(c_s+\operatorname{Var}(W^{(s)})+c_t)
B
\operatorname{Var}_t\{Z(D)\}
B^\intercal
\nonumber\\
&\quad+
(c_s+\operatorname{Var}(W^{(s)}))
B
\operatorname{Cov}_t
\{
Z(D),\nabla_\theta\ell(\theta_t;D)
\}
(H_t^{-1})^\intercal
\nonumber\\
&\quad+
(c_s+\operatorname{Var}(W^{(s)}))
H_t^{-1}
\operatorname{Cov}_t
\{
\nabla_\theta\ell(\theta_t;D),Z(D)
\}
B^\intercal,
\label{eqn: Sigma_B}
\end{align}
where
$
\Sigma(\hat\theta_\mathrm{init})
=
(c_s+\operatorname{Var}(W^{(s)}))
H_t^{-1}
\operatorname{Var}_t
\{
\nabla_\theta\ell(\theta_t;D)
\}
(H_t^{-1})^\intercal.
$

Because $\beta^{(t)}(\theta_t)$ is the slope coefficient matrix of the population least-squares projection of $\nabla_\theta\ell(\theta_t;D)$ onto $\widetilde{Z}(D)$,
\begin{equation}
\label{eqn: beta_projection_covariance}
\left\{
\beta^{(t)}(\theta_t)
\right\}^\intercal
=
\operatorname{Cov}_t
\{
\nabla_\theta\ell(\theta_t;D),Z(D)
\}
\left[\operatorname{Var}_t\{Z(D)\}\right]^{-1}.
\end{equation}
Using the optimal step-size $
\gamma^*
=
\frac{c_s+\operatorname{Var}(W^{(s)})}
{c_s+\operatorname{Var}(W^{(s)})+c_t}
$ from Theorem \ref{thm: sage_risk}, the limiting matrix in \eqref{eqn: Bm_star_limit} is therefore
\begin{align}
B^*
&=
-
\frac{c_s+\operatorname{Var}(W^{(s)})}
{c_s+\operatorname{Var}(W^{(s)})+c_t}
H_t^{-1}
\left\{
\beta^{(t)}(\theta_t)
\right\}^\intercal
\nonumber\\
&=
-
\frac{c_s+\operatorname{Var}(W^{(s)})}
{c_s+\operatorname{Var}(W^{(s)})+c_t}
H_t^{-1}
\operatorname{Cov}_t
\{
\nabla_\theta\ell(\theta_t;D),Z(D)
\}
\left[\operatorname{Var}_t\{Z(D)\}\right]^{-1}.
\label{eqn: B_star}
\end{align}
In particular,
\begin{equation}
\label{eqn: B_star_identity}
(c_s+\operatorname{Var}(W^{(s)})+c_t)
B^*
\operatorname{Var}_t\{Z(D)\}
=
-
(c_s+\operatorname{Var}(W^{(s)}))
H_t^{-1}
\operatorname{Cov}_t
\{
\nabla_\theta\ell(\theta_t;D),Z(D)
\}.
\end{equation}

Using \eqref{eqn: B_star_identity}, completing the square in
\eqref{eqn: Sigma_B} gives
\begin{align}
\Sigma(\hat\theta_{B_m})
&=
\Sigma(\hat\theta_\mathrm{init})
-
(c_s+\operatorname{Var}(W^{(s)})+c_t)
B^*
\operatorname{Var}_t\{Z(D)\}
(B^*)^\intercal
\nonumber\\
&\quad+
(c_s+\operatorname{Var}(W^{(s)})+c_t)
(B-B^*)
\operatorname{Var}_t\{Z(D)\}
(B-B^*)^\intercal.
\label{eqn: Sigma_B_square}
\end{align}

Since
$B_m^*\overset{p}{\rightarrow}B^*$,
the asymptotic covariance of the SAGE estimator is
\begin{equation}
\label{eqn: Sigma_SAGE_opt}
\Sigma(
\hat\theta_\mathrm{SAGE}(\gamma^*)
)
=
\Sigma(\hat\theta_\mathrm{init})
-
(c_s+\operatorname{Var}(W^{(s)})+c_t)
B^*
\operatorname{Var}_t\{Z(D)\}
(B^*)^\intercal.
\end{equation}
Subtracting \eqref{eqn: Sigma_SAGE_opt} from
\eqref{eqn: Sigma_B_square} yields
\begin{align}
\Sigma(\hat\theta_{B_m})
-
\Sigma(
\hat\theta_\mathrm{SAGE}(\gamma^*)
)=
(c_s+\operatorname{Var}(W^{(s)})+c_t)
(B-B^*)
\operatorname{Var}_t\{Z(D)\}
(B-B^*)^\intercal\succeq
0.
\end{align}
The final inequality follows because
$c_s+\operatorname{Var}(W^{(s)})+c_t>0$ and
$\operatorname{Var}_t\{Z(D)\}$ is positive definite under
Assumption~\eqref{assump: var(z) pd}. 
\end{proof}

\subsection{Proof of Lemma~\ref{lemma: entropy_bal}}\label{app: entropy_bal}

\begin{proof}
We first rewrite the entropy-balancing weights optimization problem as
$$
\mathcal L(w,\lambda_1,\lambda_2)
=
\sum_{i=1}^{n_s}w_i\log(n_sw_i)
+
\lambda_1\left(\sum_{i=1}^{n_s}w_i-1\right)
-
\lambda_2^\intercal
\left(
\sum_{i=1}^{n_s}w_iZ(D_i^{(s)})-\hat d_t
\right).
$$ Differentiating with respect to $w_i$ gives
$$
0=\frac{\partial\mathcal L}{\partial w_i}
=
\log(n_sw_i)+1+\lambda_1
-
\lambda_2^\intercal Z(D_i^{(s)}).
$$
Solving and rearranging, we get 
$$
\hat w_i
=
\frac{\exp(-1-\hat\lambda_1)}{n_s}
\exp\left(
\hat\lambda_2^\intercal Z(D_i^{(s)})
\right),
$$
where $\hat\lambda_1, \hat\lambda_2$ are chosen to satisfy the constraints.
Since the weights must sum to $1$, we have $n_s^{-1}\exp(-1-\hat\lambda_1)=\left\{\sum_{j=1}^{n_s}\exp\left(\hat\lambda_2^\intercal Z(D_j^{(s)})\right)\right\}^{-1}.$
Therefore,
\begin{equation}\label{eqn: weights_eb}
    \hat w_i
=
\frac{
\exp\left(
\hat\lambda_2^\intercal Z(D_i^{(s)})
\right)
}{
\sum_{j=1}^{n_s}
\exp\left(
\hat\lambda_2^\intercal Z(D_j^{(s)})
\right)
}=
\frac{
\exp\left(
\hat\lambda_2^\intercal\{Z(D_i^{(s)})-\hat d_s\}
\right)
}{
\sum_{j=1}^{n_s}
\exp\left(
\hat\lambda_2^\intercal\{Z(D_j^{(s)})-\hat d_s\}
\right)
},
\end{equation}
where the second equality comes from multiplying the numerator and denominator by $\exp(-\hat\lambda_2^\intercal\hat d_s)$. 

We now apply results from Prop \ref{prop: lamdba_op1} to verify an assumption required for the consistency result of $\hat{\theta}_{\rm EB}$ from Prop \ref{prop: eb_consist}: multiplying both sides of Eqn \eqref{eqn: weights_eb} by $n_s$, subtracting 1, and taking the max over $i$ in absolute values gives
\begin{align}\label{eqn: max_nw}
\begin{split}
    \max_i |n_s\hat w_i-1|
&\leq \frac{
2\max_i
\left|
\exp\left(
\hat\lambda_2^\intercal\{Z(D_i^{(s)})-\hat d_s\}
\right)-1
\right|
}{
1+O_p(m^{-1})} \qquad \text{(Eqn \eqref{eqn: denom_lambda})}\\
&=  \frac{o_p(1)}{1+O_p(m^{-1})} \qquad \text{(Eqn \eqref{eqn: max_lambda_Z})}\\
&=o_p(1). 
\end{split}
\end{align}
Therefore, by Prop \ref{prop: eb_consist}, $\hat{\theta}_{\rm EB} \overset{p}{\rightarrow} \theta_t$. Combining with the assumption that $\theta_t\in\operatorname{int}(\Theta)$, $\hat\theta_{\mathrm{EB}}$ is an interior minimizer. 

We can now proceed with a Taylor expansion around $\theta_t$.
\begin{align}\label{eqn: eb_taylor_main}
0
&=
\sum_{i=1}^{n_s}
\hat w_i
\nabla_\theta\ell(\hat\theta_{\mathrm{EB}};D_i^{(s)})
=
\sum_{i=1}^{n_s}
\hat w_i\nabla_\theta\ell(\theta_t;D_i^{(s)})
+
\left[
\sum_{i=1}^{n_s}
\hat w_i\nabla_\theta^2\ell(\theta_t;D_i^{(s)})
\right]
(\hat\theta_{\mathrm{EB}}-\theta_t)
+
\mathcal R_m,
\end{align}
where the remainder term satisfies
\begin{align*}
\|\mathcal R_m\|_2&=\left\|\sum_{i=1}^{n_s}\hat w_i\left\{\nabla_\theta\ell(\hat\theta_{\mathrm{EB}};D_i^{(s)})-\nabla_\theta\ell(\theta_t;D_i^{(s)})-\nabla_\theta^2\ell(\theta_t;D_i^{(s)})(\hat\theta_{\mathrm{EB}}-\theta_t)\right\}\right\|_2\\
&\leq \sum_{i=1}^{n_s}\hat w_i\left\|\nabla_\theta\ell(\hat\theta_{\mathrm{EB}};D_i^{(s)})-\nabla_\theta\ell(\theta_t;D_i^{(s)})-\nabla_\theta^2\ell(\theta_t;D_i^{(s)})(\hat\theta_{\mathrm{EB}}-\theta_t)\right\|_2\\
&\le \frac12 \|\hat\theta_{\mathrm{EB}}-\theta_t\|_2^2
\frac{1}{n_s}\sum_{i=1}^{n_s}n_s \hat w_i M_2(D_i^{(s)})  \qquad \text{(Assumption \eqref{assump: lip})}\\
&\le \frac12 \|\hat\theta_{\mathrm{EB}}-\theta_t\|_2^2 \left\{1+\max_i|n_s\hat w_i-1|\right\} \underbrace{\frac1{n_s}\sum_{i=1}^{n_s}M_2(D_i^{(s)})}_{\overset{p}{\rightarrow} E_t[M_2(D)]}
\\
&=
O_p(\|\hat\theta_{\mathrm{EB}}-\theta_t\|_2^2)(1+o_p(1))O_p(1) \qquad \text{(Prop \ref{prop: eb_consist}, Eqn \eqref{eqn: max_nw})}
\\
&=
o_p(\|\hat\theta_{\mathrm{EB}}-\theta_t\|_2),
\end{align*}
where the last inequality uses that $n_s\hat w_i \leq 1+ |n_s \hat w_i-1|\leq 1+ \max_j|n_s \hat w_j-1|$.

We next evaluate the first term of the Taylor expansion.  Substituting Eqn.~\eqref{eqn: weights_eb} gives
$$
\sum_{i=1}^{n_s}
\hat w_i\nabla_\theta\ell(\theta_t;D_i^{(s)})
=
\frac{
\frac1{n_s}\sum_{i=1}^{n_s}
\exp\left(
\hat\lambda_2^\intercal\{Z(D_i^{(s)})-\hat d_s\}
\right)
\nabla_\theta\ell(\theta_t;D_i^{(s)})
}{
\frac1{n_s}\sum_{i=1}^{n_s}
\exp\left(
\hat\lambda_2^\intercal\{Z(D_i^{(s)})-\hat d_s\}
\right)
}.
$$
Since Eqn \eqref{eqn: denom_lambda} already tells us that the denominator is $1+O_p(m^{-1})$, it remains to evaluate the numerator. A Taylor expansion of the exponential gives
\begin{align*}
\frac1{n_s}\sum_{i=1}^{n_s}
\exp\left(
\hat\lambda_2^\intercal\{Z(D_i^{(s)})-\hat d_s\}
\right)
\nabla_\theta\ell(\theta_t;D_i^{(s)})
&=
\frac1{n_s}\sum_{i=1}^{n_s}
\left[
1+
\hat\lambda_2^\intercal\{Z(D_i^{(s)})-\hat d_s\}
+r_{i,m}
\right]
\nabla_\theta\ell(\theta_t;D_i^{(s)})
\\
&=
\frac1{n_s}\sum_{i=1}^{n_s}
\nabla_\theta\ell(\theta_t;D_i^{(s)})
+
\left[
\frac1{n_s}\sum_{i=1}^{n_s}
\nabla_\theta\ell(\theta_t;D_i^{(s)})
\{Z(D_i^{(s)})-\hat d_s\}^\intercal
\right]\hat\lambda_2
\\
&\qquad+
\frac1{n_s}\sum_{i=1}^{n_s}
r_{i,m}\nabla_\theta\ell(\theta_t;D_i^{(s)}).
\end{align*}
Applying the inequality $|e^x-1-x|\le \frac12e^{|x|}x^2$ to the remainder term, 
\begin{align*}
\max_i|r_{i,m}|
&\le
\frac12
\exp\left\{
\max_i\left|
\hat\lambda_2^\intercal\{Z(D_i^{(s)})-\hat d_s\}
\right|
\right\}
\left[
\max_i\left|
\hat\lambda_2^\intercal\{Z(D_i^{(s)})-\hat d_s\}
\right|
\right]^2
\\
&=
O_p(1)
O_p\left(\frac{(\log m)^2}{m}\right)
\\
&=
O_p\left(\frac{(\log m)^2}{m}\right).
\end{align*}
Therefore, 
\begin{align*}
\left\|
\frac1{n_s}\sum_{i=1}^{n_s}
r_{i,m}\nabla_\theta\ell(\theta_t;D_i^{(s)})
\right\|_2
&\le
\max_i|r_{i,m}|
\frac1{n_s}\sum_{i=1}^{n_s}
\|\nabla_\theta\ell(\theta_t;D_i^{(s)})\|_2
\\
&\le
\max_i|r_{i,m}|\frac1{n_s}\sum_{i=1}^{n_s}F_1(D_i^{(s)}) \qquad \text{(Assump.\eqref{assump: sup_op_bd}, Lemma \ref{lemma: distr_clt})}\\
&=O_p\left(\frac{(\log m)^2}{m}\right) O_p(1)\\
&=o_p(m^{-1/2}).
\end{align*}

Combining the above with results of Lemma \ref{lemma: distr_clt}, $\frac1{n_s}\sum_{i=1}^{n_s}
\nabla_\theta\ell(\theta_t;D_i^{(s)}) = E_t[\nabla_\theta\ell(\theta_t;D)] + O_p(m^{-1/2})=O_p(m^{-1/2})$ and $\frac1{n_s}\sum_{i=1}^{n_s} \nabla_\theta\ell(\theta_t;D_i^{(s)})\{Z(D_i^{(s)})-\hat d_s\}^\intercal=\operatorname{Cov}_t\{\nabla_\theta\ell(\theta_t;D),Z(D)\}+O_p(m^{-1/2}),$ we get
\begin{align*}
\sum_{i=1}^{n_s}
\hat w_i\nabla_\theta\ell(\theta_t;D_i^{(s)})
&=
\frac{
\frac1{n_s}\sum_{i=1}^{n_s}
\nabla_\theta\ell(\theta_t;D_i^{(s)})
+
\left[
\frac1{n_s}\sum_{i=1}^{n_s}
\nabla_\theta\ell(\theta_t;D_i^{(s)})
\{Z(D_i^{(s)})-\hat d_s\}^\intercal
\right]\hat\lambda_2
+
o_p(m^{-1/2})
}{
1+O_p(m^{-1})
}\\
&=\frac1{n_s}\sum_{i=1}^{n_s}
\nabla_\theta\ell(\theta_t;D_i^{(s)})
+\left[\operatorname{Cov}_t\{\nabla_\theta\ell(\theta_t;D),Z(D)\}\right]\hat\lambda_2 +o_p(m^{-1/2})
\end{align*}

We next evaluate the second term in the Taylor expansion of Eqn \eqref{eqn: eb_taylor_main}:
\begin{align*}
\left[
\sum_{i=1}^{n_s}
\hat w_i\nabla_\theta^2\ell(\theta_t;D_i^{(s)})
\right]
(\hat\theta_{\mathrm{EB}}-\theta_t)
&=
\left[
\frac1{n_s}\sum_{i=1}^{n_s} n_s\hat w_i \nabla_\theta^2\ell(\theta_t;D_i^{(s)})
\right]
(\hat\theta_{\mathrm{EB}}-\theta_t)
\\
&=
\left[
\frac1{n_s}\sum_{i=1}^{n_s}
\{1+(n_s\hat w_i-1)\}
\nabla_\theta^2\ell(\theta_t;D_i^{(s)})
\right]
(\hat\theta_{\mathrm{EB}}-\theta_t)
\\
&=
\left[
\frac1{n_s}\sum_{i=1}^{n_s}
\nabla_\theta^2\ell(\theta_t;D_i^{(s)})
\right]
(\hat\theta_{\mathrm{EB}}-\theta_t) +
\left[
\frac1{n_s}\sum_{i=1}^{n_s}
(n_s\hat w_i-1)
\nabla_\theta^2\ell(\theta_t;D_i^{(s)})
\right]
(\hat\theta_{\mathrm{EB}}-\theta_t).
\end{align*}

By Assumption \eqref{assump: sup_op_bd}, Eqn \eqref{eqn: max_nw} and Lemma \ref{lemma: distr_clt},
\begin{align*}
\left\|
\left[
\frac1{n_s}\sum_{i=1}^{n_s}
(n_s\hat w_i-1)
\nabla_\theta^2\ell(\theta_t;D_i^{(s)})
\right]
(\hat\theta_{\mathrm{EB}}-\theta_t)
\right\|_2
&\le
\left[
\frac1{n_s}\sum_{i=1}^{n_s}
|n_s\hat w_i-1|
\|\nabla_\theta^2\ell(\theta_t;D_i^{(s)})\|_{\mathrm{op}}
\right]
\|\hat\theta_{\mathrm{EB}}-\theta_t\|_2
\\
&\le
\max_i|n_s\hat w_i-1|
\frac1{n_s}\sum_{i=1}^{n_s}F_2(D_i^{(s)})
\|\hat\theta_{\mathrm{EB}}-\theta_t\|_2
\\
&=
o_p(1)O_p(1)\|\hat\theta_{\mathrm{EB}}-\theta_t\|_2
\\
&=
o_p(\|\hat\theta_{\mathrm{EB}}-\theta_t\|_2).
\end{align*}
Also, by Lemma \ref{lemma: distr_clt}, $\frac1{n_s}\sum_{i=1}^{n_s}
\nabla_\theta^2\ell(\theta_t;D_i^{(s)})=E_t[\nabla_\theta^2\ell(\theta_t;D)]+O_p(m^{-1/2}).$
Therefore,
\begin{align*}
\left[
\sum_{i=1}^{n_s}
\hat w_i\nabla_\theta^2\ell(\theta_t;D_i^{(s)})
\right]
(\hat\theta_{\mathrm{EB}}-\theta_t)
&=
\bigg( \underbrace{E_t[\nabla_\theta^2\ell(\theta_t;D)]}_{H_t} + O_p(m^{-1/2}) \bigg)
\underbrace{(\hat\theta_{\mathrm{EB}}-\theta_t)}_{o_p(1)}
+
o_p(\|\hat\theta_{\mathrm{EB}}-\theta_t\|_2) \\
&=H_t(\hat\theta_{\mathrm{EB}}-\theta_t)+o_p(\|\hat\theta_{\mathrm{EB}}-\theta_t\|_2) +o_p(m^{-1/2})
\end{align*}
Combining the two pieces, we have
\begin{align*}
0=
\sum_{i=1}^{n_s}
\hat w_i
\nabla_\theta\ell(\hat\theta_{\mathrm{EB}};D_i^{(s)})&=\frac1{n_s}\sum_{i=1}^{n_s}
\nabla_\theta\ell(\theta_t;D_i^{(s)})\\
&\quad +\left[
\operatorname{Cov}_t\{
\nabla_\theta\ell(\theta_t;D),Z(D)\}
\right]\hat\lambda_2+
H_t(\hat\theta_{\mathrm{EB}}-\theta_t)
+
o_p(m^{-1/2})
+
o_p(\|\hat\theta_{\mathrm{EB}}-\theta_t\|_2).
\end{align*}
Rearranging and using that $\hat\lambda_2=O_p(m^{-1/2})$ by Prop \ref{prop: lamdba_op1},
\begin{align*}
\hat\theta_{\mathrm{EB}}-\theta_t
&=
-H_t^{-1}
\bigg\{
\frac1{n_s}\sum_{i=1}^{n_s}
\nabla_\theta\ell(\theta_t;D_i^{(s)})+
\operatorname{Cov}_t\{
\nabla_\theta\ell(\theta_t;D),Z(D)\}
\hat\lambda_2
+
o_p(m^{-1/2})
\bigg\}
+
o_p(\|\hat\theta_{\mathrm{EB}}-\theta_t\|_2)
\\
&=O_p(m^{-1/2})+o_p(\|\hat\theta_{\mathrm{EB}}-\theta_t\|_2).
\end{align*}
Taking norms, there is a nonnegative
$\varepsilon_m=o_p(1)$ such that
\begin{align*}
    &\|\hat\theta_{\mathrm{EB}}-\theta_t\|_2 \le O_p(m^{-1/2})+ \varepsilon_m\|\hat\theta_{\mathrm{EB}}-\theta_t\|_2\\
    &\Leftrightarrow (1-\varepsilon_m) \|\hat\theta_{\mathrm{EB}}-\theta_t\|_2\le O_p(m^{-1/2})
\end{align*}
Since $P(\varepsilon_m\le1/2)\to1$, with probability tending to one, as $m\rightarrow \infty$
$$
\frac12\|\hat\theta_{\mathrm{EB}}-\theta_t\|_2\le(1-\varepsilon_m)\|\hat\theta_{\mathrm{EB}}-\theta_t\|_2
\le O_p(m^{-1/2}).
$$
Therefore,
\begin{equation}\label{eqn: theta_eb_rate}
\hat\theta_{\mathrm{EB}}-\theta_t
=
O_p(m^{-1/2}),
\end{equation}
and
\begin{align*}
\hat\theta_{\mathrm{EB}}-\theta_t
&=
-H_t^{-1} \bigg(\frac1{n_s}\sum_{i=1}^{n_s}
\nabla_\theta\ell(\theta_t;D_i^{(s)})+\operatorname{Cov}_t\{\nabla_\theta\ell(\theta_t;D),Z(D)\}\hat\lambda_2\bigg) +
o_p(m^{-1/2}).
\end{align*}
Finally, substituting the linearization of $\hat\lambda_2=\left[\widehat{\operatorname{Var}}_s\{Z(D)\}
\right]^{-1}(\hat d_t-\hat d_s)+o_p(m^{-1/2})$,
\begin{align}\label{eqn: eb_linearization}
\hat\theta_{\mathrm{EB}}-\theta_t
&=
-H_t^{-1}\bigg(\frac1{n_s}\sum_{i=1}^{n_s}
\nabla_\theta\ell(\theta_t;D_i^{(s)})+
\operatorname{Cov}_t\{
\nabla_\theta\ell(\theta_t;D),Z(D)\}
\left[\operatorname{Var}_t\{Z(D)\}\right]^{-1}
(\hat d_t-\hat d_s)\bigg) + o_p(m^{-1/2}).
\end{align}

We now compare this expansion with SAGE at $\gamma=1$. First, we apply the definition of SAGE and the linearization of $\hat\theta_{\rm init}-\theta_t$ from Prop \ref{prop: theta_old_consist} to get
$$\hat\theta_{\mathrm{SAGE}}(1)-\theta_t = \hat\theta_{\mathrm{init}}-\widehat H_s^{-1}\widehat\Psi_t
-\theta_t =-H_t^{-1}\frac1{n_s}\sum_{i=1}^{n_s}\nabla_\theta\ell(\theta_t;D_i^{(s)})-
\widehat H_s^{-1}\widehat\Psi_t+o_p(m^{-1/2}).$$ Then, by Lemma \ref{lemma: grad_consist} and consistency of the sample Hessian from Prop \ref{prop: hess_consist} and OLS sample coefficients,
\begin{align*}
\hat\theta_{\mathrm{SAGE}}(1)-\theta_t
&=
-H_t^{-1}
\frac1{n_s}\sum_{i=1}^{n_s}
\nabla_\theta\ell(\theta_t;D_i^{(s)})
-
\widehat H_s^{-1}
\left\{
\hat\beta^{(s)}(\hat\theta_{\mathrm{init}})
\right\}^\intercal
(\hat d_t-\hat d_s)
+
o_p(m^{-1/2})
\\
&=
-H_t^{-1}
\frac1{n_s}\sum_{i=1}^{n_s}
\nabla_\theta\ell(\theta_t;D_i^{(s)})-
\{H_t^{-1}+o_p(1)\}
\left[
\{\beta^{(t)}(\theta_t)\}^\intercal+o_p(1)
\right]
(\hat d_t-\hat d_s)
+
o_p(m^{-1/2})
\\
&=
-H_t^{-1}
\frac1{n_s}\sum_{i=1}^{n_s}
\nabla_\theta\ell(\theta_t;D_i^{(s)})
-
H_t^{-1}
\{\beta^{(t)}(\theta_t)\}^\intercal
(\hat d_t-\hat d_s)
+
o_p(m^{-1/2})
\\
&=
-H_t^{-1}
\frac1{n_s}\sum_{i=1}^{n_s}
\nabla_\theta\ell(\theta_t;D_i^{(s)})-
H_t^{-1}
\operatorname{Cov}_t\{
\nabla_\theta\ell(\theta_t;D),Z(D)\}
\left[\operatorname{Var}_t\{Z(D)\}\right]^{-1}
(\hat d_t-\hat d_s)
+
o_p(m^{-1/2})
\\
&=
-H_t^{-1}\Bigg(
\frac1{n_s}\sum_{i=1}^{n_s}
\nabla_\theta\ell(\theta_t;D_i^{(s)})+
\operatorname{Cov}_t\{
\nabla_\theta\ell(\theta_t;D),Z(D)\}
\left[\operatorname{Var}_t\{Z(D)\}\right]^{-1}
(\hat d_t-\hat d_s)
\Bigg)
+
o_p(m^{-1/2}).
\end{align*}

Comparing with Eqn \ref{eqn: eb_linearization}, we get $\hat\theta_{\mathrm{EB}}-\hat\theta_{\mathrm{SAGE}}(1)=o_p(m^{-1/2})$, i.e.
$$\sqrt m\left\{\hat\theta_{\mathrm{EB}}-\hat\theta_{\mathrm{SAGE}}(1)\right\} \overset p\rightarrow0.$$
Therefore, the centered and scaled estimators have the same limiting distribution i.e.
$$\Sigma(\hat\theta_{\mathrm{EB}})=\Sigma(\hat\theta_{\mathrm{SAGE}}(1)).$$ Applying the asymptotic covariance optimality of SAGE at $\gamma^*$ from Theorem \ref{thm: var_opt} therefore gives
$$
\Sigma(\hat\theta_{\mathrm{EB}})
-
\Sigma(\hat\theta_{\mathrm{SAGE}}(\gamma^*))
=
\Sigma(\hat\theta_{\mathrm{SAGE}}(1))
-
\Sigma(\hat\theta_{\mathrm{SAGE}}(\gamma^*))
\succeq0.
$$
\end{proof}

\subsection{Proof of Lemma~\ref{lem: sage_gamma}}\label{app: gamma_consist}
\begin{proof}
Fix $L$, and let $r=\dim Z(D)$ denote the number of summary
coordinates retained under Assumption \eqref{assump: var(z) pd}.
It suffices to consider $L$ sufficiently large that
$r\geq1$. Let $Z(D)\in\mathbb R^r$ denote the summary
vector, and let $\hat d_s$ and $\hat d_t$ denote the source and target summary vectors. The fourth moment assumption gives $E_t\|\widetilde Z(D)\|_2^4<\infty$.
Since each summary coordinate is a fixed linear combination
of the regressors in $\widetilde Z(D)$, it follows that
$E_t\|Z(D)\|_2^4<\infty$ for each fixed $L$. Thus both the coordinates of $Z(D)$ and the entries
of $Z(D)Z(D)^\intercal$ are square-integrable and we can apply the distributional CLT. 

Therefore, by Lemma \ref{lemma: distr_clt},
$\hat E_s[Z(D)]
\overset{p}{\rightarrow}E_t[Z(D)]$ and $
\hat E_s[Z(D)Z(D)^\intercal]
\overset{p}{\rightarrow}
E_t[Z(D)Z(D)^\intercal]$.
Consequently,
$
\widehat S_s
\coloneqq
\hat E_s[Z(D)Z(D)^\intercal]
-
\hat d_s\hat d_s^\intercal
\overset{p}{\rightarrow}
\operatorname{Var}_t(Z(D)).
$
Since the population covariance is positive definite,
$\widehat S_s$ is invertible with probability
tending to one, and continuity of matrix inversion gives
$\widehat S_s^{-1}
\overset{p}{\rightarrow}
\left\{\operatorname{Var}_t(Z(D))\right\}^{-1}.$

By Lemma~\ref{lemma: distr_clt}, applied jointly to
the summary coordinates,
\[
m^{1/2}(\hat d_s-\hat d_t)
\overset{d}{\rightarrow}
N\left(
0,
(c_s+c_t+\operatorname{Var}(W^{(s)}))\operatorname{Var}_t\{Z(D)\}
\right).
\]
Since $m/n_s+m/n_t\to c_s+c_t$, Slutsky's theorem yields
\[
\frac{\hat d_s-\hat d_t}{\sqrt{1/n_s+1/n_t}}
\overset{d}{\rightarrow}
N\left(
0,
\left(1+\frac{\operatorname{Var}(W^{(s)})}{c_s+c_t}\right)
\operatorname{Var}_t\{Z(D)\}
\right).
\]

The norm of the whitened standardized summary difference computed in Step \ref{alg: step_z} of Algorithm \ref{alg: sage_gamma} can be written as 
$$
\|z\|_2^2
=
\frac{
(\hat d_s-\hat d_t)^\intercal
\widehat S_s^{-1}
(\hat d_s-\hat d_t)
}{
1/n_s+1/n_t
}.
$$
The continuous mapping theorem and Slutsky's theorem
therefore give, for fixed $L$ and $m\to\infty$,
\[
\frac{\|z\|_2^2}{r}
\overset{d}{\rightarrow}
\left(1+\frac{\operatorname{Var}(W^{(s)})}{c_s+c_t}\right)
\frac{\chi^2_r}{r},
\]
where $\chi^2_r$ denotes a chi-square random variable
with $r$ degrees of freedom.
It follows that for $m\rightarrow \infty$
$$
m\widehat{\delta^2_{\rm dist}}
=
\left(\frac{m}{n_s}+\frac{m}{n_t}\right)
\left(\frac{\|z\|_2^2}{r}-1\right)_+
\overset{d}{\rightarrow}
\left[
(c_s+c_t+\operatorname{Var}(W^{(s)}))\frac{\chi^2_r}{r}
-(c_s+c_t)
\right]_+.
$$

We now take the outer limit $L\to\infty$.
By assumption, $r\to\infty$, and
$$
E\left[\frac{\chi^2_r}{r}\right]=1,
\qquad
\operatorname{Var}\left(\frac{\chi^2_r}{r}\right)
=\frac{2}{r}\rightarrow0.
$$
Hence $\chi^2_r/r\overset{p}{\to}1$, and
continuity of the positive part function gives
\[
\left[
(c_s+c_t+\operatorname{Var}(W^{(s)}))\frac{\chi^2_r}{r}
-(c_s+c_t)
\right]_+
\overset{p}{\rightarrow}\operatorname{Var}(W^{(s)}).
\]
Combining the two limits and applying Portmanteau yields, for every $\varepsilon>0$,
$$
\lim_{L\to\infty}\limsup_{m\to\infty}
P\left(
\left|m\hat{\delta}^2_{\rm dist}-\operatorname{Var}(W^{(s)})\right|
>\varepsilon
\right)=0.
$$

Finally, for
$$
\hat\gamma
=
\frac{
m/n_s+m\hat{\delta}^2_{\rm dist}
}{
m/n_s+m\hat{\delta}^2_{\rm dist}+m/n_t
},
$$
applying the continuous mapping theorem gives
$$
\lim_{L\to\infty}\limsup_{m\to\infty}
P\left(
\left|
\hat\gamma-
\frac{c_s+\operatorname{Var}(W^{(s)})}{c_s+c_t+\operatorname{Var}(W^{(s)})}
\right|>\varepsilon
\right)=0.
$$
\end{proof}

\subsection{Proof of Corollary~\ref{cor: sage_normality}}\label{app: asymp_norm}

\begin{proof}
Throughout the proof, the influence functions and their plug-in estimators are evaluated at $\gamma$,
with this dependence suppressed (i.e $\Phi_k$ and $\widehat\Phi_k$ represent $\Phi_k(\gamma)$ and $\widehat\Phi_k(\gamma)$ respectively.). 

By Corollary~\ref{corr: grad_linear}, we have the linearization
\begin{align}
\label{eqn: sage_ci_linearization}
\sqrt m
(\hat\theta_\mathrm{SAGE}(\gamma)-\theta_t)
=
-H_t^{-1}
\left[
\sqrt m\,
\hat E_s\left[
\nabla_\theta\ell(\theta_t;D)
-
\gamma\Xi_s(D)
\right]
+
\gamma\sqrt m\,
\hat E_t[\Xi_t(D)]
\right]
+
o_p(1).
\end{align}
Define the $p$-dimensional source and target influence functions
\begin{align*}
\Phi_s(D) \coloneqq
-H_t^{-1}
\left\{
\nabla_\theta\ell(\theta_t;D)
-
\gamma\Xi_s(D)
\right\}, \qquad
\Phi_t(D) \coloneqq
\gamma H_t^{-1}\Xi_t(D).
\end{align*}
Since $\theta_t$ is the target risk minimizer,
$E_t[
\nabla_\theta\ell(\theta_t;D)
]
=
0$.
Moreover, by the definition of $\Xi_k(D)$,
$E_t[\Xi_s(D)]
=
E_t[\Xi_t(D)]
=
0$.
It follows that $E_t[\Phi_s(D)]=E_t[\Phi_t(D)]=0.$ We have that
\begin{align}
\label{eqn: sage_influence_representation}
\begin{split}
\sqrt m
(\hat\theta_\mathrm{SAGE}(\gamma)-\theta_t)
&=
\sqrt m
\left[
\left\{
\hat E_s[\Phi_s(D)]-E_t[\Phi_s(D)]
\right\}
-
\left\{
\hat E_t[\Phi_t(D)]-E_t[\Phi_t(D)]
\right\}
\right]
+
o_p(1)
\\
&=
\sqrt m
\left[
\big\{
\hat E_s[\Phi_s(D)]-\hat E_t[\Phi_t(D)]
\big\}
-
\big\{
E_t[\Phi_s(D)]-E_t[\Phi_t(D)]
\big\}
\right]
+
o_p(1).
\end{split}
\end{align}

By Lemma~\ref{lemma: distr_clt}, for each fixed $L$,
\begin{equation}
\label{eqn: joint_Phi_clt}
\sqrt m
\begin{pmatrix}
\hat E_t[\Phi_t(D)]-E_t[\Phi_t(D)]
\\
\hat E_s[\Phi_s(D)]-E_t[\Phi_s(D)]
\end{pmatrix}
\overset{d}{\rightarrow}
N
\left(
0,
\begin{pmatrix}
c_t\operatorname{Var}_t\{\Phi_t(D)\}
&
0_{p\times p}
\\
0_{p\times p}
&
(c_s+\operatorname{Var}(W^{(s)}))
\operatorname{Var}_t\{\Phi_s(D)\}
\end{pmatrix}
\right).
\end{equation}
Applying the continuous mapping theorem and Eqn \eqref{eqn: sage_influence_representation} and setting $\gamma=\gamma^*$ gives
\begin{equation}
\label{eqn: sage_ci_clt}
\sqrt m
(\hat\theta_\mathrm{SAGE}(\gamma^*)-\theta_t)
\overset{d}{\rightarrow}
N(0,\Sigma_{\mathrm{SAGE}}(\gamma^*)),
\end{equation}
where $\Sigma_{\mathrm{SAGE}}(\gamma^*)
\coloneqq
c_t\operatorname{Var}_t\{\Phi_t(D)\}
+
(c_s+\operatorname{Var}(W^{(s)}))
\operatorname{Var}_t\{\Phi_s(D)\}.$
\end{proof}

\subsection{Proof of Lemma~\ref{lem: CI}}\label{app: CI}

\begin{proof}
Since this proof takes limits in $L$, we write the dependence on $L$ explicitly where needed, i.e., $\Phi_{k,L}$ and $\Sigma_{\mathrm{SAGE},L}(\gamma^*)$ denote $\Phi_k$ and $\Sigma_{\mathrm{SAGE}}(\gamma^*)$ from Corollary~\ref{cor: sage_normality} for a given $L$.
Let $\widehat\Phi_s$ and $\widehat\Phi_t$ denote the plug-in
estimators of $\Phi_s$ and $\Phi_t$, formed by replacing the population
quantities with their empirical analogues. For each fixed $L$, consistency of the plug-in quantities,
together with the continuity and moment conditions in
Appendix~\ref{app: assumptions}, gives for $k\in\{s,t\}$,
$
\hat E_s\left[
\|\widehat\Phi_k(D)-\Phi_k(D)\|_2^2
\right]
\overset{p}{\rightarrow}0.
$
Under the fourth moment assumption, both $\Phi_k(D)$ and the entries of
$\Phi_k(D)\Phi_k(D)^\intercal$ are square-integrable.

Thus, applying Lemma~\ref{lemma: distr_clt} and using Cauchy-Schwarz gives
\begin{align*}
\hat E_s[\widehat\Phi_k(D)]
&=
\hat E_s[\Phi_k(D)]+o_p(1)
\overset{p}{\rightarrow}
E_t[\Phi_k(D)],
\\
\hat E_s[
\widehat\Phi_k(D)\widehat\Phi_k(D)^\intercal
]
&=
\hat E_s[
\Phi_k(D)\Phi_k(D)^\intercal
]+o_p(1)
\overset{p}{\rightarrow}
E_t[
\Phi_k(D)\Phi_k(D)^\intercal
].
\end{align*}
Following the empirical covariance estimation of
\cite{random_shift1,random_shift2}, let
\begin{align*}
\widehat{\operatorname{Var}}_s
\left\{
\widehat\Phi_k(D_i^{(s)})
\right\}
\coloneqq{}&
\hat E_s[
\widehat\Phi_k(D)\widehat\Phi_k(D)^\intercal
]
-
\hat E_s[\widehat\Phi_k(D)]
\hat E_s[\widehat\Phi_k(D)]^\intercal.
\end{align*}
Therefore,
\begin{equation}\label{eqn: Phi_var_consistency}
\widehat{\operatorname{Var}}_s
\left\{
\widehat\Phi_s(D_i^{(s)})
\right\}
\overset{p}{\rightarrow}
\operatorname{Var}_t\{\Phi_s(D)\}, \qquad
\widehat{\operatorname{Var}}_s
\left\{
\widehat\Phi_t(D_i^{(s)})
\right\}
\overset{p}{\rightarrow}
\operatorname{Var}_t\{\Phi_t(D)\}.
\end{equation}
Let $\hat{\delta}^2_{\rm dist}$ be as defined in
Eqn \eqref{eqn: delta_hat}. Define the covariance estimator
at $\gamma^*$ by
\begin{equation*}
\widehat{V}_{\rm SAGE}(\gamma^*)
\coloneqq
\frac{1}{n_t}
\widehat{\operatorname{Var}}_s
\left\{
\widehat\Phi_t(D_i^{(s)})
\right\}
+
\left(
\frac{1}{n_s}+\hat{\delta}^2_{\rm dist}
\right)
\widehat{\operatorname{Var}}_s
\left\{
\widehat\Phi_s(D_i^{(s)})
\right\}.
\end{equation*}
We now establish coordinatewise variance consistency at $\gamma^*$. For the remainder of the proof, we display the dependency on $L$ , which was previously suppressed for notational simplicity. For a fixed coordinate $j$, let $e_j$ denote the corresponding
standard basis vector. Assume that
$e_j^\intercal\Sigma_{\mathrm{SAGE},L}(\gamma^*)e_j>0$
for all sufficiently large $L$ to write
\begin{align}\label{eqn: coordinate_var_gamma_star}
m e_j^\intercal\widehat{V}_{\rm SAGE}(\gamma^*)e_j
&=
\frac{m}{n_t}
\operatorname{Var}_t
\{e_j^\intercal\Phi_{t,L}(D)\}+
\left(\frac{m}{n_s}+m\hat\delta^2_{\rm dist}\right)
\operatorname{Var}_t
\{e_j^\intercal\Phi_{s,L}(D)\}
+o_p(1)\\
&=
c_t\operatorname{Var}_t
\{e_j^\intercal\Phi_{t,L}(D)\}+
(c_s+m\hat\delta^2_{\rm dist})
\operatorname{Var}_t
\{e_j^\intercal\Phi_{s,L}(D)\}
+o_p(1).
\end{align}
By definition,
$e_j^\intercal\Sigma_{\mathrm{SAGE},L}(\gamma^*)e_j
=c_t\operatorname{Var}_t
\{e_j^\intercal\Phi_{t,L}(D)\}+
(c_s+\operatorname{Var}(W^{(s)}))
\operatorname{Var}_t
\{e_j^\intercal\Phi_{s,L}(D)\},$ so subtracting this from Eqn \eqref{eqn: coordinate_var_gamma_star} gives
\begin{align*}
m e_j^\intercal\widehat{V}_{\rm SAGE}(\gamma^*)e_j
-
e_j^\intercal\Sigma_{\mathrm{SAGE},L}(\gamma^*)e_j=
(m\hat\delta^2_{\rm dist}-\operatorname{Var}(W^{(s)}))
\operatorname{Var}_t
\{e_j^\intercal\Phi_{s,L}(D)\}
+o_p(1).
\end{align*}
Dividing by $e_j^\intercal\Sigma_{\mathrm{SAGE},L}(\gamma^*)e_j>0$ gives
$$
\frac{
m e_j^\intercal \widehat{V}_{\rm SAGE}(\gamma^*)e_j
}{
e_j^\intercal\Sigma_{\mathrm{SAGE},L}(\gamma^*)e_j
}-1
=
\frac{
\operatorname{Var}_t\{e_j^\intercal\Phi_{s,L}(D)\}
}{
e_j^\intercal\Sigma_{\mathrm{SAGE},L}(\gamma^*)e_j
}
(m\hat\delta^2_{\rm dist}-\operatorname{Var}(W^{(s)}))
+o_p(1).
$$
Since
\[
0\leq
\frac{
\operatorname{Var}_t\{e_j^\intercal\Phi_{s,L}(D)\}
}{
e_j^\intercal\Sigma_{\mathrm{SAGE},L}(\gamma^*)e_j
}
\leq
\frac{1}{c_s+\operatorname{Var}(W^{(s)})},
\]
by the triangle inequality,
\[
\left|
\frac{
m e_j^\intercal\widehat{V}_{\rm SAGE}(\gamma^*)e_j
}{
e_j^\intercal\Sigma_{\mathrm{SAGE},L}(\gamma^*)e_j
}-1
\right|
\leq
\frac{
|m\hat{\delta}^2_{\rm dist}-\operatorname{Var}(W^{(s)})|
}{
c_s+\operatorname{Var}(W^{(s)})
}
+o_p(1).
\]
Therefore, for every $\varepsilon>0$, the union bound gives
$$
P\left(
\left|
\frac{
m e_j^\intercal\widehat{V}_{\rm SAGE}(\gamma^*)e_j
}{
e_j^\intercal\Sigma_{\mathrm{SAGE},L}(\gamma^*)e_j
}-1
\right|>\varepsilon
\right)
\leq
P\left(
|m\hat{\delta}^2_{\rm dist}-\operatorname{Var}(W^{(s)})|
>
\frac{\varepsilon(c_s+\operatorname{Var}(W^{(s)}))}{2}
\right)
+o(1).
$$
Lemma~\ref{lem: sage_gamma} gives
\begin{equation}
\label{eqn: Sigma_SAGE_consistency}
\lim_{L\to\infty}\limsup_{m\to\infty}
P\left(
\left|
\frac{
m e_j^\intercal\widehat{V}_{\rm SAGE}(\gamma^*)e_j
}{
e_j^\intercal\Sigma_{\mathrm{SAGE},L}(\gamma^*)e_j
}-1
\right|>\varepsilon
\right)
=0.
\end{equation}
Combining the fixed-$L$ CLT in
Eqn \eqref{eqn: sage_ci_clt} with variance consistency and Slutsky's,
\begin{equation}
\label{eqn: sage_studentized}
\frac{
\hat\theta_{\mathrm{SAGE},j}(\gamma^*)-\theta_{t,j}
}{
\sqrt{e_j^\intercal\widehat{V}_{\rm SAGE}(\gamma^*)e_j}
}
\overset{d}{\rightarrow}
N(0,1),
\end{equation}
with limits taken first in $m$, followed by $L$. Consequently,
\begin{equation}
\label{eqn: sage_CI_final}
\hat\theta_{\mathrm{SAGE},j}(\gamma^*)
\pm
z_{1-\alpha/2}
\sqrt{
e_j^\intercal
\widehat{V}_{\rm SAGE}(\gamma^*)
e_j
}
\end{equation}
is an asymptotically valid $1-\alpha$ confidence interval for
$\theta_{t,j}$.

We now replace $\gamma^*$ by $\hat\gamma$ and write
$\Phi_k(\gamma)$ and $\widehat\Phi_k(\gamma)$ to explicitly display
the step-size dependence previously suppressed. We first show that, for each fixed $L$, the expansion in Eqn \eqref{eqn: sage_ci_linearization} holds with $\gamma^*$
replaced by $\hat\gamma$. Define the remainder by
$$
R_m(\gamma)
\coloneqq
\sqrt m
(\hat\theta_\mathrm{SAGE}(\gamma)-\theta_t)
+
H_t^{-1}
\left[
\sqrt m\,
\hat E_s\left[
\nabla_\theta\ell(\theta_t;D)
-
\gamma\Xi_s(D)
\right]
+
\gamma\sqrt m\,
\hat E_t[\Xi_t(D)]
\right].
$$
By Corollary~\ref{corr: grad_linear},
$R_m(0)=o_p(1)$ and $R_m(1)=o_p(1)$.
Substituting
$\hat\theta_\mathrm{SAGE}(\gamma)
=\hat\theta_\mathrm{init}
-\gamma\widehat H_s^{-1}\widehat\Psi_t$
into the definition of $R_m(\gamma)$ gives
$$
\begin{aligned}
R_m(\gamma)
&=
R_m(0)
+
\gamma\sqrt m
\left[
-\widehat H_s^{-1}\widehat\Psi_t
+
H_t^{-1}
\left\{
\hat E_t[\Xi_t(D)]-\hat E_s[\Xi_s(D)]
\right\}
\right]
\\
&=
R_m(0)+\gamma\{R_m(1)-R_m(0)\}
\\
&=
(1-\gamma)R_m(0)+\gamma R_m(1).
\end{aligned}
$$
Evaluating this identity at $\hat\gamma$ and using
$\hat\gamma\in[0,1]$, we get
$$
\|R_m(\hat\gamma)\|_2
\leq
(1-\hat\gamma)\|R_m(0)\|_2
+
\hat\gamma\|R_m(1)\|_2
\leq
\|R_m(0)\|_2+\|R_m(1)\|_2
=o_p(1).
$$
Therefore, for each fixed $L$, the expansion in
Eqn \eqref{eqn: sage_ci_linearization} also holds with
$\gamma^*$ replaced by $\hat\gamma$.

Next, to establish the limiting distribution at $\hat\gamma$, note
that the SAGE update can be written as
$$
\frac{
\sqrt m\{
\hat\theta_{\mathrm{SAGE},j}(\hat\gamma)
-
\hat\theta_{\mathrm{SAGE},j}(\gamma^*)
\}
}{
\sqrt{e_j^\intercal\Sigma_{\mathrm{SAGE},L}(\gamma^*)e_j}
}
=
-(\hat\gamma-\gamma^*)
\frac{
\sqrt m\,e_j^\intercal\widehat H_s^{-1}\widehat\Psi_t
}{
\sqrt{e_j^\intercal\Sigma_{\mathrm{SAGE},L}(\gamma^*)e_j}
}.
$$
For every $\varepsilon>0$ and $M>0$, the union bound gives
$$
P\left(
\left|
\frac{
\sqrt m\{
\hat\theta_{\mathrm{SAGE},j}(\hat\gamma)
-
\hat\theta_{\mathrm{SAGE},j}(\gamma^*)
\}
}{
\sqrt{e_j^\intercal\Sigma_{\mathrm{SAGE},L}(\gamma^*)e_j}
}
\right|>\varepsilon
\right) \leq
\underbrace{P\left(
|\hat\gamma-\gamma^*|>\frac{\varepsilon}{M}
\right)}_{\mathbf{(A)}}
+
\underbrace{P\left(
\left|
\frac{
\sqrt m\,e_j^\intercal\widehat H_s^{-1}\widehat\Psi_t
}{
\sqrt{e_j^\intercal\Sigma_{\mathrm{SAGE},L}(\gamma^*)e_j}
}
\right|>M
\right)}_{\mathbf{(B)}}.
$$
For each fixed $M$, Lemma~\ref{lem: sage_gamma} implies that
$\mathbf{(A)}$ converges to zero with limits taken
first in $m$, followed by $L$.

We now evaluate $\mathbf{(B)}$. Corollary~\ref{corr: grad_linear}
and the distributional CLT in Lemma~\ref{lemma: distr_clt}
give that, for each fixed $L$, 
$\sqrt m\,e_j^\intercal\widehat H_s^{-1}\widehat\Psi_t$ has the limiting variance
$$
(c_s+\operatorname{Var}(W^{(s)}))
\operatorname{Var}_t\{e_j^\intercal H_t^{-1}\Xi_s\}
+
c_t\operatorname{Var}_t\{e_j^\intercal H_t^{-1}\Xi_t\}=
(c_s+c_t+\operatorname{Var}(W^{(s)}))
\operatorname{Var}_t\{e_j^\intercal H_t^{-1}\Xi_t\},
$$
where the equality follows from $\Xi_s(D)=\Xi_t(D)$
by definition. Dividing by
$\sqrt{e_j^\intercal\Sigma_{\mathrm{SAGE},L}(\gamma^*)e_j}$
therefore gives
\begin{equation}
\label{eqn: clt_remainder}
\frac{
\sqrt m\,e_j^\intercal\widehat H_s^{-1}\widehat\Psi_t
}{
\sqrt{e_j^\intercal\Sigma_{\mathrm{SAGE},L}(\gamma^*)e_j}
}
\overset{d}{\rightarrow}
N\left(
0,
\frac{
(c_s+c_t+\operatorname{Var}(W^{(s)}))
\operatorname{Var}_t\{e_j^\intercal H_t^{-1}\Xi_t\}
}{
e_j^\intercal\Sigma_{\mathrm{SAGE},L}(\gamma^*)e_j
}
\right).
\end{equation}
Next, since $\Phi_t(\gamma^*)=\gamma^*H_t^{-1}\Xi_t$,
$$
0\leq
\frac{
(c_s+c_t+\operatorname{Var}(W^{(s)}))
\operatorname{Var}_t\{e_j^\intercal H_t^{-1}\Xi_t\}
}{
e_j^\intercal\Sigma_{\mathrm{SAGE},L}(\gamma^*)e_j
}=
\frac{c_s+c_t+\operatorname{Var}(W^{(s)})}{(\gamma^*)^2}
\frac{
\operatorname{Var}_t\{e_j^\intercal\Phi_t(\gamma^*)\}
}{
e_j^\intercal\Sigma_{\mathrm{SAGE},L}(\gamma^*)e_j
}
\leq
\frac{
c_s+c_t+\operatorname{Var}(W^{(s)})
}{
c_t(\gamma^*)^2
},
$$
where the last inequality follows since
$e_j^\intercal\Sigma_{\mathrm{SAGE},L}(\gamma^*)e_j \geq
c_t\operatorname{Var}_t\{e_j^\intercal\Phi_t(\gamma^*)\}.$
By Portmanteau and Chebyshev's applied to Eqn \eqref{eqn: clt_remainder},
$$
\limsup_{m\to\infty}
P\left(
\left|
\frac{
\sqrt m\,e_j^\intercal\widehat H_s^{-1}\widehat\Psi_t
}{
\sqrt{e_j^\intercal\Sigma_{\mathrm{SAGE},L}(\gamma^*)e_j}
}
\right|>M
\right)
\leq
\frac{
c_s+c_t+\operatorname{Var}(W^{(s)})
}{
M^2c_t(\gamma^*)^2
}.
$$
Taking limits first in $m$, then in $L$,
and finally letting $M\to\infty$ in the union bound above
yields, for every $\varepsilon>0$,
$$
\lim_{L\to\infty}\limsup_{m\to\infty}
P\left(
\left|
\frac{
\sqrt m\{
\hat\theta_{\mathrm{SAGE},j}(\hat\gamma)
-
\hat\theta_{\mathrm{SAGE},j}(\gamma^*)
\}
}{
\sqrt{e_j^\intercal\Sigma_{\mathrm{SAGE},L}(\gamma^*)e_j}
}
\right|>\varepsilon
\right)
=
0.
$$
Combining this with the standardized version of
Eqn \eqref{eqn: sage_ci_clt} therefore gives
\begin{equation}\label{eqn: sage_hatgamma_clt}
\frac{
\sqrt m\{
\hat\theta_{\mathrm{SAGE},j}(\hat\gamma)-\theta_{t,j}
\}
}{
\sqrt{e_j^\intercal\Sigma_{\mathrm{SAGE},L}(\gamma^*)e_j}
}
\overset{d}{\rightarrow}N(0,1),
\end{equation}
with limits taken first in $m$, followed by $L$.

It remains to establish variance consistency at $\hat\gamma$.
By definition,
\begin{align*}
\widehat\Phi_s(\hat\gamma)-\widehat\Phi_s(\gamma^*)
&=
-\widehat H_s^{-1}
\left\{
\nabla_\theta\ell(\hat\theta_\mathrm{init};D)
-
\hat\gamma\widehat\Xi(D)
\right\}
+
\widehat H_s^{-1}
\left\{
\nabla_\theta\ell(\hat\theta_\mathrm{init};D)
-
\gamma^*\widehat\Xi(D)
\right\}\\
&=
(\hat\gamma-\gamma^*)\widehat H_s^{-1}\widehat\Xi(D)\\
&=
\frac{\hat\gamma-\gamma^*}{\gamma^*}
\widehat\Phi_t(\gamma^*).
\end{align*}
By the same calculation,
$\widehat\Phi_t(\hat\gamma)
-
\widehat\Phi_t(\gamma^*)
=\frac{\hat\gamma-\gamma^*}{\gamma^*}
\widehat\Phi_t(\gamma^*).$
Applying this and the reverse triangle inequality to the following difference in standard deviations, 
\begin{align}\label{eqn: diff_std}
\begin{split}
    \left|
\sqrt{e_j^\intercal\widehat{V}_{\rm SAGE}(\hat\gamma)e_j}
-
\sqrt{e_j^\intercal\widehat{V}_{\rm SAGE}(\gamma^*)e_j}
\right|&\leq
\left[
\left(\frac{1}{n_s}+\hat\delta^2_{\rm dist}\right)
\widehat{\operatorname{Var}}_s
\left\{
e_j^\intercal
\big(\widehat\Phi_s(\hat\gamma)-\widehat\Phi_s(\gamma^*)\big)
\right\}
+
\frac{1}{n_t}
\widehat{\operatorname{Var}}_s
\left\{
e_j^\intercal
\big(\widehat\Phi_t(\hat\gamma)-\widehat\Phi_t(\gamma^*)\big)
\right\}
\right]^{1/2}\\
&=
\frac{|\hat\gamma-\gamma^*|}{\gamma^*}
\left[
\left(\frac{1}{n_s}+\frac{1}{n_t}+\hat\delta^2_{\rm dist}\right)
\widehat{\operatorname{Var}}_s
\{e_j^\intercal\widehat\Phi_t(\gamma^*)\}
\right]^{1/2}.
\end{split}
\end{align}

Since $
e_j^\intercal\widehat{V}_{\rm SAGE}(\gamma^*)e_j
\geq
\frac{1}{n_t}
\widehat{\operatorname{Var}}_s
\{e_j^\intercal\widehat\Phi_t(\gamma^*)\},$ we write
$\sqrt{
\widehat{\operatorname{Var}}_s
\{e_j^\intercal\widehat\Phi_t(\gamma^*)\}
}
\leq
\sqrt{n_t}\,
\sqrt{e_j^\intercal\widehat{V}_{\rm SAGE}(\gamma^*)e_j}.$
Substituting this bound into Eqn \eqref{eqn: diff_std} and dividing by
$\sqrt{e_j^\intercal\widehat{V}_{\rm SAGE}(\gamma^*)e_j}$ gives
\begin{equation}\label{eqn: std_ratio_gamma_hat}
    \left|
\sqrt{
\frac{
e_j^\intercal\widehat{V}_{\rm SAGE}(\hat\gamma)e_j
}{
e_j^\intercal\widehat{V}_{\rm SAGE}(\gamma^*)e_j
}
}
-1
\right|
\leq
\frac{|\hat\gamma-\gamma^*|}{\gamma^*}
\sqrt{
1+\frac{n_t}{n_s}+n_t\hat\delta^2_{\rm dist}
}.
\end{equation}

This holds on the event the denominator is positive, which occurs with probability tending to one by Eqn \eqref{eqn: Sigma_SAGE_consistency}. By Lemma~\ref{lem: sage_gamma},
$|\hat\gamma-\gamma^*|\overset{p}{\rightarrow}0$ and $n_t\hat\delta^2_{\rm dist}\overset{p}{\rightarrow}\frac{\operatorname{Var}(W^{(s)})}{c_t}.$ Together with $n_t/n_s\to c_s/c_t$ and $\gamma^*>0$, we get
$$
\frac{1}{\gamma^*}
\sqrt{1+\frac{n_t}{n_s}+n_t\hat\delta^2_{\rm dist}}
\overset{p}{\rightarrow}
\frac{1}{\gamma^*}
\sqrt{1+\frac{c_s}{c_t}+\frac{\operatorname{Var}(W^{(s)})}{c_t}}
<\infty,
$$
with limits first in $m$ then in $L$. Therefore, together with continuous mapping,
$$
\frac{
e_j^\intercal\widehat{V}_{\rm SAGE}(\hat\gamma)e_j
}{
e_j^\intercal\widehat{V}_{\rm SAGE}(\gamma^*)e_j
}
\overset{p}{\rightarrow}1.
$$

In addition, Eqn \eqref{eqn: Sigma_SAGE_consistency} gives
$$
\frac{
m e_j^\intercal\widehat{V}_{\rm SAGE}(\gamma^*)e_j
}{
e_j^\intercal\Sigma_{\mathrm{SAGE},L}(\gamma^*)e_j
}
\overset{p}{\rightarrow}1.
$$
Multiplying these two ratios therefore yields
$$
\frac{
m e_j^\intercal\widehat{V}_{\rm SAGE}(\hat\gamma)e_j
}{
e_j^\intercal\Sigma_{\mathrm{SAGE},L}(\gamma^*)e_j
}
=
\frac{
e_j^\intercal\widehat{V}_{\rm SAGE}(\hat\gamma)e_j
}{
e_j^\intercal\widehat{V}_{\rm SAGE}(\gamma^*)e_j
}
\frac{
m e_j^\intercal\widehat{V}_{\rm SAGE}(\gamma^*)e_j
}{
e_j^\intercal\Sigma_{\mathrm{SAGE},L}(\gamma^*)e_j
}
\overset{p}{\rightarrow}1.
$$
Combining this variance consistency with Eqn \ref{eqn: sage_hatgamma_clt} and Slutsky's, we get
$$
\frac{
\hat\theta_{\mathrm{SAGE},j}(\hat\gamma)-\theta_{t,j}
}{
\sqrt{e_j^\intercal\widehat{V}_{\rm SAGE}(\hat\gamma)e_j}
}
\overset{d}{\rightarrow}N(0,1),
$$
with limits taken first in $m$, followed by $L$. Therefore,
$$
\hat\theta_{\mathrm{SAGE},j}(\hat\gamma)
\pm
z_{1-\alpha/2}
\sqrt{e_j^\intercal\widehat{V}_{\rm SAGE}(\hat\gamma)e_j}
$$
is an asymptotically valid $1-\alpha$ confidence interval
for $\theta_{t,j}$, with limits taken first in $m$,
followed by $L$.
\end{proof}

\section{Proofs of supporting results}

\subsection{Distributional CLT}\label{app: clt}

\begin{proof}
    This lemma follows as a special case of Lemma 1 of \cite{duc} and applying the Cramér-Wold device since Lemma 1 gives a distributional CLT for scalar-valued rather than vector-valued functions.  Specifically, Lemma 1 considers scalar Borel measurable functions that are square-integrable under $P_f$. Taking $P_f=P_1=P_t$ and $P_2=P_s$,
$$
\sqrt{m}
\begin{pmatrix}
\frac{1}{n_t} \sum_{i=1}^{n_t} \phi_t(D_i^{(t)}) - E_t[\phi_t(D)]\\
\frac{1}{n_s} \sum_{i=1}^{n_s} \phi_s(D_i^{(s)})-E_t[\phi_s(D)]
\end{pmatrix}
\overset{d}{\rightarrow} N\left(0, \begin{pmatrix}
    c_t\operatorname{Var}_t(\phi_t(D))& 0\\
    0& (\operatorname{Var}(W^{(s)})+c_s)\operatorname{Var}_t(\phi_s(D))
\end{pmatrix}\right),
$$
since $P_t$ is fixed, the target component has no random-weight contribution and the limiting cross-covariance is zero. 

Now define $a_s\in\mathbb R^{p_s}$ and $a_t\in\mathbb R^{p_t}$. Applying the above for $\phi_s(D)=a_s^\intercal f_s(D)$ and 
$\phi_t(D)=a_t^\intercal f_t(D)$ gives
\begin{align*}
&\sqrt{m}\left(
a_t^\intercal\left(
\frac{1}{n_t}\sum_{i=1}^{n_t}f_t(D_i^{(t)})
-
E_t[f_t(D)]\right)+
a_s^\intercal\left(
\frac{1}{n_s}\sum_{i=1}^{n_s}f_s(D_i^{(s)})
-
E_t[f_s(D)]
\right)
\right) \\
&\qquad\overset{d}{\rightarrow}
N\left(
0,
c_ta_t^\intercal\operatorname{Var}_t(f_t(D))a_t + (\operatorname{Var}(W^{(s)})+c_s)a_s^\intercal\operatorname{Var}_t(f_s(D))a_s
\right).
\end{align*}
Then by the Cramèr-Wold device, 
\begin{equation*}
\sqrt{m}
\begin{pmatrix}

\frac{1}{n_t} \sum_{i=1}^{n_t} f_t(D_i^{(t)}) 
-
E_t[f_t(D)]\\
\frac{1}{n_s} \sum_{i=1}^{n_s} f_s(D_i^{(s)}) 
-
E_t[f_s(D)]
\end{pmatrix}
\overset{d}{\rightarrow}
N\left(0,
\begin{pmatrix}
c_t\operatorname{Var}_t(f_t(D))
&
0_{p_t\times p_s}
\\
0_{p_s\times p_t}
&
\left(c_s+\operatorname{Var}(W^{(s)})\right)\operatorname{Var}_t(f_s(D))
\end{pmatrix}
\right)
\end{equation*}

\end{proof}

\subsection{Consistency and linearization of $\hat\theta_\mathrm{init}$}\label{app: theta_old_consist} 
\begin{proposition}
\label{prop: theta_old_consist}
Under the assumptions in Appendix~\ref{app: assumptions}, we have
\begin{equation*}
    \hat\theta_\mathrm{init}
    =
    \theta_t+O_p(m^{-1/2}),
\end{equation*}
and the linearization
\begin{equation}
    \hat\theta_\mathrm{init}-\theta_t=
    -H_t^{-1}
    \hat E_s[
        \nabla_\theta\ell(\theta_t;D)
    ]
    +
    o_p(m^{-1/2}).
\end{equation}
\end{proposition}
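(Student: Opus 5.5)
This is a standard M-estimation argument, run with the distributional CLT (Lemma~\ref{lemma: distr_clt}) in place of the i.i.d.\ CLT and LLN. There are two stages: consistency, then a Taylor expansion of the first-order condition.

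\textbf{Consistency.} Write $\hat R_s(\theta)=\hat E_s[\ell(\theta;D)]$ and $R_t(\theta)=E_t[\ell(\theta;D)]$. For each fixed $\theta$, Lemma~\ref{lemma: distr_clt} applied to $f_s=\ell(\theta;\cdot)$, which is square-integrable by Assumption~\eqref{assump: lip_loss}, gives $\hat R_s(\theta)-R_t(\theta)=O_p(m^{-1/2})$. So pointwise convergence holds.

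To upgrade this to uniform convergence on the compact set $\Theta$, I will use the Lipschitz envelope $L_1$. By Lemma~\ref{lemma: distr_clt}, $\hat E_s[L_1]\to_p E_t[L_1]$. Hence $\hat R_s$ is stochastically equicontinuous with random modulus $\hat E_s[L_1]\,\|\theta-\theta'\|_2=O_p(1)\|\theta-\theta'\|_2$. A finite $\epsilon$-net argument then gives $\sup_\Theta|\hat R_s-R_t|=o_p(1)$.

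Since $\theta_t$ is the unique minimizer of the continuous $R_t$ on compact $\Theta$, it is well separated. The standard argmin consistency theorem (van der Vaart, Thm.~5.7) then gives $\hat\theta_{\mathrm{init}}\to_p\theta_t$. As $\theta_t$ is interior, with probability tending to one $\hat\theta_{\mathrm{init}}\in\mathcal N$ and satisfies $\hat E_s[\nabla_\theta\ell(\hat\theta_{\mathrm{init}};D)]=0$.

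\textbf{Linearization.} Expand the first-order condition around $\theta_t$:
\[
0=\hat E_s[\nabla_\theta\ell(\theta_t;D)]+\hat E_s[\nabla^2_\theta\ell(\theta_t;D)](\hat\theta_{\mathrm{init}}-\theta_t)+\mathcal R_m .
\]
By Assumption~\eqref{assump: lip}, $\|\mathcal R_m\|_2\le \tfrac12\hat E_s[M_2]\,\|\hat\theta_{\mathrm{init}}-\theta_t\|_2^2$, and $\hat E_s[M_2]=O_p(1)$, so $\mathcal R_m=o_p(\|\hat\theta_{\mathrm{init}}-\theta_t\|_2)$.

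Next, Lemma~\ref{lemma: distr_clt} applied entrywise to $\nabla^2_\theta\ell(\theta_t;\cdot)$ (square-integrable by Assumption~\eqref{assump: sup_op_bd}) gives $\hat E_s[\nabla^2_\theta\ell(\theta_t;D)]=H_t+O_p(m^{-1/2})$. Because $H_t\succ0$, this matrix is invertible with probability tending to one. Applied to the score, and using $E_t[\nabla_\theta\ell(\theta_t;D)]=0$, the same lemma gives $\hat E_s[\nabla_\theta\ell(\theta_t;D)]=O_p(m^{-1/2})$.

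Solving the expansion yields
\[
\|\hat\theta_{\mathrm{init}}-\theta_t\|_2\le O_p(m^{-1/2})+o_p(1)\,\|\hat\theta_{\mathrm{init}}-\theta_t\|_2 .
\]
The absorption trick used in the proof of Lemma~\ref{lemma: entropy_bal} then gives the $O_p(m^{-1/2})$ rate. Feeding this rate back in, $\mathcal R_m=O_p(m^{-1})$. Replacing the empirical Hessian by $H_t$ costs only $O_p(m^{-1/2})\cdot O_p(m^{-1/2})=o_p(m^{-1/2})$. This yields
\[
\hat\theta_{\mathrm{init}}-\theta_t=-H_t^{-1}\hat E_s[\nabla_\theta\ell(\theta_t;D)]+o_p(m^{-1/2}).
\]

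\textbf{Main obstacle.} The hard step is the uniform law of large numbers under the random shift model. The source sample is i.i.d.\ only conditionally on $W$, and its conditional law $P_s$ changes with $m$, so no off-the-shelf Glivenko–Cantelli result applies directly. The plan is to rely only on pointwise statements from Lemma~\ref{lemma: distr_clt}, applied at the finitely many net points together with the envelope $L_1$, and let the Lipschitz bound carry everything else. Similarly, the Taylor remainder should only ever use sample averages of the fixed functions $M_2$, $F_1$, $F_2$ evaluated at $\theta_t$, so that no uniformity beyond the envelopes is needed.
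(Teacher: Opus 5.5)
Your proposal is correct and follows essentially the same route as the paper. Both arguments upgrade pointwise convergence from Lemma~\ref{lemma: distr_clt} to a uniform law via the $L_1$ Lipschitz envelope and a finite net, obtain argmin consistency, Taylor-expand the source first-order condition, use absorption to get the $O_p(m^{-1/2})$ rate, and finally replace the empirical Hessian inverse by $H_t^{-1}$. The only difference is that you stop the expansion at first order with a remainder controlled by $M_2$, whereas the paper carries an explicit third-derivative term bounded by $F_3$ and a remainder bounded by $M_3$; your version is slightly more economical and equally valid.
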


\begin{proof}
To establish the first statement, we adapt the argument
of Lemma~2 of \cite{duc}, with $K=1$, $P_f=P_t$,
$P_1=P_s$, and $\hat\theta=\hat\theta_\mathrm{init}$.
Since that lemma assumes bounded loss, we verify the
argument under our assumptions.
By Assumption \eqref{assump: lip_loss},
$\ell(\theta;D)$ is square-integrable for each fixed $\theta$.
Lemma~\ref{lemma: distr_clt} therefore gives
$\hat E_s[\ell(\theta;D)]\overset p\rightarrow
E_t[\ell(\theta;D)]$.
Moreover, $L_1$ is square-integrable, so the same lemma gives
$\hat E_s[L_1(D)]\overset p\rightarrow E_t[L_1(D)]<\infty$.
Compactness of $\Theta$ and a finite-covering argument
using this Lipschitz bound then yield
\[
\sup_{\theta\in\Theta}
\left|
\hat E_s[\ell(\theta;D)]-E_t[\ell(\theta;D)]
\right|
\overset p\rightarrow0.
\]
Since $\theta_t$ uniquely minimizes the continuous target
risk on $\Theta$, it follows that
$\hat\theta_{\mathrm{init}}\overset p\rightarrow\theta_t$.

We now linearize
$\hat\theta_\mathrm{init}-\theta_t$.
Let $\Delta_m \coloneqq \hat\theta_\mathrm{init}-\theta_t$
and recall that
$\widehat H_s(\theta)
\coloneqq \hat E_s[\nabla_\theta^2\ell(\theta;D)].$
Since $\hat\theta_\mathrm{init}\overset p\rightarrow\theta_t$
and $\theta_t\in\operatorname{int}(\Theta)$, the source
first-order condition holds with probability tending to one.
A Taylor expansion around $\theta_t$ gives
\begin{align}\label{eqn: thetahat_taylor1}
0
=
\hat E_s[
    \nabla_\theta\ell(\hat\theta_\mathrm{init};D)
]
=
\hat E_s[
    \nabla_\theta\ell(\theta_t;D)
]
+
\widehat H_s(\theta_t)\Delta_m
+
\frac{1}{2}
\hat E_s\left[
    \nabla_\theta^3\ell(\theta_t;D)
    [\Delta_m,\Delta_m]
\right]
+
r_m.
\end{align}
Here,
$\nabla_\theta^3\ell(\theta;D)[u,v]$
denotes the $p$-dimensional vector obtained by applying the third
derivative tensor to $u$ and $v$.
By the Lipschitz condition on
$\nabla_\theta^3\ell(\cdot)$,
\begin{equation}
\label{eqn: score_taylor_remainder}
\|r_m\|_2
\leq
C\hat E_s[M_3(D)]
\|\Delta_m\|_2^3
\end{equation}
for a fixed constant $C<\infty$.
Moreover,
\begin{align*}
\left\|
\frac12
\hat E_s\left[
\nabla_\theta^3\ell(\theta_t;D)
[\Delta_m,\Delta_m]
\right]
\right\|_2
&\leq
\frac12
\hat E_s[F_3(D)]
\|\Delta_m\|_2^2.
\end{align*}
By Lemma~\ref{lemma: distr_clt},
$\hat E_s[M_3(D)]=O_p(1)$ and
$\hat E_s[F_3(D)]=O_p(1)$. Also by Lemma~\ref{lemma: distr_clt}, we have
$\hat E_s[\nabla_\theta\ell(\theta_t;D)]=O_p(m^{-1/2})$,
since $E_t[\nabla_\theta\ell(\theta_t;D)]=0$,
and $\widehat H_s(\theta_t)-H_t=O_p(m^{-1/2})$.
Since
$\widehat H_s(\theta_t)\overset{p}{\rightarrow}H_t$
and $H_t$ is positive definite,
$\widehat H_s(\theta_t)$ is invertible with probability tending to one,
$
\widehat H_s(\theta_t)^{-1}
=
H_t^{-1}+O_p(m^{-1/2})
=
O_p(1).
$

We first establish the rate of $\Delta_m$.
Rearranging Eqn.~\eqref{eqn: thetahat_taylor1},
\[
\|\Delta_m\|_2
\leq
\left\|
\widehat H_s(\theta_t)^{-1}
\hat E_s[\nabla_\theta\ell(\theta_t;D)]
\right\|_2
+
C_m\|\Delta_m\|_2^2,
\]
where
$
C_m
\coloneqq
\|\widehat H_s(\theta_t)^{-1}\|_{\mathrm{op}}
\left\{
\frac12\hat E_s[F_3(D)]
+
C\hat E_s[M_3(D)]\|\Delta_m\|_2
\right\}
=
O_p(1).
$
Since $\Delta_m=o_p(1)$ by consistency,
$C_m\|\Delta_m\|_2=o_p(1)$.
Thus, with probability tending to one,
$C_m\|\Delta_m\|_2\leq1/2$, and we get
\[
\|\Delta_m\|_2
\leq
2\left\|
\widehat H_s(\theta_t)^{-1}
\hat E_s[\nabla_\theta\ell(\theta_t;D)]
\right\|_2.
\]
Consequently, $\Delta_m=O_p(m^{-1/2})$.
Substituting this into Eqn \eqref{eqn: score_taylor_remainder} gives
$\|r_m\|_2=O_p(m^{-3/2})$ and
\[
\left\|
\frac12
\hat E_s\left[
\nabla_\theta^3\ell(\theta_t;D)
[\Delta_m,\Delta_m]
\right]
\right\|_2
=
O_p(m^{-1}).
\]

Therefore, rearranging
Eqn.~\eqref{eqn: thetahat_taylor1} gives
\begin{align}
\label{eqn: theta_linearize}
\begin{split}
\Delta_m
&=
-
\widehat H_s(\theta_t)^{-1}
\hat E_s[
\nabla_\theta\ell(\theta_t;D)
]
-
\frac12
\widehat H_s(\theta_t)^{-1}
\hat E_s\left[
\nabla_\theta^3\ell(\theta_t;D)
[\Delta_m,\Delta_m]
\right]
-
\widehat H_s(\theta_t)^{-1}r_m
\\
&=
-
\widehat H_s(\theta_t)^{-1}
\hat E_s[
\nabla_\theta\ell(\theta_t;D)
]
+
O_p(m^{-1})
\\
&=
-
H_t^{-1}
\hat E_s[
\nabla_\theta\ell(\theta_t;D)
]
+
\left(
H_t^{-1}
-
\widehat H_s(\theta_t)^{-1}
\right)
\hat E_s[
\nabla_\theta\ell(\theta_t;D)
]
+
O_p(m^{-1})
\\
&=
-
H_t^{-1}
\hat E_s[
\nabla_\theta\ell(\theta_t;D)
]
+
O_p(m^{-1/2})O_p(m^{-1/2})
+
O_p(m^{-1})
\\
&=
-
H_t^{-1}
\hat E_s[
\nabla_\theta\ell(\theta_t;D)
]
+
O_p(m^{-1})
\\
&=
-
H_t^{-1}
\hat E_s[
\nabla_\theta\ell(\theta_t;D)
]
+
o_p(m^{-1/2}).
\end{split}
\end{align}
\end{proof}

\subsection{Consistency of $\widehat\Psi_t$ and $\widehat H_s$}
\begin{lemma}\label{lemma: grad_consist}

Under the assumptions in Appendix \ref{app: assumptions},
\begin{equation}\label{eqn: psi_projection_approx}
\widehat\Psi_t
=
\beta^{(t)}(\theta_t)^\intercal(\hat d_t-\hat d_s)+o_p(m^{-1/2})
=
O_p(m^{-1/2}).
\end{equation}

\end{lemma}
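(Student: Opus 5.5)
The identity $\widehat\Psi_t=\{\hat\beta^{(s)}(\hat\theta_{\mathrm{init}})\}^\intercal(\hat d_t-\hat d_s)$ holds by construction: $\widehat\Psi_t=\sum_g\widehat\Psi_t(g)$ stacks the subgroup slope coefficients against the retained coordinates of $\hat d_t-\hat d_s$, and omitted coordinates carry zero coefficients. The lemma therefore reduces to two facts. First, $\sqrt m(\hat d_t-\hat d_s)=O_p(1)$. Second, $\hat\beta^{(s)}(\hat\theta_{\mathrm{init}})-\beta^{(t)}(\theta_t)=o_p(1)$. Given both, the product of the coefficient error with the summary difference is $o_p(1)\cdot O_p(m^{-1/2})=o_p(m^{-1/2})$. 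The final $O_p(m^{-1/2})$ claim then follows because $\beta^{(t)}(\theta_t)$ is deterministic and finite.

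The first fact follows from Lemma~\ref{lemma: distr_clt} applied jointly to $f_t=f_s=Z$. Both $\hat d_t$ and $\hat d_s$ are centered at the same $E_t[Z(D)]=d_t$, so $\sqrt m(\hat d_t-\hat d_s)$ converges to $N(0,(c_s+c_t+\operatorname{Var}(W^{(s)}))\operatorname{Var}_t\{Z\})$ and is therefore tight. Square-integrability of $Z$ is assumed in the moment conditions.

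For the second fact, I would write the OLS solution in covariance form:
\[
\{\hat\beta^{(s)}(\hat\theta_{\mathrm{init}})\}^\intercal=\widehat{\operatorname{Cov}}_s\{\nabla_\theta\ell(\hat\theta_{\mathrm{init}};D),Z\}\,\widehat S_s^{-1}.
\]
The term $\widehat S_s\overset p\to\operatorname{Var}_t\{Z\}\succ0$ follows from Lemma~\ref{lemma: distr_clt} applied to $Z$ and $ZZ^\intercal$, whose second moments are finite by the first moment condition. Inverting is then continuous with probability tending to one.

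The cross-moment term is the main obstacle, because the gradient is evaluated at the random $\hat\theta_{\mathrm{init}}$ rather than at a fixed parameter. I would split it as
\[
\hat E_s[\nabla_\theta\ell(\hat\theta_{\mathrm{init}};D)Z^\intercal]=\hat E_s[\nabla_\theta\ell(\theta_t;D)Z^\intercal]+\hat E_s[\{\nabla_\theta\ell(\hat\theta_{\mathrm{init}};D)-\nabla_\theta\ell(\theta_t;D)\}Z^\intercal].
\]
The first piece converges to $E_t[\nabla\ell(\theta_t)Z^\intercal]$ by Lemma~\ref{lemma: distr_clt}, using the moment condition on $\widetilde Z\nabla\ell^\intercal$. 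For the second piece I would use the mean value theorem and the operator bound $F_2$ from Assumption~\eqref{assump: sup_op_bd}. This bounds its norm by
\[
\|\hat\theta_{\mathrm{init}}-\theta_t\|_2\,\hat E_s[F_2(D)\|Z(D)\|_2],
\]
which holds on the event $\hat\theta_{\mathrm{init}}\in\mathcal N$, and that event has probability tending to one.

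The average $\hat E_s[F_2\|Z\|]$ is $O_p(1)$. Because $F_2\|Z\|$ is only integrable and not necessarily square-integrable, I would not apply the CLT to it. Instead I would take its expectation conditional on $W$: that expectation is bounded by $\max_j W_j/\bar W$ times $E_t[F_2\|Z\|]$, and $\max_j W_j/\bar W$ is controlled by the positivity and lower bound on the weights. Alternatively, Cauchy--Schwarz gives $\hat E_s[F_2^2]^{1/2}\hat E_s[\|Z\|^2]^{1/2}=O_p(1)$, and each factor is a source average of a $P_t$-integrable function handled by Markov's inequality. Combined with Proposition~\ref{prop: theta_old_consist}, $\|\hat\theta_{\mathrm{init}}-\theta_t\|_2=O_p(m^{-1/2})$, so the second piece is $O_p(m^{-1/2})$.

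The centering term $\hat E_s[\nabla\ell(\hat\theta_{\mathrm{init}})]\hat d_s^\intercal$ vanishes exactly, by the source first-order condition. Its population analogue vanishes as well, since $E_t\nabla\ell(\theta_t)=0$. Hence $\hat\beta^{(s)}(\hat\theta_{\mathrm{init}})\to\beta^{(t)}(\theta_t)$ in probability, with the identification of the limit given by \eqref{eqn: beta_projection_covariance}. This completes the argument, and in fact shows the coefficient error is $O_p(m^{-1/2})$, which is stronger than needed.
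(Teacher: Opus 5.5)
Your argument is correct and follows the paper's proof. The steps match: the identity $\widehat\Psi_t=\{\hat\beta^{(s)}(\hat\theta_{\mathrm{init}})\}^\intercal(\hat d_t-\hat d_s)$, tightness of $\sqrt m(\hat d_t-\hat d_s)$ from Lemma~\ref{lemma: distr_clt}, and an $O_p(m^{-1/2})$ coefficient error from a fixed-$\theta_t$ CLT plus an expansion in $\hat\theta_{\mathrm{init}}-\theta_t$ via Proposition~\ref{prop: theta_old_consist}. The differences are cosmetic. You use the centered covariance form of OLS and remove the centering term with the source first-order condition, where the paper uses the uncentered $\widetilde Z$ form. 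You also use a mean-value bound with $F_2$, where the paper uses a second-order expansion with $M_2$.

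You are right that $F_2\|Z\|$ need not be square-integrable; the paper's proof takes $\hat E_s[\|\widetilde Z\|\,\|\nabla_\theta^2\ell(\theta_t;D)\|_{\mathrm{op}}]=O_p(1)$ without comment. However, only the second of your two fixes works. Nothing in the assumptions bounds the weights above, so $\max_j W_j^{(s)}$ can diverge with $m$, and $\max_j W_j^{(s)}/\bar W$ (with $\bar W=m^{-1}\sum_j W_j^{(s)}$) is not $O_p(1)$. The Markov route is sound and does not need Cauchy--Schwarz. For nonnegative $h$ with $E_t[h]<\infty$,
\[
E\{\hat E_s[h]\}=E\Bigl[\sum_j \bigl(W_j^{(s)}/\bar W\bigr)\int_{I_j}h\,dP_t\Bigr]\le E_t[h]/w_s,
\]
because $\bar W\ge w_s$ and $E[W_j^{(s)}]=1$. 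You can therefore apply Markov's inequality to $h=F_2\|Z\|$ directly.
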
 

\begin{proof}
We first show that $\hat\beta^{(s)}(\hat\theta_\mathrm{init})-\beta^{(t)}(\theta_t)=O_p(m^{-1/2})$. Recall that
$$
\hat C^{(s)}(\theta)
=
\hat E_s[
\widetilde Z(D)\widetilde Z(D)^\intercal
]^{-1}
\hat E_s[
\widetilde Z(D)
\nabla_\theta\ell(\theta;D)^\intercal]
$$
is the empirical least-squares coefficient matrix obtained by
projecting
$\nabla_\theta\ell(\theta;D)$
onto an intercept and $Z(D)$ using the source observations. Its
intercept and slope coefficients are respectively
$\hat\alpha^{(s)}(\theta)$ and $\hat\beta^{(s)}(\theta)$.
Similarly,
\[
C^{(t)}(\theta)
=
E_t[
\widetilde Z(D)\widetilde Z(D)^\intercal
]^{-1}
E_t[
\widetilde Z(D)
\nabla_\theta\ell(\theta;D)^\intercal
]
\]
is the corresponding population target projection, with coefficients
$\alpha^{(t)}(\theta)$ and $\beta^{(t)}(\theta)$.

We first compare $\hat C^{(s)}(\theta_t)$ and $C^{(t)}(\theta_t)$ By
Lemma~\ref{lemma: distr_clt}, $\hat E_s[
\widetilde Z(D)\widetilde Z(D)^\intercal
]
=E_t[
\widetilde Z(D)\widetilde Z(D)^\intercal
]+O_p(m^{-1/2})$ and $\hat E_s[
\widetilde Z(D)
\nabla_\theta\ell(\theta_t;D)^\intercal]
= E_t[
\widetilde Z(D)
\nabla_\theta\ell(\theta_t;D)^\intercal
]
+O_p(m^{-1/2})$.

Since $E_t[
\widetilde Z(D)\widetilde Z(D)^\intercal
]$ is full rank by Assumption~\eqref{assump: var(z) pd}, $\hat E_s[
\widetilde Z(D)\widetilde Z(D)^\intercal]$
is invertible with probability tending to one, and its inverse is
$O_p(1)$. Therefore, $\hat E_s[
\widetilde Z(D)\widetilde Z(D)^\intercal
]^{-1}=E_t[
\widetilde Z(D)\widetilde Z(D)^\intercal
]^{-1}+O_p(m^{-1/2})$.

Using the definitions of
$\hat C^{(s)}(\theta_t)$ and $C^{(t)}(\theta_t)$, we therefore have
\begin{align}\label{eqn: projection_C_fixed_rate}
\begin{split}
\hat C^{(s)}(\theta_t)
-
C^{(t)}(\theta_t)
&=
\hat E_s[
\widetilde Z\widetilde Z^\intercal
]^{-1}
\hat E_s[
\widetilde Z
\nabla_\theta\ell(\theta_t;D)^\intercal
]
-
E_t[
\widetilde Z\widetilde Z^\intercal
]^{-1}E_t[
\widetilde Z
\nabla_\theta\ell(\theta_t;D)^\intercal]
\\
&=
\hat E_s[
\widetilde Z\widetilde Z^\intercal
]^{-1}
\left(
\hat E_s[
\widetilde Z
\nabla_\theta\ell(\theta_t;D)^\intercal]
-
E_t\left[
\widetilde Z
\nabla_\theta\ell(\theta_t;D)^\intercal
\right]
\right)\\
&\qquad+
\left[
\hat E_s[
\widetilde Z\widetilde Z^\intercal
]^{-1}
-
E_t[
\widetilde Z\widetilde Z^\intercal
]^{-1}
\right]
E_t\left[
\widetilde Z
\nabla_\theta\ell(\theta_t;D)^\intercal
\right]\\
&\quad=
O_p(1)O_p(m^{-1/2})
+
O_p(m^{-1/2})O(1)\\
&\quad=
O_p(m^{-1/2}),
\end{split}
\end{align}
where the arguments of $\widetilde Z(D)$ have been suppressed for readability.

We next look at $\hat C^{(s)}(\hat\theta_\mathrm{init})$. We first write
\begin{align}
\hat C^{(s)}(\hat\theta_\mathrm{init})
-
\hat C^{(s)}(\theta_t)
&=
\hat E_s[
\widetilde Z(D)\widetilde Z(D)^\intercal
]^{-1}
\hat E_s\left[
\widetilde Z(D)
\left(
\nabla_\theta\ell(\hat\theta_\mathrm{init};D)
-
\nabla_\theta\ell(\theta_t;D)
\right)^{\intercal}
\right].
\label{eqn: projection_C_theta_difference}
\end{align}
A Taylor expansion of
$\nabla_\theta\ell(\hat\theta_\mathrm{init};D)$ around $\theta_t$
gives
\begin{equation}
\label{eqn: projection_gradient_taylor}
\nabla_\theta\ell(\hat\theta_\mathrm{init};D)
=
\nabla_\theta\ell(\theta_t;D)
+
\nabla_\theta^2\ell(\theta_t;D)
(\hat\theta_\mathrm{init}-\theta_t)
+
r_m(D),
\end{equation}
where, by the Lipschitz condition on
$\nabla_\theta^2\ell(\theta;D)$ in
Assumption~\eqref{assump: lip}, $\|r_m(D)\|_2
\leq
\frac{1}{2}
M_2(D)
\|\hat\theta_\mathrm{init}-\theta_t\|_2^2$.

Substituting \eqref{eqn: projection_gradient_taylor} into
\eqref{eqn: projection_C_theta_difference} gives
\begin{align*}
&\left\|
\hat C^{(s)}(\hat\theta_\mathrm{init})
-
\hat C^{(s)}(\theta_t)
\right\|_F\\
&\leq
\left\|
\hat E_s[
\widetilde Z(D)\widetilde Z(D)^\intercal
]^{-1}
\right\|_{\mathrm{op}}
\bigg[
\hat E_s[
\|\widetilde Z(D)\|_2
\|\nabla_\theta^2\ell(\theta_t;D)\|_{\mathrm{op}}]
\|\hat\theta_\mathrm{init}-\theta_t\|_2
+
\frac{1}{2}
\hat E_s\left[
\|\widetilde Z(D)\|_2M_2(D)
\right]
\|\hat\theta_\mathrm{init}-\theta_t\|_2^2
\bigg]
\\
&=
O_p(1)
\left(
O_p(1)O_p(m^{-1/2})
+
O_p(1)O_p(m^{-1})
\right)
\\
&=
O_p(m^{-1/2}),
\end{align*}
where we use $\hat\theta_\mathrm{init}-\theta_t
=O_p(m^{-1/2})$ from Proposition~\ref{prop: theta_old_consist}. Combining this result with
\eqref{eqn: projection_C_fixed_rate}, we get
\begin{align*}
\hat C^{(s)}(\hat\theta_\mathrm{init})
-
C^{(t)}(\theta_t)=
\left(
\hat C^{(s)}(\hat\theta_\mathrm{init})
-
\hat C^{(s)}(\theta_t)
\right)
+
\left(
\hat C^{(s)}(\theta_t)
-
C^{(t)}(\theta_t)
\right)
=
O_p(m^{-1/2}).
\end{align*}
Therefore, $\hat\beta^{(s)}(\hat\theta_\mathrm{init})-\beta^{(t)}(\theta_t)=O_p(m^{-1/2})$.

We next establish the representation of
$\widehat\Psi_t$. Since the least-squares projection contains an
intercept, we have
\begin{equation}
\label{eqn: projection_intercept_normal_equation}
\hat E_s\left[
\nabla_\theta\ell(\hat\theta_\mathrm{init};D)
-
\hat\alpha^{(s)}(\hat\theta_\mathrm{init})
-
\sum_{g=1}^K
\left\{
\hat\beta_g^{(s)}(\hat\theta_\mathrm{init})
\right\}^{\intercal}
Z_g(D)
\right]
=
0.
\end{equation}
Moreover, since
$\hat\theta_\mathrm{init}$ is an interior source empirical risk
minimizer with probability tending to one, $\hat E_s[
\nabla_\theta\ell(\hat\theta_\mathrm{init};D)
]=0$. It follows from
\eqref{eqn: projection_intercept_normal_equation} that
\begin{align*}
0
&=
\hat\alpha^{(s)}(\hat\theta_\mathrm{init})
+
\sum_{g=1}^K
\left\{
\hat\beta_g^{(s)}(\hat\theta_\mathrm{init})
\right\}^{\intercal}
\hat E_s[Z_g(D)]\\
&=
\hat\alpha^{(s)}(\hat\theta_\mathrm{init})
+
\sum_{g=1}^K
\left\{
\hat\beta_g^{(s)}(\hat\theta_\mathrm{init})
\right\}^{\intercal}
\hat d_{s,g}\\
&=
\hat\alpha^{(s)}(\hat\theta_\mathrm{init})
+
\left\{
\hat\beta^{(s)}(\hat\theta_\mathrm{init})
\right\}^{\intercal}
\hat d_s.
\end{align*}
Therefore, we can write
\begin{align}
\widehat\Psi_t
=
\left\{
\hat\beta^{(s)}(\hat\theta_\mathrm{init})
\right\}^{\intercal}
(\hat d_t-\hat d_s)
=
\hat\alpha^{(s)}(\hat\theta_\mathrm{init})
+
\left\{
\hat\beta^{(s)}(\hat\theta_\mathrm{init})
\right\}^{\intercal}
\hat d_t.
\end{align}

Finally, Lemma~\ref{lemma: distr_clt} gives
\begin{align}
\hat d_t-\hat d_s
=
(\hat d_t-d_t)
-
(\hat d_s-d_t)
=
O_p(m^{-1/2}).
\label{eqn: projection_d_difference_rate}
\end{align}
Therefore,
\begin{align*}
\widehat\Psi_t
-
\left\{
\beta^{(t)}(\theta_t)
\right\}^{\intercal}
(\hat d_t-\hat d_s)
&=
\left[
\left\{
\hat\beta^{(s)}(\hat\theta_\mathrm{init})
\right\}^{\intercal}
-
\left\{
\beta^{(t)}(\theta_t)
\right\}^{\intercal}
\right]
(\hat d_t-\hat d_s)
\\
&=
O_p(m^{-1/2})O_p(m^{-1/2})
\\
&=
o_p(m^{-1/2})
\end{align*}
Since
$\beta^{(t)}(\theta_t)$ is fixed and
$\hat d_t-\hat d_s=O_p(m^{-1/2})$, we also get
$\widehat\Psi_t=O_p(m^{-1/2})$
\end{proof}

\begin{proposition}
\label{prop: hess_consist}
Under the assumptions in Appendix~\ref{app: assumptions},
\begin{equation*}
    \widehat H_s
    =
    H_t
    +
    O_p(m^{-1/2}) \quad \text{and} \quad  \widehat H_s^{-1}
    =
    H_t^{-1}
    +
    O_p(m^{-1/2}).
\end{equation*}
\end{proposition}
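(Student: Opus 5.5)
We prove Proposition~\ref{prop: hess_consist} by splitting $\widehat H_s - H_t$ into a parameter-perturbation term and a sampling term:
\[
\widehat H_s - H_t
= \hat E_s\!\left[\nabla_\theta^2\ell(\hat\theta_{\mathrm{init}};D)-\nabla_\theta^2\ell(\theta_t;D)\right]
+ \left(\hat E_s[\nabla_\theta^2\ell(\theta_t;D)] - H_t\right).
\]

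\textbf{Sampling term.} For the second term we apply Lemma~\ref{lemma: distr_clt} entrywise to $f_s(D)=\mathrm{vec}\{\nabla_\theta^2\ell(\theta_t;D)\}$. This function has finite second moments under $P_t$, because Assumption~\eqref{assump: sup_op_bd} bounds its operator norm by $F_2\in L^2(P_t)$, and all matrix norms on $\mathbb R^{p\times p}$ are equivalent. Since $E_t[f_s(D)]=\mathrm{vec}(H_t)$, the lemma gives $\sqrt m$ times the second term converging in distribution. Hence the second term is $O_p(m^{-1/2})$.

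\textbf{Parameter-perturbation term.} On the event $\hat\theta_{\mathrm{init}}\in\mathcal N$, which has probability tending to one by the consistency part of Proposition~\ref{prop: theta_old_consist}, we use the Lipschitz condition of Assumption~\eqref{assump: lip} with $j=2$. This bounds the operator norm of the first term by $\hat E_s[M_2(D)]\,\|\hat\theta_{\mathrm{init}}-\theta_t\|_2$. Lemma~\ref{lemma: distr_clt} applied to $M_2\in L^2(P_t)$ gives $\hat E_s[M_2(D)]\overset{p}{\to}E_t[M_2(D)]<\infty$, so this factor is $O_p(1)$. Proposition~\ref{prop: theta_old_consist} gives $\|\hat\theta_{\mathrm{init}}-\theta_t\|_2=O_p(m^{-1/2})$. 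The first term is therefore $O_p(m^{-1/2})$, and so $\widehat H_s=H_t+O_p(m^{-1/2})$.

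\textbf{Inverse.} Since $H_t$ is positive definite, its smallest eigenvalue $\lambda_{\min}(H_t)$ is strictly positive. Weyl's inequality then gives $\lambda_{\min}(\widehat H_s)\ge \lambda_{\min}(H_t)/2$ with probability tending to one, so $\widehat H_s$ is invertible and $\|\widehat H_s^{-1}\|_{\mathrm{op}}=O_p(1)$. We then use the resolvent identity
\[
\widehat H_s^{-1}-H_t^{-1}=-\widehat H_s^{-1}(\widehat H_s-H_t)H_t^{-1}.
\]
This yields $\|\widehat H_s^{-1}-H_t^{-1}\|_{\mathrm{op}}\le O_p(1)\,O_p(m^{-1/2})\,\|H_t^{-1}\|_{\mathrm{op}}=O_p(m^{-1/2})$.

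\textbf{Main care point.} The argument is routine. The only place needing care is that the Lipschitz bound holds only on $\mathcal N$, so every statement must be made on the high-probability event $\{\hat\theta_{\mathrm{init}}\in\mathcal N\}$, which does not affect $O_p$ statements.
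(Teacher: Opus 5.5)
Your proof is correct and follows the same route as the paper. It splits $\widehat H_s-H_t$ into a sampling term handled by Lemma~\ref{lemma: distr_clt}, and a parameter-perturbation term bounded by $\hat E_s[M_2(D)]\,\|\hat\theta_{\mathrm{init}}-\theta_t\|_2$ via Assumption~\eqref{assump: lip} and Proposition~\ref{prop: theta_old_consist}; your Weyl-inequality and resolvent-identity argument spells out the inversion step the paper states in one line, and your restriction to the event $\{\hat\theta_{\mathrm{init}}\in\mathcal N\}$ makes explicit a point the paper leaves implicit.
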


\begin{proof}
We can write
\begin{align*}
\widehat H_s-H_t
=
\underbrace{\hat E_s[
    \nabla_\theta^2\ell(\theta_t;D)
]
-
E_t[
    \nabla_\theta^2\ell(\theta_t;D)
]}_{\mathbf{(1)}}
+
\underbrace{\hat E_s\left[
    \nabla_\theta^2
    \ell(\hat\theta_\mathrm{init};D)
    -
    \nabla_\theta^2
    \ell(\theta_t;D)
\right]}_{\mathbf{(2)}}.
\end{align*}
$\mathbf{(1)}$ is $O_p(m^{-1/2})$ by
Lemma~\ref{lemma: distr_clt}. For the second term, triangle inequality and Assumption~\eqref{assump: lip} give
\begin{align*}
\left\|
\hat E_s\left[
    \nabla_\theta^2
    \ell(\hat\theta_\mathrm{init};D)
    -
    \nabla_\theta^2
    \ell(\theta_t;D)
\right]
\right\|_{\mathrm{op}}
&\leq \hat E_s\left[
\left\|
\nabla_\theta^2\ell(\hat\theta_\mathrm{init};D)
-
\nabla_\theta^2\ell(\theta_t;D)
\right\|_{\mathrm{op}}
\right]
\\
&\leq
\hat E_s[M_2(D)]
\|
\hat\theta_\mathrm{init}-\theta_t
\|_2
\\
&=
O_p(1)O_p(m^{-1/2})
\\
&=
O_p(m^{-1/2}),
\end{align*}
where
$\hat E_s[M_2(D)]=O_p(1)$ by
Lemma~\ref{lemma: distr_clt}, and $\hat\theta_\mathrm{init}-\theta_t=O_p(m^{-1/2})$ by Proposition~\ref{prop: theta_old_consist}. Therefore,
$$
    \widehat H_s
    =H_t+
    O_p(m^{-1/2}).
$$
Since $H_t$ is positive definite,
$\widehat H_s$ is invertible with probability tending to one. Therefore, we also have that $ \widehat H_s^{-1}
    =H_t^{-1}+
    O_p(m^{-1/2})$.
\end{proof}

\subsection{Asymptotic linearity of SAGE}

\begin{corollary}
\label{corr: grad_linear}
For any fixed $\gamma\geq0$, under the assumptions in Appendix~\ref{app: assumptions}, 
\begin{align}
\label{eqn: sage_linear}
\sqrt m
(\hat\theta_\mathrm{SAGE}(\gamma)-\theta_t)=
-H_t^{-1}
\left[
\sqrt m\,
\hat E_s\left[
\nabla_\theta\ell(\theta_t;D)
-
\gamma\Xi_s(D)
\right]
+
\gamma\sqrt m\,
\hat E_t[\Xi_t(D)]
\right]
+
o_p(1).
\end{align}
\end{corollary}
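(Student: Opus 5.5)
The plan is to combine the three supporting results already established: Proposition~\ref{prop: theta_old_consist} (linearization of $\hat\theta_\mathrm{init}$), Lemma~\ref{lemma: grad_consist} (projection representation of $\widehat\Psi_t$), and Proposition~\ref{prop: hess_consist} (consistency of $\widehat H_s^{-1}$). We start from the definition
\[
\sqrt m(\hat\theta_\mathrm{SAGE}(\gamma)-\theta_t)=\sqrt m(\hat\theta_\mathrm{init}-\theta_t)-\gamma\widehat H_s^{-1}\sqrt m\,\widehat\Psi_t,
\]
and treat the two terms separately. For the first term, Proposition~\ref{prop: theta_old_consist} gives directly $-H_t^{-1}\sqrt m\,\hat E_s[\nabla_\theta\ell(\theta_t;D)]+o_p(1)$.

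For the second term, Lemma~\ref{lemma: grad_consist} gives $\sqrt m\,\widehat\Psi_t=\{\beta^{(t)}(\theta_t)\}^\intercal\sqrt m(\hat d_t-\hat d_s)+o_p(1)$, and this quantity is $O_p(1)$. Proposition~\ref{prop: hess_consist} gives $\widehat H_s^{-1}=H_t^{-1}+O_p(m^{-1/2})$. Hence
\[
\gamma\widehat H_s^{-1}\sqrt m\,\widehat\Psi_t=\gamma H_t^{-1}\{\beta^{(t)}(\theta_t)\}^\intercal\sqrt m(\hat d_t-\hat d_s)+o_p(1),
\]
since $\gamma$ is fixed and the product of an $O_p(m^{-1/2})$ factor with an $O_p(1)$ factor is $o_p(1)$.

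Next we rewrite the summary difference in terms of $\Xi$. Because $\hat d_t=\hat E_t[Z^{(t)}(D)]$ and $\hat d_s=\hat E_s[Z^{(s)}(D)]$, and $\Xi_k$ is affine in $Z$ with the common centering $d_t$, we have
\[
\{\beta^{(t)}(\theta_t)\}^\intercal(\hat d_t-\hat d_s)=\hat E_t[\Xi_t(D)]-\hat E_s[\Xi_s(D)].
\]
This identity holds exactly, since the $d_t$ terms cancel. Substituting it gives
\[
-H_t^{-1}\sqrt m\,\hat E_s[\nabla_\theta\ell(\theta_t;D)]-\gamma H_t^{-1}\sqrt m\big(\hat E_t[\Xi_t]-\hat E_s[\Xi_s]\big)+o_p(1).
\]
Regrouping terms yields exactly Eqn.~\eqref{eqn: sage_linear}.

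No step is a real obstacle, because the heavy lifting is already contained in Lemma~\ref{lemma: grad_consist}. The only point needing care is the $O_p(1)$ control of $\sqrt m(\hat d_t-\hat d_s)$ needed to absorb the Hessian error. This follows from Lemma~\ref{lemma: distr_clt} applied to $Z$, which has finite second moments under the stated moment conditions. We also use that $\widehat H_s$ is invertible with probability tending to one, which is given by Proposition~\ref{prop: hess_consist}.
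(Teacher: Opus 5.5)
Your proposal is correct and follows essentially the same route as the paper. You linearize $\hat\theta_{\mathrm{init}}$ via Proposition~\ref{prop: theta_old_consist}, swap $\widehat H_s^{-1}$ for $H_t^{-1}$ at $o_p(1)$ cost using Proposition~\ref{prop: hess_consist} and $\widehat\Psi_t=O_p(m^{-1/2})$, and use the exact identity $\{\beta^{(t)}(\theta_t)\}^\intercal(\hat d_t-\hat d_s)=\hat E_t[\Xi_t(D)]-\hat E_s[\Xi_s(D)]$; the only difference is ordering, since the paper rewrites $\sqrt m\,\widehat\Psi_t$ in terms of $\Xi$ before replacing the Hessian, whereas you replace the Hessian first.
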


\begin{proof}
We first show that 
\begin{align}
\label{eqn: psi_linear}
\sqrt m\,\widehat\Psi_t
=
\sqrt m\,\hat E_t[\Xi_t(D)]
-
\sqrt m\,\hat E_s[\Xi_s(D)]
+
o_p(1).
\end{align}
By Lemma~\ref{lemma: grad_consist},
\begin{equation}
\label{eqn: psi_linear_first}
\sqrt m\,\widehat\Psi_t
=
\beta^{(t)}(\theta_t)^\intercal
\sqrt m(\hat d_t-\hat d_s)
+
o_p(1).
\end{equation}
We also have by definition of $\hat d_k$,
\begin{align*}
\beta^{(t)}(\theta_t)^\intercal
(\hat d_t-\hat d_s)
&=
\beta^{(t)}(\theta_t)^\intercal
\left\{
(\hat d_t-d_t)
-
(\hat d_s-d_t)
\right\}
\\
&\quad=
\hat E_t\left[
\beta^{(t)}(\theta_t)^\intercal
\{Z^{(t)}(D)-d_t\}
\right]
-
\hat E_s\left[
\beta^{(t)}(\theta_t)^\intercal
\{Z^{(s)}(D)-d_t\}
\right]
\\
&\quad=
\hat E_t[\Xi_t(D)]
-
\hat E_s[\Xi_s(D)].
\end{align*}
Substituting this into
\eqref{eqn: psi_linear_first} gives
\eqref{eqn: psi_linear}.

We next derive the asymptotic linear representation of the SAGE
estimator. By its definition,
\begin{align}
\sqrt m
(\hat\theta_\mathrm{SAGE}(\gamma)-\theta_t)
&=
\sqrt m
(\hat\theta_\mathrm{init}-\theta_t)
-
\gamma\sqrt m\,
\widehat H_s^{-1}\widehat\Psi_t.
\label{eqn: sage_linear_start}
\end{align}
By Proposition~\ref{prop: hess_consist} and
Lemma~\ref{lemma: grad_consist}, $\widehat H_s^{-1}=H_t^{-1}+
O_p(m^{-1/2})$ and $\widehat\Psi_t
= O_p(m^{-1/2})$ respectively.
Consequently,
\begin{align*}
\sqrt m
\left(
\widehat H_s^{-1}-H_t^{-1}
\right)
\widehat\Psi_t=
\sqrt m\,
O_p(m^{-1/2})
O_p(m^{-1/2})=
O_p(m^{-1/2})
\end{align*}
It follows that
\begin{equation}
\label{eqn: hessian_psi_replacement}
\sqrt m\,
\widehat H_s^{-1}\widehat\Psi_t
=
H_t^{-1}
\sqrt m\,\widehat\Psi_t
+
o_p(1).
\end{equation}
Proposition~\ref{prop: theta_old_consist} gives us the linearization,
\begin{equation*}
\sqrt m
(\hat\theta_\mathrm{init}-\theta_t)
=
-H_t^{-1}
\sqrt m\,
\hat E_s[
\nabla_\theta\ell(\theta_t;D)
]
+
o_p(1).
\end{equation*}
Combining this result with
\eqref{eqn: sage_linear_start},
\eqref{eqn: hessian_psi_replacement}, and
\eqref{eqn: psi_linear}, we obtain
\begin{align*}
\sqrt m(\hat\theta_\mathrm{SAGE}(\gamma)-\theta_t)
&=
-H_t^{-1}
\sqrt m\,
\hat E_s[
\nabla_\theta\ell(\theta_t;D)
]-
\gamma H_t^{-1}
(\sqrt m\,\hat E_t[\Xi_t(D)]
-
\sqrt m\,\hat E_s[\Xi_s(D)])
+
o_p(1)
\\
&\quad=
-H_t^{-1}
\left[
\sqrt m\,
\hat E_s\left[
\nabla_\theta\ell(\theta_t;D)
-
\gamma\Xi_s(D)
\right]
+
\gamma\sqrt m\,
\hat E_t[\Xi_t(D)]
\right]
+
o_p(1),
\end{align*}
which proves \eqref{eqn: sage_linear}.
\end{proof}

\subsection{Properties of $\hat\theta_{EB}$}

\begin{proposition}\label{prop: lamdba_op1} Define the entropy-balancing problem in Lagrangian form,
$$\mathcal L(w,\lambda_1,\lambda_2)
=
\sum_{i=1}^{n_s}w_i\log(n_sw_i)
+
\lambda_1\left(\sum_{i=1}^{n_s}w_i-1\right)
-
\lambda_2^\intercal
\left(
\sum_{i=1}^{n_s}w_iZ(D_i^{(s)})-\hat d_t
\right).$$
Let $\hat\lambda_2\in\mathbb R^r$ be a finite Lagrange multiplier
associated with the reduced balancing constraints, where
$r=\dim Z(D)$ after the coordinate removal in Assumption \eqref{assump: var(z) pd}.
Under the assumptions of Lemma~\ref{lemma: entropy_bal},
$\hat\lambda_2=O_p(m^{-1/2})$ and $\hat \lambda_2$ has the linearization, 
$$\hat\lambda_2
=
\left[
\widehat{\operatorname{Var}}_s\{Z(D)\}
\right]^{-1}
(\hat d_t-\hat d_s)
+
o_p(m^{-1/2}).$$
\end{proposition}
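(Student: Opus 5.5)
The plan is to work with the dual (exponential-tilting) form of the weights. By the first-order condition in the proof of Lemma~\ref{lemma: entropy_bal}, $n_s\hat w_i=e^{u_i}/\bar e$, where
\[
u_i\coloneqq\hat\lambda_2^\intercal\{Z(D_i^{(s)})-\hat d_s\},\qquad \bar e\coloneqq n_s^{-1}\textstyle\sum_j e^{u_j}.
\]
Write $\Delta\coloneqq\hat d_t-\hat d_s$ and $\widehat S_s\coloneqq\widehat{\operatorname{Var}}_s\{Z(D)\}$. Since $\sum_i\hat w_i=1$, the balancing constraint can be rewritten as
\[
\textstyle\sum_i\hat w_i\{Z(D_i^{(s)})-\hat d_s\}=\Delta .
\]
Two facts hold before we start. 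First, $n_s^{-1}\sum_i u_i=0$ by construction. Second, Lemma~\ref{lemma: distr_clt} gives $\Delta=O_p(m^{-1/2})$ and $\widehat S_s\overset p\to\operatorname{Var}_t\{Z(D)\}\succ0$, so $\lambda_{\min}(\widehat S_s)$ is bounded away from zero with probability tending to one. The proof then has three steps: a rate bound for $\hat\lambda_2$, a uniform bound on $\max_i|u_i|$, and a second-order expansion of the constraint.

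\textbf{Step 1: rate of $\hat\lambda_2$ (the main step).} Take the inner product of the constraint with $\hat\lambda_2$. Because $\sum_i u_i=0$, we get
\[
\hat\lambda_2^\intercal\Delta=\sum_i(\hat w_i-n_s^{-1})u_i=\frac{1}{2n_s}\sum_{i,j}(\hat w_i-\hat w_j)(u_i-u_j).
\]
The weights are increasing in $u$, since $\hat w_i\propto e^{u_i}$. The mean value theorem for $\exp$ then gives $(\hat w_i-\hat w_j)(u_i-u_j)\ge\min(\hat w_i,\hat w_j)(u_i-u_j)^2$. The assumption $n_s\hat w_i\ge\kappa$ bounds this below by $(\kappa/n_s)(u_i-u_j)^2$. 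Summing over pairs and using $\frac{1}{2n_s^2}\sum_{i,j}(u_i-u_j)^2=\hat\lambda_2^\intercal\widehat S_s\hat\lambda_2$, we obtain
\[
\kappa\,\lambda_{\min}(\widehat S_s)\,\|\hat\lambda_2\|_2^2\le\kappa\,\hat\lambda_2^\intercal\widehat S_s\hat\lambda_2\le\hat\lambda_2^\intercal\Delta\le\|\hat\lambda_2\|_2\,\|\Delta\|_2 .
\]
Hence $\|\hat\lambda_2\|_2\le\|\Delta\|_2/\{\kappa\lambda_{\min}(\widehat S_s)\}=O_p(m^{-1/2})$. I expect this to be the crux. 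The weight lower bound $\kappa$ is used precisely to obtain this strong-monotonicity inequality without controlling the curvature of the dual along the whole segment from $0$ to $\hat\lambda_2$.

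\textbf{Step 2: uniform bound on $u_i$.} I will show $\max_i\|Z(D_i^{(s)})\|_2=O_p(\log m)$. By a union bound and Markov's inequality conditional on $W$,
\[
P\Bigl(\max_i\|Z_i\|_2>C\log m\Bigr)\le n_s\,E\bigl[E_s e^{\eta\|Z\|_2}\bigr]\,m^{-\eta C}.
\]
Here $E_s e^{\eta\|Z\|_2}=E_t\bigl[\tfrac{dP_s}{dP_t}e^{\eta\|Z\|_2}\bigr]$. The density ratio is $W_j/\bar W$ with $\bar W\ge w_s$, and $E[W_j]=1$, so after averaging over $W$ this is at most a constant times $E_t e^{\eta\|Z\|_2}<\infty$. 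Since $n_s\asymp m$, the bound vanishes once $\eta C>1$. Together with $\hat d_s=O_p(1)$, Step 1 then gives
\[
\max_i|u_i|=O_p\bigl(\log m/\sqrt m\bigr),
\]
which is Eqn~\eqref{eqn: max_lambda_Z}. For the denominator, expand $e^{u}=1+u+r(u)$ with $|r(u)|\le\tfrac12e^{|u|}u^2$. Using $n_s^{-1}\sum_i u_i=0$ and $n_s^{-1}\sum_i u_i^2=\hat\lambda_2^\intercal\widehat S_s\hat\lambda_2=O_p(m^{-1})$, we get $\bar e=1+O_p(m^{-1})$, which is Eqn~\eqref{eqn: denom_lambda}.

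\textbf{Step 3: linearization.} Expand the numerator of the constraint:
\[
n_s^{-1}\textstyle\sum_i e^{u_i}\{Z(D_i^{(s)})-\hat d_s\}=\widehat S_s\hat\lambda_2+\rho_m .
\]
The leading constant term vanishes because $\sum_i\{Z(D_i^{(s)})-\hat d_s\}=0$, and the linear term equals $\widehat S_s\hat\lambda_2$ exactly. The remainder is bounded by $\|\rho_m\|_2\le\max_i|r(u_i)|\cdot n_s^{-1}\sum_i\|Z(D_i^{(s)})-\hat d_s\|_2$. By Step 2 this is $O_p((\log m)^2/m)=o_p(m^{-1/2})$. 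Dividing by $\bar e=1+O_p(m^{-1})$ and using $\widehat S_s\hat\lambda_2=O_p(m^{-1/2})$ gives
\[
\Delta=\widehat S_s\hat\lambda_2+o_p(m^{-1/2}).
\]
Finally, invert $\widehat S_s$, which is invertible with probability tending to one and has $\|\widehat S_s^{-1}\|_{\mathrm{op}}=O_p(1)$. This yields the stated linearization $\hat\lambda_2=\widehat S_s^{-1}(\hat d_t-\hat d_s)+o_p(m^{-1/2})$.
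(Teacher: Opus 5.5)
Your proof is correct and follows essentially the same route as the paper. It has the same three ingredients: a strong-monotonicity bound $\kappa\,\hat\lambda_2^\intercal\widehat S_s\hat\lambda_2\le\hat\lambda_2^\intercal(\hat d_t-\hat d_s)$ from the weight floor $\kappa$, an $O_p(\log m)$ bound on $\max_i\|Z(D_i^{(s)})\|_2$ from the exponential moment, and a second-order expansion of the numerator and denominator of the balancing equation. The differences are minor and all valid: you use pairwise symmetrization instead of the pointwise inequality $(a-1)\log a\ge\kappa(\log a)^2$; a union bound with Markov and $W_j^{(s)}\ge w_s$ instead of the LLN for $\sum_i e^{\rho\|Z(D_i^{(s)})\|_2}$; and a max-times-average remainder of order $(\log m)^2/m$ instead of the paper's third-moment bound of order $m^{-1}$.
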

\begin{proof}
Eqn. \eqref{eqn: weights_eb} from Appendix \ref{app: entropy_bal} gives 
\begin{equation}\label{eqn: weights_eb_ns}
    n_s\hat w_i
=
\frac{
\exp\left(
\hat\lambda_2^\intercal\{Z(D_i^{(s)})-\hat d_s\}
\right)
}{\frac{1}{n_s}
\sum_{j=1}^{n_s}
\exp\left(
\hat\lambda_2^\intercal\{Z(D_j^{(s)})-\hat d_s\}
\right)
},
\end{equation}
   Taking logs on both sides gives
\begin{align}\label{eqn: log_weights0}
    \log(n_s\hat w_i)
&=
\hat\lambda_2^\intercal Z(D_i^{(s)})
-
\log\left\{
\frac1{n_s}\sum_{j=1}^{n_s}
\exp\left(
\hat\lambda_2^\intercal Z(D_j^{(s)})
\right)
\right\}
\end{align}
Multiplying each $i$th element by $\hat w_i-n_s^{-1}$
and summing over $i$ and using that $\sum_i(\hat w_i-n_s^{-1})=0$,
\begin{align}\label{eqn: log_weights}
\begin{split}
\frac1{n_s}\sum_{i=1}^{n_s}
(n_s\hat w_i-1)\log(n_s\hat w_i)
&=
\sum_{i=1}^{n_s}
(\hat w_i-n_s^{-1})
\hat\lambda_2^\intercal Z(D_i^{(s)})
\\
&=
\hat\lambda_2^\intercal
\left\{
\sum_{i=1}^{n_s}\hat w_i Z(D_i^{(s)})
-
\frac1{n_s}\sum_{i=1}^{n_s}Z(D_i^{(s)})
\right\},
\end{split}
\end{align}
where the second term in Eqn \eqref{eqn: log_weights0} after multiplying by $\sum_i(\hat w_i-n_s^{-1})$ is 0 since it is independent of $i$.
The entropy-balancing weights are chosen to satisfy $\sum_{i=1}^{n_s}\hat w_i Z(D_i^{(s)})=\hat d_t$, while $n_s^{-1}\sum_{i=1}^{n_s}Z(D_i^{(s)})=\hat d_s$
by definition. Substituting into \eqref{eqn: log_weights} gives
$$
\frac1{n_s}\sum_{i=1}^{n_s}
(n_s\hat w_i-1)\log(n_s\hat w_i)
=
\hat\lambda_2^\intercal(\hat d_t-\hat d_s).
$$
For $a\ge\kappa$, the mean value theorem for the exponential
function gives $(a-1)\log a\ge\kappa(\log a)^2$. By assumption, $\min_i n_s\hat w_i\ge\kappa$ with probability tending to one as $m\to\infty$. On this event,
\begin{align*}
\hat\lambda_2^\intercal(\hat d_t-\hat d_s)
&\geq
\frac{\kappa}{n_s}\sum_{i=1}^{n_s}
\{\log(n_s\hat w_i)\}^2
\\
&\geq
\frac{\kappa}{n_s}\sum_{i=1}^{n_s}
\left\{
\log(n_s\hat w_i)
-\frac1{n_s}\sum_{j=1}^{n_s}\log(n_s\hat w_j)
\right\}^2
\\
&=
\frac{\kappa}{n_s}\sum_{i=1}^{n_s}
\left[
\hat\lambda_2^\intercal
\{Z(D_i^{(s)})-\hat d_s\}
\right]^2
\qquad \text{(Eqn.~\eqref{eqn: log_weights0})}
\\
&=
\kappa\,\hat\lambda_2^\intercal
\left[
\frac1{n_s}\sum_{i=1}^{n_s}
\{Z(D_i^{(s)})-\hat d_s\}
\{Z(D_i^{(s)})-\hat d_s\}^\intercal
\right]\hat\lambda_2
\\
&=
\kappa\,\hat\lambda_2^\intercal
\widehat{\operatorname{Var}}_s\{Z(D)\}
\hat\lambda_2
\\
&=
\kappa\sum_j\hat v_j
(\hat u_j^\intercal\hat\lambda_2)^2
\\
&\geq
\kappa\,\lambda_{\min}\left(
\widehat{\operatorname{Var}}_s\{Z(D)\}
\right)
\sum_j(\hat u_j^\intercal\hat\lambda_2)^2
\\
&=
\kappa\,\lambda_{\min}\left(
\widehat{\operatorname{Var}}_s\{Z(D)\}
\right)\|\hat\lambda_2\|_2^2,
\end{align*}
where $\hat v_j$ and $\hat u_j$ denote the eigenvalues and
corresponding orthonormal eigenvectors of
$\widehat{\operatorname{Var}}_s\{Z(D)\}$, so that
$\widehat{\operatorname{Var}}_s\{Z(D)\}=\sum_j\hat v_j\hat u_j\hat u_j^\intercal.$ 

Applying Cauchy-Schwarz to the LHS gives $\hat\lambda_2^\intercal(\hat d_t-\hat d_s) \leq \|\hat\lambda_2\|_2\,\|\hat d_t-\hat d_s\|_2.$
Combining these inequalities and dividing by
$\|\hat\lambda_2\|_2$ when $\hat\lambda_2\neq0$ yields
$$
\kappa\lambda_{\min}\left(
\widehat{\operatorname{Var}}_s\{Z(D)\}
\right)\|\hat\lambda_2\|_2
\leq
\|\hat d_t-\hat d_s\|_2.
$$
Note that this inequality also holds when $\hat\lambda_2=0$. 
By Lemma~\ref{lemma: distr_clt}, for every $\varepsilon>0$,
$$
P\left(
\left|
\lambda_{\min}\left(
\widehat{\operatorname{Var}}_s\{Z(D)\}
\right)
-
\lambda_{\min}\left(
\operatorname{Var}_t\{Z(D)\}
\right)
\right|<\varepsilon
\right)\rightarrow1.
$$
Taking
$\varepsilon=\frac12\lambda_{\min}
(\operatorname{Var}_t\{Z(D)\})>0$,
we obtain, with probability tending to one,
$$
\lambda_{\min}\left(
\widehat{\operatorname{Var}}_s\{Z(D)\}
\right)
>
\lambda_{\min}\left(
\operatorname{Var}_t\{Z(D)\}
\right)-\varepsilon
=
\frac12\lambda_{\min}\left(
\operatorname{Var}_t\{Z(D)\}
\right).
$$
Therefore, we have 
\begin{align}\label{eqn: lambda_rate}
    \|\hat\lambda_2\|_2 \leq
\frac{\|\hat d_t-\hat d_s\|_2}{\kappa\lambda_{\min}\left(
\widehat{\operatorname{Var}}_s\{Z(D)\}
\right)} \leq \frac{2\|\hat d_t-\hat d_s\|_2}{
\kappa\,\lambda_{\min}\left(
\operatorname{Var}_t\{Z(D)\}
\right)}=O_p(1)O_p(m^{-1/2}),
\end{align}
where we use that $\hat{d}_t-\hat d_s=O_p(m^{-1/2})$. by Lemma \ref{lemma: distr_clt}, so  
\begin{equation}\label{eqn: lamdba_op1}
    \hat\lambda_2 = O_p(m^{-1/2})
\end{equation}

We now linearize $\hat \lambda_2$. Fix $0<\rho<\eta/2$. The exponential moment assumption gives $E_t\left[
\left(\exp\{\rho\|Z(D)\|_2\}\right)^2
\right] = E_t\left[\exp\{2\rho\|Z(D)\|_2\}\right] \leq E_t\left[\exp\{\eta\|Z(D)\|_2\}\right] <\infty.$ Thus, $\exp\{\rho\|Z(D)\|_2\}$ is square-integrable, and by Lemma~\ref{lemma: distr_clt},
$
\frac1{n_s}\sum_{i=1}^{n_s}
\exp\{\rho\|Z(D_i^{(s)})\|_2\}
\overset p\rightarrow
E_t\left[\exp\{\rho\|Z(D)\|_2\}\right].$
Since $n_s=O(m)$,
$$
\exp\left\{\rho\max_i\|Z(D_i^{(s)})\|_2\right\}
\le
\sum_{i=1}^{n_s}
\exp\{\rho\|Z(D_i^{(s)})\|_2\}
=
O_p(m).
$$
Taking logs gives
$\max_i\|Z(D_i^{(s)})\|_2=O_p(\log m)$. Therefore,
\begin{align}\label{eqn: max_lambda_Z}
\begin{split}
\max_i
\left|
\hat\lambda_2^\intercal
\{Z(D_i^{(s)})-\hat d_s\}
\right|
&\leq
\|\hat\lambda_2\|_2
\left\{
\max_i\|Z(D_i^{(s)})\|_2+\|\hat d_s\|_2
\right\}
\\
&=O_p(m^{-1/2})\left(O_p(\log m) + O_p(1)\right) \qquad \text{(Eqn \eqref{eqn: lamdba_op1})}\\
&=O_p\left(\frac{\log m}{\sqrt m}\right)\\
&=o_p(1).
\end{split}
\end{align}

The weights satisfy $\sum_i\hat w_iZ(D_i^{(s)})=\hat d_t$ and $\sum_i\hat w_i=1$. Therefore,
\[
\sum_{i=1}^{n_s}
\hat w_i\{Z(D_i^{(s)})-\hat d_s\}
=
\sum_{i=1}^{n_s}\hat w_iZ(D_i^{(s)})
-
\hat d_s\sum_{i=1}^{n_s}\hat w_i
=
\hat d_t-\hat d_s.
\]
Substituting the weight identity of Eqn.~\eqref{eqn: weights_eb} into the above gives
\begin{equation}\label{eqn: eb_summary_equation}
\frac{
\frac1{n_s}\sum_{i=1}^{n_s}
\exp\left(
\hat\lambda_2^\intercal\{Z(D_i^{(s)})-\hat d_s\}
\right)
\{Z(D_i^{(s)})-\hat d_s\}
}{
\frac1{n_s}\sum_{i=1}^{n_s}
\exp\left(
\hat\lambda_2^\intercal\{Z(D_i^{(s)})-\hat d_s\}
\right)
}
=
\hat d_t-\hat d_s.
\end{equation}

We expand the numerator and denominator separately. Since $\max_i | \hat\lambda_2^\intercal \{Z(D_i^{(s)})-\hat d_s\}|=o_p(1)$ with probability tending to one, the exponents have absolute value at most one. On this event, applying $|e^x-1-x|\le\frac12e^{|x|}x^2\le\frac e2x^2$ with $x=\hat\lambda_2^\intercal\{Z(D_i^{(s)})-\hat d_s\}$ gives, for every $i$,
\begin{align}\label{eqn: taylor_exp}
\begin{split}
\left|
\exp\left(
\hat\lambda_2^\intercal\{Z(D_i^{(s)})-\hat d_s\}
\right)
-1
-\hat\lambda_2^\intercal\{Z(D_i^{(s)})-\hat d_s\}
\right|
&\leq
\frac e2
\left[
\hat\lambda_2^\intercal\{Z(D_i^{(s)})-\hat d_s\}
\right]^2
\\
&\leq
\frac e2
\|\hat\lambda_2\|_2^2
\|Z(D_i^{(s)})-\hat d_s\|_2^2,
\end{split}
\end{align}
where we get the last inequality by Cauchy-Schwarz.

Evaluating the denominator,
\begin{align*}
\frac1{n_s}\sum_{i=1}^{n_s}
\exp\left(
\hat\lambda_2^\intercal\{Z(D_i^{(s)})-\hat d_s\}
\right)
&=
\frac1{n_s}\sum_{i=1}^{n_s}
\left[
1+\hat\lambda_2^\intercal\{Z(D_i^{(s)})-\hat d_s\}
+r_{i,m}
\right]
\\
&=
1+
\hat\lambda_2^\intercal
\underbrace{
\left[
\frac1{n_s}\sum_{i=1}^{n_s}
\{Z(D_i^{(s)})-\hat d_s\}
\right]
}_{=\,0}
+
\frac1{n_s}\sum_{i=1}^{n_s}r_{i,m}
\\
&=1+\frac1{n_s}\sum_{i=1}^{n_s}r_{i,m},
\end{align*}
where $r_{i,m}=\exp\left(\hat\lambda_2^\intercal\{Z(D_i^{(s)})-\hat d_s\}
\right)-1-\hat\lambda_2^\intercal\{Z(D_i^{(s)})-\hat d_s\}.$  Furthermore, by assumption, $\|Z(D)\|_2^2$ is square-integrable so applying Lemma \ref{lemma: distr_clt} with the inequality $\|a-b\|_2^2 \le 2\|a\|_2^2 + 2\|b\|_2^2$ gives
\begin{equation}\label{eqn: z-ds}
   \frac1{n_s}\sum_{i=1}^{n_s}
\|Z(D_i^{(s)})-\hat d_s\|_2^2
\le
2\left\{
\frac1{n_s}\sum_{i=1}^{n_s}\|Z(D_i^{(s)})\|_2^2
+
\|\hat d_s\|_2^2
\right\}
=
O_p(1). 
\end{equation}

Applying the result of \eqref{eqn: taylor_exp}, we have that 
\begin{align*}
\left|\frac1{n_s}\sum_{i=1}^{n_s}r_{i,m}
\right| &\leq
\frac1{n_s}\sum_{i=1}^{n_s}|r_{i,m}|\\
&\leq
\frac{e}{2n_s}\sum_{i=1}^{n_s}
\left[
\hat\lambda_2^\intercal
\{Z(D_i^{(s)})-\hat d_s\}
\right]^2
\\
&\leq
\frac e2\|\hat\lambda_2\|_2^2
\frac1{n_s}\sum_{i=1}^{n_s}
\|Z(D_i^{(s)})-\hat d_s\|_2^2 \qquad \text{(Cauchy-Schwarz)}
\\
&=
O_p(m^{-1})O_p(1) \qquad \text{(Eqns \eqref{eqn: lamdba_op1} and \eqref{eqn: z-ds})}\\
&=
O_p(m^{-1}).
\end{align*}
Therefore, the denominator is
\begin{equation}\label{eqn: denom_lambda}
    \frac1{n_s}\sum_{i=1}^{n_s}
\exp\left(
\hat\lambda_2^\intercal\{Z(D_i^{(s)})-\hat d_s\}
\right)
=
1+O_p(m^{-1}).
\end{equation}

For the numerator, we have
\begin{align*}
&\frac1{n_s}\sum_{i=1}^{n_s}
\exp\left(
\hat\lambda_2^\intercal\{Z(D_i^{(s)})-\hat d_s\}
\right)
\{Z(D_i^{(s)})-\hat d_s\}\\
&=
\underbrace{\frac1{n_s}\sum_{i=1}^{n_s}
\{Z(D_i^{(s)})-\hat d_s\}}_{=0}
+
\left[
\frac1{n_s}\sum_{i=1}^{n_s}
\{Z(D_i^{(s)})-\hat d_s\}
\{Z(D_i^{(s)})-\hat d_s\}^\intercal
\right]\hat\lambda_2
+\frac1{n_s}\sum_{i=1}^{n_s}\tilde{r}_{i,m},
\end{align*}
where $\tilde{r}_{i,m}=r_{i,m}\{Z(D_i^{(s)})-\hat d_s\}$. Applying the above argument of the remainder term, $r_{i,m}$, we get
\begin{align*}
\left\|\frac1{n_s}\sum_{i=1}^{n_s}\tilde r_{i,m}
\right\|_2 \le \frac e2\|\hat\lambda_2\|_2^2 \frac1{n_s}\sum_{i=1}^{n_s}
\|Z(D_i^{(s)})-\hat d_s\|_2^3=
O_p(m^{-1}).
\end{align*}

Therefore, we get that the numerator is
\begin{equation}\label{eqn: num_lambda}
    \frac1{n_s}\sum_{i=1}^{n_s}
\exp\left(
\hat\lambda_2^\intercal\{Z(D_i^{(s)})-\hat d_s\}
\right)
\{Z(D_i^{(s)})-\hat d_s\}
=
\widehat{\operatorname{Var}}_s\{Z(D)\}\hat\lambda_2
+
O_p(m^{-1}).
\end{equation}

Substituting Eqns \eqref{eqn: denom_lambda} and \eqref{eqn: num_lambda} into Eqn.~\eqref{eqn: eb_summary_equation} gives
\begin{align*}
\hat d_t-\hat d_s =\frac{\widehat{\operatorname{Var}}_s\{Z(D)\}\hat\lambda_2
+O_p(m^{-1})}{1+O_p(m^{-1})} =
\widehat{\operatorname{Var}}_s\{Z(D)\}\hat\lambda_2
+O_p(m^{-1}),
\end{align*}
where the last equality uses that $\hat\lambda_2=O_p(m^{-1/2})$. Rearranging gives us 
\begin{equation}\label{eqn: lambda_eb}
\hat\lambda_2
=
\left[
\widehat{\operatorname{Var}}_s\{Z(D)\}
\right]^{-1}
(\hat d_t-\hat d_s)
+
o_p(m^{-1/2}).
\end{equation}
\end{proof}

\begin{proposition}[Consistency of $\hat\theta_{EB}$]\label{prop: eb_consist}
Suppose that $\max_i|n_s\hat w_i-1|=o_p(1)$ and the assumptions of Lemma \ref{lemma: entropy_bal},
$$\hat{\theta}_{EB} \overset{p}{\rightarrow} \theta_t$$
\end{proposition}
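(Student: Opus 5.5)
Since $\hat\theta_{\mathrm{EB}}$ minimizes the weighted empirical risk $\hat R_w(\theta)\coloneqq\sum_{i=1}^{n_s}\hat w_i\ell(\theta;D_i^{(s)})$ over the compact set $\Theta$, I will use the standard argmin consistency argument, as in the first part of Proposition~\ref{prop: theta_old_consist}. It suffices to show
\[
\sup_{\theta\in\Theta}\bigl|\hat R_w(\theta)-R_t(\theta)\bigr|\overset{p}{\rightarrow}0 .
\]
Then, because $\theta_t$ is the unique minimizer of the continuous function $R_t$ on $\Theta$ (Assumption 2(a); continuity follows from the Lipschitz bound with $L_1\in L^2(P_t)$), the usual well-separated-minimum argument gives $\hat\theta_{\mathrm{EB}}\overset{p}{\rightarrow}\theta_t$. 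Concretely, for any $\varepsilon>0$ let $\eta=\inf_{\|\theta-\theta_t\|\ge\varepsilon}R_t(\theta)-R_t(\theta_t)>0$. On the event that the sup-deviation is below $\eta/2$, any minimizer of $\hat R_w$ lies within $\varepsilon$ of $\theta_t$.

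To get uniform convergence, I split
\[
\hat R_w(\theta)-R_t(\theta)
=\frac1{n_s}\sum_{i=1}^{n_s}(n_s\hat w_i-1)\,\ell(\theta;D_i^{(s)})
+\Bigl\{\hat E_s[\ell(\theta;D)]-R_t(\theta)\Bigr\}.
\]
The second term tends to zero uniformly over $\Theta$; this was already established in the proof of Proposition~\ref{prop: theta_old_consist}. That argument combined pointwise convergence from Lemma~\ref{lemma: distr_clt}, the Lipschitz envelope $L_1$ with $\hat E_s[L_1]\overset p\to E_t[L_1]$, and a finite covering of compact $\Theta$. For the first term, I use the envelope in Assumption~\eqref{assump: lip_loss}:
\[
\sup_{\theta\in\Theta}\Bigl|\frac1{n_s}\sum_i(n_s\hat w_i-1)\ell(\theta;D_i^{(s)})\Bigr|
\le\max_i|n_s\hat w_i-1|\cdot\hat E_s[L_0(D)].
\]
Since $L_0\in L^2(P_t)$, Lemma~\ref{lemma: distr_clt} gives $\hat E_s[L_0]\overset p\to E_t[L_0]<\infty$, so $\hat E_s[L_0]=O_p(1)$. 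The hypothesis $\max_i|n_s\hat w_i-1|=o_p(1)$ then makes the product $o_p(1)$.

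The main obstacle is bookkeeping rather than substance. The weights are only defined, with finite multipliers, on an event whose probability tends to one, so I will condition on that event throughout. I also need the uniform convergence of the unweighted source risk to be valid under the random-shift model, where the source sample is i.i.d.\ only conditionally on $W$. This is why I rely on Lemma~\ref{lemma: distr_clt} for the pointwise laws of large numbers and on the deterministic Lipschitz covering bound for uniformity, rather than on a classical Glivenko–Cantelli argument. Measurability of the argmin is taken as given; if $\hat\theta_{\mathrm{EB}}$ is only a near-minimizer, the same argument goes through.
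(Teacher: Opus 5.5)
Your proof is correct and matches the paper's argument: the same decomposition into the weight-perturbation term, bounded by $\max_i|n_s\hat w_i-1|$ times the average of the $L_0$ envelope, plus the uniform convergence of the unweighted source risk from the finite-net argument. You then conclude with the well-separated-minimum argument, which the paper gets by invoking Theorem 5.7 of van der Vaart.
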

\begin{proof}
We first write the difference in the weighted loss under entropy-balancing weights and the source average loss as
\[
\sum_{i=1}^{n_s}\hat w_i\ell(\theta;D_i^{(s)})
-
\frac1{n_s}\sum_{i=1}^{n_s}\ell(\theta;D_i^{(s)})
=
\frac1{n_s}\sum_{i=1}^{n_s}
(n_s\hat w_i-1)\ell(\theta;D_i^{(s)}).
\]
Taking absolute values and then the supremum over
$\Theta$ gives
\begin{align*}
\sup_{\theta\in\Theta}
\left|
\sum_{i=1}^{n_s}\hat w_i\ell(\theta;D_i^{(s)})
-
\frac1{n_s}\sum_{i=1}^{n_s}\ell(\theta;D_i^{(s)})
\right|
&\leq
\max_i|n_s\hat w_i-1|
\frac1{n_s}\sum_{i=1}^{n_s}
\sup_{\theta\in\Theta}|\ell(\theta;D_i^{(s)})|
\\
&=
o_p(1)O_p(1)\\
&=o_p(1),
\end{align*}
where we apply the assumption that $\max_i|n_s\hat w_i-1|=o_p(1)$ and Assumption \eqref{assump: lip_loss} for the second equality.

By compactness of $\Theta$, the envelope and Lipschitz
conditions in Assumption~\eqref{assump: lip_loss}, and
Lemma~\ref{lemma: distr_clt}, a finite-net argument gives
\[
\sup_{\theta\in\Theta}
\left|
\frac1{n_s}\sum_{i=1}^{n_s}\ell(\theta;D_i^{(s)})
-
E_t[\ell(\theta;D)]
\right|
=o_p(1).
\]
Combining this with the preceding bound yields
\[
\sup_{\theta\in\Theta}
\left|
\sum_{i=1}^{n_s}\hat w_i\ell(\theta;D_i^{(s)})
-
E_t[\ell(\theta;D)]
\right|
=o_p(1).
\]

Since $\theta\mapsto E_t[\ell(\theta;D)]$ is continuous
on the compact parameter space $\Theta$ and has the
unique minimizer $\theta_t$, for every $\varepsilon>0$, $
\inf_{\substack{\theta\in\Theta\\ \|\theta-\theta_t\|_2\ge\varepsilon}} E_t[\ell(\theta;D)] > E_t[\ell(\theta_t;D)].
$
Together with the uniform convergence,
this verifies the conditions of
\citet[Theorem~5.7]{vanderVaart1998asymptotic}. Therefore, any global minimizer
$\hat\theta_{\mathrm{EB}}$ of
$\sum_{i=1}^{n_s}\hat w_i\ell(\theta;D_i^{(s)})$
satisfies
\begin{equation}
\hat\theta_{\mathrm{EB}}
\overset p\rightarrow\theta_t.
\end{equation}
\end{proof}

\section{Additional empirical studies} \label{app: empirical}
\subsection{Full NHANES experiment} \label{app: nhanes}

\begin{figure}[H]
    \centering
    \includegraphics[width=\linewidth]{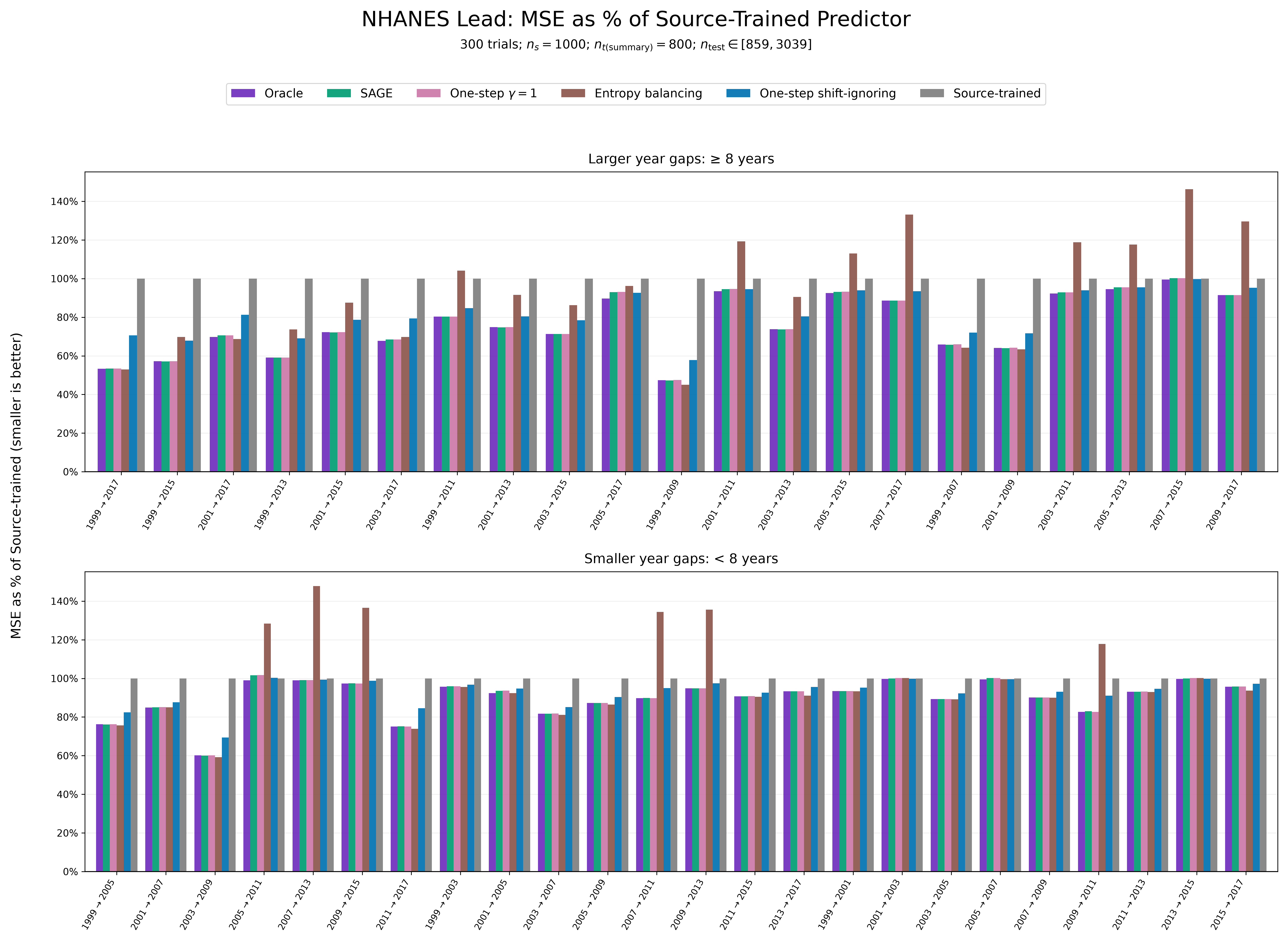}
        \captionsetup{width=\linewidth}
    \caption{
NHANES lead prediction results. Bars show the percentage target test MSE relative to the source-trained predictor, averaged over 300 trials. SAGE improves prediction for most year pairs, with substantial gains for larger gaps (top panel), and generally outperforms the one-step shift-ignoring estimator. Its estimated step-sizes are close to one, yielding performance similar to the full-step  update (SAGE $\gamma=1$) and approaches the oracle benchmark.
}
    \label{fig: nhanes_lead_old}
\end{figure}

\subsection{Diagnostic Plots} \label{app: qq}

\begin{figure}[H]
        \centering
        \includegraphics[width=\linewidth]{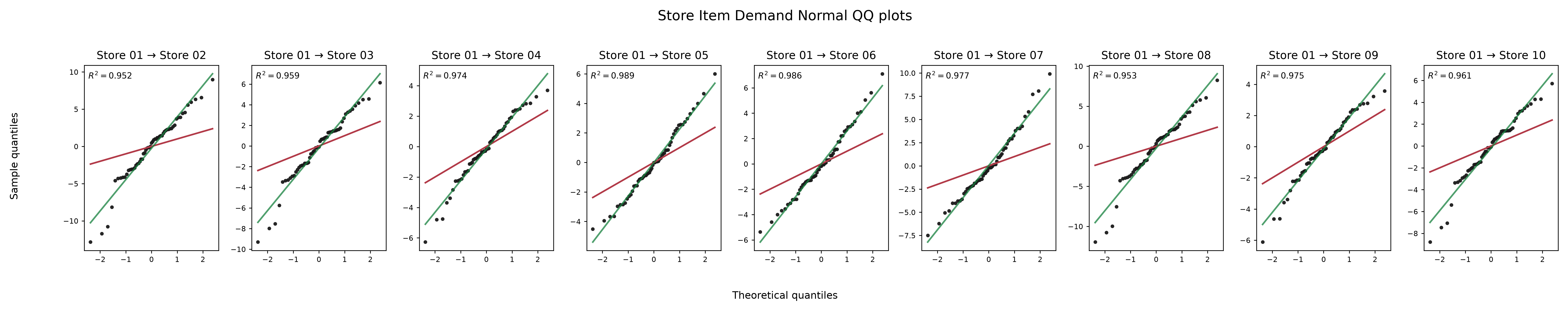}
        \caption{Normal QQ plots for the Store Item Demand dataset with source$\rightarrow$target pairs.  Each point corresponds to the standardized mean difference in Eqn~\eqref{eqn: qq_stat} for one whitened coordinate. The red line is the standard Gaussian reference line, corresponding to i.i.d.\ sampling from the same distribution. The green line represents a Gaussian pattern with inflated variance. }
        \label{fig: qqplot_store}
    \end{figure}
    
\begin{figure}[H]
        \centering
    \includegraphics[width=\linewidth]{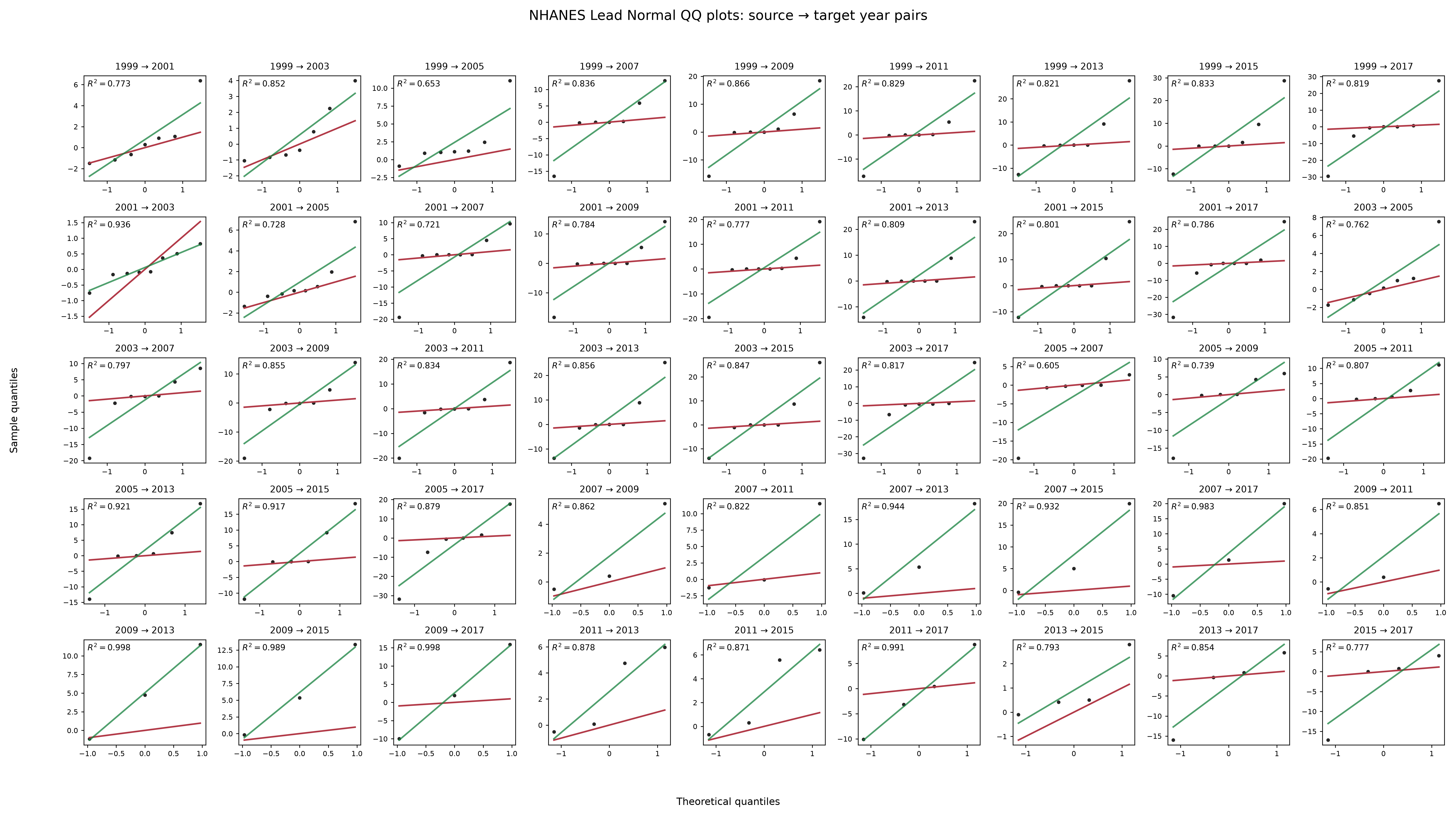}
  \captionsetup{width=\linewidth}
    \caption{Normal-QQ plots for NHANES lead data with year pairs source$\rightarrow$target.  Each point corresponds to the standardized mean difference in Eqn~\eqref{eqn: qq_stat} for one whitened coordinate. The red line is the standard Gaussian reference line, corresponding to i.i.d.\ sampling from the same distribution. The green line represents a Gaussian pattern with inflated variance. }
    \label{fig: qqplot_nhanes_lead}
\end{figure}

\begin{figure}[H]
        \centering
        \includegraphics[width=\linewidth]{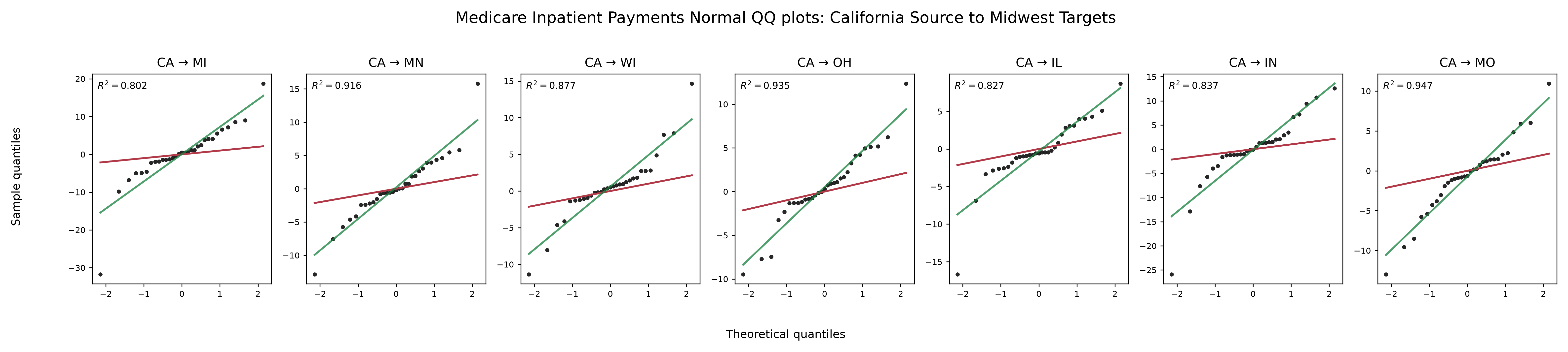}
        \caption{Normal-QQ plots for Medicare inpatient payment data with source$\rightarrow$target pairs.  Each point corresponds to the standardized mean difference in Eqn~\eqref{eqn: qq_stat} for one whitened coordinate. The red line is the standard Gaussian reference line, corresponding to i.i.d.\ sampling from the same distribution. The green line represents a Gaussian pattern with inflated variance. }
        \label{fig: qqplot_medicare}
    \end{figure}

\section{SAGE extension algorithms}

\begin{algorithm}[H]
\caption{SAGE with subgroup sample variances}
\label{alg: sage var}
\begin{algorithmic}[1]

\Statex \emph{Replace lines 6 and 8 of
Algorithm~\ref{alg: grad_linear} with the steps below. After augmentation, apply the coordinate reduction of
Assumption~\ref{assump: var(z) pd} using the same retained coordinates for source and target, and fix coefficients
of omitted features at zero in the minimization below.}
\Require Target variances
$\{\widehat{\mathrm{Var}}_t[X\mid A_g=1],
\widehat{\mathrm{Var}}_t[y\mid A_g=1]\}_{g:n_t(g)>1}$
\Statex Use variances with denominator $n_t(g)$:
for $n_t(g)>1$, multiply variances reported with denominator
$n_t(g)-1$ by $\{n_t(g)-1\}/n_t(g)$;
for $n_t(g)=1$, set the variances to zero.

\Require Elementwise target second moments for nonempty subgroups:
\Statex \hspace{0.6cm}
$\hat{E}_t[X^{\circ 2}\mid A_g=1]
= \widehat{\mathrm{Var}}_t[X\mid A_g=1]
+ \hat{E}_t[X\mid A_g=1]^{\circ 2}$
\Statex \hspace{0.6cm}
$\hat{E}_t[y^2\mid A_g=1]
= \widehat{\mathrm{Var}}_t[y\mid A_g=1]
+ \hat{E}_t[y\mid A_g=1]^2$
\Require Elementwise source second moments for nonempty subgroups:
\Statex \hspace{0.6cm}
$\hat{E}_s[X^{\circ 2}\mid A_g=1]
= \frac{\sum_{i=1}^{n_s}A_{ig}^{(s)}(X_i^{(s)})^{\circ 2}}{\sum_{i=1}^{n_s}A_{ig}^{(s)}}$,
\quad
$\hat{E}_s[y^2\mid A_g=1]
= \frac{\sum_{i=1}^{n_s}A_{ig}^{(s)}(y_i^{(s)})^2}{\sum_{i=1}^{n_s}A_{ig}^{(s)}}$
\Require Additional second moment summary differences:
\Statex \hspace{0.6cm}
$\Delta_{X^2}(g)
\gets
\hat\pi_{t,g}\hat E_t[X^{\circ 2}\mid A_g=1]
-
\hat\pi_{s,g}\hat E_s[X^{\circ 2}\mid A_g=1]$
\Statex \hspace{0.6cm}
$\Delta_{y^2}(g)
\gets
\hat\pi_{t,g}\hat E_t[y^2\mid A_g=1]
-
\hat\pi_{s,g}\hat E_s[y^2\mid A_g=1]$

\setcounter{ALG@line}{5}

\State Fit on source using all reported subgroups:
\Statex 
$(\hat\alpha^{(s)},
\{\hat\beta_{g,0}^{(s)},\hat\beta_{g,X}^{(s)},\hat\beta_{g,y}^{(s)},\hat\beta_{g,X^2}^{(s)},\hat\beta_{g,y^2}^{(s)}\}_{g=1}^K)
\in $
\Statex \hspace{0.6cm}
$
\arg\min_{\alpha,\{\beta_g\}_{g=1}^K}
\sum_{i=1}^{n_s}
\left\|
u_i
-
\alpha
-
\sum_{g=1}^K
A_{ig}^{(s)}
\left(
\beta_{g,0}
+
\beta_{g,X}^\intercal X_i^{(s)}
+
\beta_{g,y}y_i^{(s)}
+
\beta_{g,X^2}^\intercal (X^{\circ 2})_i^{(s)}
+
\beta_{g,y^2}(y^2)_i^{(s)}
\right)
\right\|_2^2$

\For{$g=1,\dots,K$}
    \State Estimate subgroup contribution using target summaries:
    \Statex \hspace{0.6cm}
    $\widehat{\Psi}_t(g)
    \gets
    \hat\beta_{g,0}^{(s)}\Delta_0(g)
    +
    \bigg(\hat\beta_{g,X}^{(s)}\bigg)^\intercal \Delta_X(g)
    +
    \hat\beta_{g,y}^{(s)}\Delta_y(g)
    +
    \bigg(\hat\beta_{g,{X^2}}^{(s)}\bigg)^\intercal \Delta_{X^2}(g)
    +
    \hat\beta_{g,{y^2}}^{(s)}\Delta_{y^2}(g)$
\EndFor

\end{algorithmic}
\end{algorithm}

\begin{algorithm}[H]
\caption{SAGE-NN: Summary adapted final layer neural network adaptation}
\label{alg: sage_nn}
\begin{algorithmic}[1]

\Require Source observations
$\{(X_i^{(s)},y_i^{(s)},A_{i1}^{(s)},\ldots,A_{iK}^{(s)})\}_{i=1}^{n_s}$
\Require Target subgroup counts and moments
$\{n_t(g),\widehat E_t[X\mid A_g=1],
\widehat E_t[y\mid A_g=1]\}_{g=1}^K$
\Require Target summary sample size $n_t$
\Require MLP encoder $\phi_\eta(X)$ with parameters $\eta$ and a linear output layer
\Require step-size $\hat\gamma$ estimated per Algorithm \ref{alg: sage_gamma}

\State Define the source MLP predictor before hidden feature standardization:
\Statex \hspace{0.6cm}
$f_{\eta,\omega}^{\mathrm{raw}}(X)
\gets
\omega^\intercal
\begin{pmatrix}
1^\intercal &
\phi_\eta(X)^\intercal
\end{pmatrix}^\intercal$

\State Fit the complete MLP using source observations:
\Statex \hspace{0.6cm}
$(\widehat\eta_s,\widehat\omega_s)
\gets
\arg\min_{\eta,\omega}
\frac{1}{n_s}\sum_{i=1}^{n_s}
\ell\left(
f_{\eta,\omega}^{\mathrm{raw}}(X_i^{(s)}),
y_i^{(s)}
\right)$

\State Freeze the source-trained encoder parameters $\widehat\eta_s$.

\State Compute the source mean and coordinatewise standard deviations of the
final hidden-layer representation:
\Statex \hspace{0.6cm}
$\bar\phi_s
\gets
\frac{1}{n_s}\sum_{i=1}^{n_s}
\phi_{\widehat\eta_s}(X_i^{(s)}),
\qquad
D_s
\gets
\operatorname{diag}
(\widehat\sigma_{s,1},\ldots,\widehat\sigma_{s,q})$

\State Define the standardized hidden representation and add an intercept:
\Statex \hspace{0.6cm}
$z_{\widehat\eta_s}(X)
\gets
D_s^{-1}
\left\{
\phi_{\widehat\eta_s}(X)-\bar\phi_s
\right\},
\qquad
\widetilde z_{\widehat\eta_s}(X)
\gets
\begin{pmatrix}
1^\intercal &
z_{\widehat\eta_s}(X)^\intercal
\end{pmatrix}^\intercal$

\State Refit the final linear layer by source least squares:
\Statex \hspace{0.6cm}
$\widehat\theta_{\mathrm{init}}
\gets
\arg\min_{\theta}
\frac{1}{2n_s}\sum_{i=1}^{n_s}
\left[
y_i^{(s)}
-
\theta^\intercal
\widetilde z_{\widehat\eta_s}(X_i^{(s)})
\right]^2$

\State Compute the source final layer gradients:
\Statex \hspace{0.6cm}
$u_i
\gets
\left[
(\widehat\theta_{\mathrm{init}})^\intercal
\widetilde z_{\widehat\eta_s}(X_i^{(s)})
-y_i^{(s)}
\right]
\widetilde z_{\widehat\eta_s}(X_i^{(s)}),
\quad i=1,\ldots,n_s$

\State Compute the source final layer Hessian:
\Statex \hspace{0.6cm}
$\widehat H_s
\gets
\frac{1}{n_s}\sum_{i=1}^{n_s}
\widetilde z_{\widehat\eta_s}(X_i^{(s)})
\widetilde z_{\widehat\eta_s}(X_i^{(s)})^\intercal$

\State Using $\{u_i\}_{i=1}^{n_s}$ as the multivariate responses, apply the
summary gradient reconstruction steps of Algorithm~\ref{alg: grad_linear}
with the source variables
$\{A_{ig}^{(s)},A_{ig}^{(s)}X_i^{(s)},A_{ig}^{(s)}y_i^{(s)}\}$
and the corresponding target subgroup summaries.

\State Let $\widehat\Psi_t$ denote the resulting summary adapted estimate of
the target final layer gradient.

\State Update only the final linear layer:
\Statex \hspace{0.6cm}
$\widehat\theta_{\mathrm{SAGE\text{-}NN}}
\gets
\widehat\theta_{\mathrm{init}}
-
\hat\gamma \widehat H_s^{-1}\widehat\Psi_t$

\State \Return
$f_{\widehat\eta_s,\widehat\theta_{\mathrm{SAGE\text{-}NN}}}(X)
=
\widehat\theta_{\mathrm{SAGE\text{-}NN}}^\intercal
\widetilde z_{\widehat\eta_s}(X)$

\end{algorithmic}
\end{algorithm}

\bibliographystyle{apacite}
\bibliography{Bibliography}

\end{document}